\documentclass[12pt]{article}
\usepackage{amsmath}
\usepackage{graphicx,psfrag,epsf}
\usepackage{enumerate}
\usepackage[sort]{natbib}
\usepackage{url} 
\usepackage{bm}
\usepackage{graphicx}

\usepackage{float}
\usepackage{subcaption}
\usepackage{amsthm}

\usepackage{color}
\usepackage{amssymb}
\usepackage{hyperref}
\usepackage{caption}
\usepackage[T1]{fontenc}
\usepackage{babel}
\usepackage[section]{placeins}
\usepackage{graphicx,multicol, caption, xcolor}
\usepackage{lipsum} 

\newcommand{\blind}{1}

\newtheorem{theorem}{Theorem}
\newtheorem{lemma}{Lemma}
\newtheorem{corollary}{Corollary}
\newtheorem{proposition}{Proposition}
\newcommand{\red}[1]{\textcolor{black}{#1}}

\newcommand{\cc}{{C}}
\newcommand{\cct}{\tilde{C}}
\newcommand{\loss}{\mathcal{L}} 
\newcommand{\R}{\mathcal{R}} 
\newcommand{\bbeta}{\bm{\beta}}
\newcommand{\prde}{\textrm{prde}}
\newcommand{\prdes}{\textrm{prdes}}

\newcommand{\phatu}{\hat p^\texttt{U}}

\newcommand{\ex}{\mathbb{E}}
\newcommand{\ap}{{\bm{x}}}
\newcommand{\af}{\tilde{\bm{x}}}
\newcommand{\Kf}{\tilde{K}}
\newcommand{\phat}{\hat{p}}
\usepackage{comment}

\begin{document}

\def\spacingset#1{\renewcommand{\baselinestretch}%
{#1}\small\normalsize} \spacingset{1.45}


\if1\blind
{
  \title{\bf Empirical Bayes Predictive Density Estimation under Covariate Shift in Large Imbalanced Linear Mixed Models}
  \author{Abir Sarkar \\ 
  Department of Statistics and Data Science,\\ Cornell University \\
  and,\\
  Gourab Mukherjee\\
    Department of Data Sciences and Operations,\\ University of Southern California \\
      and,\\
Keisuke Yano\\
Department of Fundamental Statistical Mathematics,\\
The Institute of Statistical Mathematics
    }
  \maketitle
} \fi

\begin{abstract}
We study empirical Bayes (EB) predictive density estimation in linear mixed models (LMMs) with large number of units, which induce a high-dimensional random‐effects space. Focusing on Kullback–Leibler (KL) risk minimization, we develop a calibration framework to optimally tune predictive densities derived from on a broad class of flexible priors. Our proposed method addresses two key challenges in predictive inference: (a) severe data scarcity leading to highly imbalanced designs, in which replicates are available for only a small subset of units; and (b) distributional shifts in future covariates.

To estimate predictive KL risk in LMMs, we use a data-fission approach that leverages exchangeability in the covariate distribution. We establish convergence rates for our proposed risk estimators and show how their efficiency deteriorates as data scarcity increases. Our results imply the decision-theoretic optimality of the proposed EB predictive density estimator. The theoretical development relies on a novel probabilistic analysis of the interaction between data fission, sample reuse, and the predictive heat-equation representation of \citet{george2006improved}, which expresses predictive KL risk through expected log-marginals. Extensive simulation studies demonstrate strong predictive performance and robustness of the proposed approach across diverse regimes with varying degrees of data scarcity and covariate shift.
\end{abstract}

\noindent%
{\it Keywords:} Predictive Density Estimation; Kullback-Leibler Loss; Linear Mixed Models; Risk Estimation; Empirical Bayes; Sample Reuse; Covariate Shift. 
\vfill

\newpage
\spacingset{1.45} 

\section{Introduction}\label{sec:intro}
We consider predictive density estimation \citep{Aitchison-book, Geisser-book} in linear mixed-effects models (LMMs) \citep{pinheiro2000linear,demidenko2013mixed}. Our focus is on LMMs with a large number of units that need the use of large number of random effects. Such models are increasingly prevalent in contemporary applications (See \citet{gao2020estimation}, \citet{ghosh2022backfitting} and  the references therein). We also allow for the possibility that past and future data may not be exchangeable \citep{cauchois2024robust}, with future data involving covariates drawn from a different distribution (See Section 2 of \citet{tibshirani2019conformal}). This phenomenon, known as covariate shift, frequently arises in modern predictive inference problems \citep{gibbs2021adaptive}. Standard predictive methods typically perform poorly under these regimes \citep{yang2024doubly,angelopoulos2022private,gibbs2024online}.

Our objective is to estimate the future conditional density of the observations, commonly referred to as the predictive density \citep{Geisser-book,Aitchison-book}. Existing approaches to predictive inference under covariate shift typically proceed by estimating the density ratio between historical and future covariate distributions and subsequently reweighting or adjusting the predictors. In contrast, we directly target minimization of the predictive Kullback–Leibler (KL) loss between the true predictive density and its estimator. This loss explicitly integrates over the future covariate distribution; consequently, minimizing it yields predictive density estimators that adapt naturally to the covariate distribution of the target population. The KL loss is a well-studied metric in predictive density estimation and is known to produce estimators with favorable risk properties \citep{Komaki01,George08,Mukherjee-thesis,Ghosh08,maruyama2012bayesian,Fourdrinier11,marchand2018predictive,Rockova2024Adaptive}.

We adopt an empirical Bayes framework and seek predictive densities that minimize the predictive KL risk over a broad class of estimators indexed by flexible prior families in LMMs. To this end, we construct predictive KL risk estimators that are uniformly well behaved over the entire candidate class and use them for risk minimization.

This setting presents several fundamental challenges. Using the predictive heat-equation representation (see Equation (16) of \citet{george2006improved}), the predictive KL loss can be expressed as an expectation of the predictive marginal log-likelihood. Unlike point estimation, this quantity does not admit canonical unbiased risk estimators unless the prior family is Gaussian. In this work, we consider predictive density estimators (\prde) arising from both Gaussian and non-Gaussian priors.

To address this difficulty, we employ the sample splitting approach. For point estimation in sequence models \citet{Brown2013Poisson} prescribed a sample splitting strategy that has been subsequently extended to complex settings (see \citet{Leiner2023DataFission,Dharamshi2024Generalized,Neufeld2024DataThinning} and references therein). The role of sample splitting in predictive density estimation for LMMs has not been previously studied. We develop new theoretical results that enable efficient estimation of the predictive KL risk under covariate shifts via sample splitting. We next describe our predictive setup. Detailed background, references, and a full discussion of our contributions are provided subsequently.


\newcommand{\Y}{\tilde Y}
\newcommand{\y}{\tilde y}
\newcommand{\cb}{\mathbb{C}}

\subsection{Predictive Set-up: LMM with Covariate Shift}
For each unit $i=1,\ldots,n$ in the linear mixed effects model consider observing $K_i$ replicates $\{Y_{ik}: k=1,\ldots, K_i\}$. The expected value of the response $Y_{ik}$ is linearly related to covariates  $\ap_{ik}$ and $u_{ik}$ by  parameters $\bbeta \in \mathbb{R}^d$ and $\gamma_i \in \mathbb{R}$ respectively. The model is:
\begin{align}\label{eq:model.11}
    y_{ik}=\ap_{ik}'\bbeta + u_{ik} \gamma_i +\sigma\, \epsilon_{ik} \text{ for } k=1,\ldots,K_i \text{ and, }i=1,\ldots,n.
\end{align}
The noise terms $\epsilon_{ik}$'s are independently distributed from  standard normal distribution. Based on observing $\{y_{ik}:k=1,\ldots,K_i;\,i=1,\ldots,n\}$, consider predicting future observations $\y_{ik}$ from units $i=1,\ldots,n$ based on the following model:  
\begin{align}\label{eq:model.12}
    \y_{ik}=\af_{ik}'\bbeta +  v_{ik} \gamma_i +\sigma\, \tilde{\epsilon}_{ik} \text{ for } k=1,\ldots,\Kf_i \text{ and, }i=1,\ldots,n.
\end{align}
The covariates in \eqref{eq:model.11} and \eqref{eq:model.12} can differ, but all the parameters $\bbeta, \sigma, \gamma_i$ are invariant. The noise terms $\tilde{\epsilon}_{ik}$s have standard normal distributions and are independent not only across $(i,k)$ pairs in  \eqref{eq:model.12} but also with the noise terms in \eqref{eq:model.11}. The variance $\sigma^2$ is unknown. We further assume that the random effects $\bm{\gamma}=\{\gamma_i:1\leq i\leq n\}$ in \eqref{eq:model.11}-\eqref{eq:model.12} are independent and identically distributed (i.i.d.) from a unknown distribution $g_0$.

Let $\bm{\theta}=(\bbeta,\sigma,\bm{\gamma})$ denote the full parameter vector, and let
$\cc=\{(x_{ik},u_{ik}) : 1\le k\le K_i,\; i=1,\ldots,n\}$
and
$\cct=\{(\tilde x_{ik},v_{ik}) : 1\le k\le \Kf_i,\; i=1,\ldots,n\}$
denote the covariates in \eqref{eq:model.11} and \eqref{eq:model.12}, respectively. Let $\Theta$ be the parameter space. If the unknown parameter $\bm{\theta}\in\Theta$ were known, the true future density is
\[
p(\bm{\Y}\mid\bm{\theta},\cct)
=\prod_{i=1}^{n}\prod_{k=1}^{\Kf_i}
\phi\!\left(\tilde y_{ik};\,\af_{ik}'\bbeta+v_{ik}\gamma_i,\sigma\right).
\]
where $\phi(\,\cdot\,;\mathtt{me},\mathtt{sd})$ denotes the normal density with mean $\mathtt{me}$ and standard deviation $\mathtt{sd}$. Our goal is to construct an estimator $\hat{p}(\bm{\Y}\mid\bm{Y},\cc,\cct)$ of the future conditional density of
$\bm{\Y}=\{\tilde y_{ik}: i=1,\ldots,n;\, k=1,\ldots,\Kf_i\}$
based on the observed data
$\bm{Y}=\{y_{ik}: i=1,\ldots,n;\, k=1,\ldots,K_i\}$.
To evaluate the performance of $\hat{p}$, we use the information divergence of \citet{Kullback51} between the true future density and its estimate as the loss function:
\[
\loss_n[\cc,\cct](\bm{\theta},\hat{p})
=\Big(\sum_{i=1}^n \Kf_i\Big)^{-1}
\int p(\bm{\Y}\mid\bm{\theta},\cct)
\log\!\left(
\frac{p(\bm{\Y}\mid\bm{\theta},\cct)}
{\hat{p}(\bm{\Y}\mid\bm{y},\cc,\cct)}
\right)
\,d\bm{\Y}.
\]
Henceforth, we denote $\cc\cup\cct$ by $\cb$. The corresponding predictive KL risk is obtained by averaging over the distribution of the past observations:
\begin{align}\label{eq:f.risk}
\R_n[\cb](\bm{\theta},\hat{p})
=\int p(\bm{Y}\mid\bm{\theta},\cc)\,
\loss_n[\cb](\bm{\theta},\hat{p})\,
d\bm{Y}.
\end{align}
The predictive KL risk in \eqref{eq:f.risk} depends explicitly on the covariates $\cc$ and $\cct$, and thus reflects changes arising from shifts between the past and future covariate distributions. 

We consider linear mixed models with a large number of units. Our focus is on data-scarce settings with limited replication, where only a small subset of units contributes multiple observations, while the majority of units are observed at most once.
Let $\eta_n$ denote the proportion of units with two or more replicates, defined as
\[
\eta_n
= n^{-1}\,\mathrm{cardinality}\bigl\{\, i : 1\le i\le n \text{ and } K_i>1 \,\bigr\}.
\]
We are particularly interested in regimes where $\eta_n$ is small, and especially in the asymptotic setting where $\eta_n\to 0$ as $n\to\infty$.

\newcommand{\be}{\bm{\mathrm{b}}}
\newcommand{\se}{\mathrm{s}}

\subsection{Empirical Bayes Predictive Density Estimation}
Consider a broad class $\mathcal{G}$ of probability distributions for the random effects $\bm{\gamma}$ defined in equations \eqref{eq:model.11}--\eqref{eq:model.12}. For a fixed parameter set $(\bm{\beta}, \sigma) = (\bm{b}, s)$, and for any prior distribution $g \in \mathcal{G}$, the conditional Bayes \prdes\  is given by:
\begin{align}\label{eq:bayes.prde1}
\hat{p}[\be,\se,g](\bm{\Y};\bm{Y},\cb)
=\int {p}(\bm{\Y}\mid \be,\se,\bm{\gamma},\cct)\,
g[\be,\se,\cc](\bm{\gamma}\mid \bm{Y})\,d\bm{\gamma},
\end{align}
where, $g[\be,\se,\cc](\bm{\gamma}\mid \bm{Y})$ is the posterior distribution of $\bm{\gamma}$ based on the observing $\bm{Y}$ for fixed  $\bbeta=\be$ and $\sigma=\se$. It is given by
\[
\bigl(m[\be,\se,g](\bm{Y};\cc)\bigr)^{-1}
\prod_{i=1}^n g(\gamma_i)
\prod_{k=1}^{K_i}
{p}(\bm{Y}_{ik}\mid \be,\se,\gamma_i,u_{ik},a_{ik}).
\]
Here, $m[\be,\se,g](\bm{Y};\cc)$ is the corresponding normalizing constant, representing the marginal conditional likelihood of $\bm{Y}$ under the prior $g$ on $\bm{\gamma}$. Consider the following class of \prdes:
\[
\mathcal{P}_{\mathcal{G}}
=\left\{
\hat{p}[\be,\se,g]:
g\in\mathcal{G},\;
\se>0,\;
\be\in\mathbb{R}^d
\right\}.
\]

We adopt the empirical Bayes framework of \citet{efron2012large} and consider estimation of the hyperparameter within the class $\mathcal{P}_{\mathcal{G}}$. In particular, we focus on estimating the distribution $g$ that governs the exchangeable hierarchical structure of $\bm{\gamma}$. For a comprehensive review of empirical Bayes methodologies, see \citet{xie2012sure}.

In this work, we propose an empirical Bayes \prde\, by selecting a member of $\mathcal{P}_{\mathcal{G}}$ that minimizes (or approximately minimizes) the predictive KL risk \eqref{eq:f.risk} over this class. As a natural baseline, one may consider using non-informative priors for $\bm{\gamma}$ in \eqref{eq:bayes.prde1}. We show that the proposed EB \prde\, substantially outperforms all such uniform-prior-based \prdes. Moreover, even when the true density $g_0 \notin \mathcal{G}$, the inherent richness of the class $\mathcal{P}_{\mathcal{G}}$, combined with our flexible EB risk-estimation-based approach, often yields a \prde\, whose risk is close to the optimal Bayes risk.

\subsection{Our Contributions}
Our main contributions are as follows.

(a) The roles of shrinkage in \prde\ has been studied primarily in Gaussian sequence models  \citep{Mukherjee-thesis,Mukherjee-15,Liang-thesis,Xu10,marchand2018predictive,Rockova2024Adaptive}. We extend this line of work by studying optimal tuning of shrinkage priors in linear mixed models allowing for non-Gaussian random effects.

(b) Risk-estimation-based empirical Bayes \prde\ has been developed in \citet{Xu12} and \citet{george2021optimal} for invariant sequence models. In contrast, we study \prde\ in a covariate-shifted LMM framework, where the parameters are time-invariant but the design evolves over time \citep{shimodaira2000improving}. The settings in \citet{George08} and \citet{MukherjeeYanoIJPAM} are closest to ours. 
The former can be extended to Gaussian random effects, but not to the more general distributions considered here. The latter assumes known fixed effects and scale parameters for \prde, imposes stronger untestable assumptions on the design, and does not study the risk properties of empirical Bayes \prde\ under varying covariate distributions or in regimes with differing levels of data scarcity.

(c) To estimate the predictive Kullback–Leibler (KL) risk, we build on the predictive heat-equation representation of \citet[Eq.~(16)]{george2006improved}. However, in the presence of non-Gaussian random effects, this representation does not yield a tractable unbiased risk estimator (see \ref{eq:unbiased.gaussian}). We therefore adopt a sample-splitting strategy to construct feasible KL risk estimates. Existing sample-splitting and data-fission techniques \citep{Brown2013Poisson,Leiner2023DataFission,Dharamshi2024Generalized,Neufeld2024DataThinning} are not directly applicable in the predictive KL setting, as these procedures induce variance inflation in the observed data, whereas predictive KL risk is highly sensitive to the ratio of past and future variances \citep{Mukherjee-15,YKK21}. To address this limitation, we develop a novel sample-reuse scheme based on data fission that exploits the exchangeability of the random effects. By carefully controlling the dependence structure introduced through sample reuse, we establish rates of convergence for the resulting risk estimators. This approach contributes new mathematical statistics perspectives to the predictive sample-reuse literature (Chapter~5 of \citet{Geisser-book}).

(d) In the predictive framework of \eqref{eq:model.11}–\eqref{eq:model.12}, $g$-modeling approaches \citep{Efron2014TwoModeling} that estimate the random-effects distribution and plug it  for \prde\ can be substantially suboptimal under covariate shift. In contrast, our predictive KL risk–based procedure adapts to the target covariate distribution and is asymptotically oracle optimal (Theorem~\ref{thm.risk.conv} and Corollary~\ref{thm.main_criteria_result}); numerical evidence is provided in Sections~\ref{sec:simulation}.

\subsection{Organization of the paper}
The paper is organized as follows. Section~\ref{sec:section2} introduces the proposed empirical Bayes predictive density estimator based on risk estimation. Section~\ref{sec:theory} establishes its decision-theoretic properties, and Section~\ref{sec:simulation} describes implementation details. Section~\ref{sec:discussion} presents numerical experiments illustrating predictive performance across a range of regimes. All proofs are deferred to the \hyperref[sec:appendix]{Appendix}.

\section{Proposed EB Predictive Density Estimator}\label{sec:section2}
Consider the \prde\ $p^{\textsf{OR}} \in \mathcal{P}_{\mathcal{G}}$ that minimizes the KL predictive loss in \eqref{eq:f.risk}:
\begin{align}\label{eq:oracle}
p^{\textsf{OR}}
=\arg\min_{\hat p \in \mathcal{P}_{\mathcal{G}}}
\R_n[\cb](\bm{\theta}, \hat{p}).
\end{align}
We refer to $p^{\textsf{OR}}$ as the \emph{oracle} predictor, since it depends on the unknown parameter vector $\bm{\theta}$ and therefore is not implementable in practice. It provides a theoretical lower bound on the achievable predictive loss within the class $\mathcal{P}_{\mathcal{G}}$. 

Our objective is to construct data-driven predictors by minimizing risk estimates that do not depend on unknown parameters. We then compare their risk properties with those of the ideal oracle predictor $p^{\textsf{OR}}$. In this sense, the oracle predictor serves as a benchmark for evaluating the performance of practical predictors.

Under model \eqref{eq:model.11}, constructing asymptotically consistent estimators of $\bm{\beta}$ and $\sigma$ is straightforward. To develop the proposed \prde\ estimator, we consider estimators $\hat{\bm{\beta}}$ and $\hat{\sigma}$ that are $O_p(n^{-\alpha})$-consistent for some $\alpha \in (0,1/2]$.
Details on their construction are provided later in Sec~\ref{sec.3.1}. 
We next consider the risk $\R_n[\cb](\bm{\theta}, g)$ of \prdes\, $\phat[\hat{\bm{\beta}},\hat{\sigma},g]$ which are based on those estimates.  
We estimate these risk functions by  $\hat \R_n[\cb](g)$ for all $g \in \mathcal{G}$.
We will describe the risk estimation elaborately later and show that these risk estimators are uniformly accurate over $\mathcal G$ as $n \to \infty$.
Finally, by minimizing the estimated risk we propose the predictor $\phat[\hat{\bm{\beta}},\hat{\sigma},\hat g]$ from $\mathcal{P}_\mathcal{G}$ as our EB \prde, where,
\begin{align}\label{eq:proposed}
\hat g
= \arg\min_{g \in \mathcal{G}}
\hat \R_n[\cb](g).
\end{align}

To arrive at the risk estimates, we use mathematical identities from \citet{george2006improved} and express the risk of $\phat[\hat{\bbeta},\hat{\sigma},g]$ as a tractable (though still complex) difference of log-likelihoods. Consider the following functional of the observations that also depend on $(\bbeta,\sigma)$: 
$$Z_i(\bbeta,\sigma)=\sigma^{-1} u_{i}^{-2} \bigg(\sum_{k=1}^{K_i} u_{ik}(Y_{ik}-\bm{x}'_{ik}\bbeta)\bigg) \text{ where, } u_i=\bigg(\sum_{k=1}^{K_i} u^2_{ik}\bigg)^{1/2}$$
for $i=1,\ldots,n$. Note that, in \eqref{eq:model.11} when $\bbeta, \sigma$ are known, $Z_i(\bbeta,\sigma)$ is the UMVUE for point estimate of $\gamma_i$ under squared error loss. 
Next, for  $i=1,\ldots,n$ and $k=1,\ldots,\Kf_i$, consider the functional $\tilde Z_{ik}(\bbeta,\sigma)=\sigma^{-1}(\Y_{ik}-\af'_{ik}\bbeta)$ of the predictants in \eqref{eq:model.12} and the following linear combination of these two functionals: 
$$\tilde W_{ik}(\bbeta,\sigma)=v_{ik}(u_i^{2}+v_{ik}^2)^{-1} \{Z_i(\bbeta,\sigma) + v_{ik} \tilde Z_{ik}(\bbeta,\sigma)\}.$$
The genesis for $\tilde W_{ik}(\bbeta,\sigma)$ is more complicated and can be traced to \citet{Komaki01,george2006improved}. It is the UMVUE for point estimation of $v_{ik}\times \gamma_i$ where $\{Y_{ik}: k=1,\ldots,K_i\}$ as well as $\{\tilde Y_{ik}: k=1,\ldots,\Kf_i\}$ are observed. We will use $\tilde W_{ik}$ to express the predictive log-likelihood, it is not known as $\tilde{\bm{Y}}$ is not observed. We use tilde to express this aspect. 

Next, for any $g \in \mathcal{G}$,  consider the following two Gaussian convolution densities:
\begin{align}
m_g(a;u,v,\sigma)&=\int \phi(a; v \gamma \sigma^{-1}, u^{-1}v ) g(\gamma) d\gamma \text{ and, } \label{m.defined.1}\\
\tilde m_g(a;u,v,\sigma)&=
\int \phi(a; v\gamma \sigma^{-1},{
(1+u^{2}v^{-2})^{-1/2}) g(\gamma) d\gamma}~. \label{m.defined.2}
\end{align}
The following result characterizes the predictive risk of 
$\hat p[\hat{\bm{\beta}}, \hat{\sigma}, g]$ in terms of the preceding quantities. 
We first formalize the required assumptions. 

\noindent\textbf{A1.}\label{assumption1}
For some \(\alpha \in (0, 1/2]\), the fixed effects estimator satisfies \(n^{\alpha}\|\hat{\bm{\beta}} - \bm{\beta}\| \to 0\) in \(L_4\). The variance estimator satisfies \(\hat{\sigma}^{-1} - \sigma^{-1} = O_p(n^{-\alpha})\) and \(\hat{\sigma} \geq c\) a.e. for some \(c > 0\).

\noindent\textbf{A2.}\label{assumption2}
The covariates $x, \tilde{x}, u, v$ are uniformly bounded away from zero and from infinity, and
$\limsup_{n \to \infty} 
n^{-1} \sum_{i=1}^n \gamma_i^4 < \infty$.

\noindent\textbf{A3.}\label{assumption3}
The prediction objective spans across all units and is not dominated by a subset. In particular, 
$1 \le \inf_i \Kf_i \le \sup_i \Kf_i < \infty$.

Assumption~\hyperref[assumption1]{A1} ensures that the fixed effects and scale estimators are sufficiently accurate so that the corresponding plug-in errors are asymptotically negligible. As we are dealing with the risk (expected loss) function, we require \(L_r\) convergence of the fixed effects estimator. In the following subsection, we show that constructing estimators satisfying \hyperref[assumption1]{A1} is straightforward.
Assumption~\hyperref[assumption2]{A2} rules out heavy-tailed or unbounded designs, enabling uniform laws of large numbers and central limit theorems without domination by outliers. Note that, \hyperref[assumption2]{A2} is imposed primarily to streamline the technical arguments; it can be weakened at the expense of additional technical complexity.   \hyperref[assumption3]{A3} prevents the prediction target from being degenerate or driven by a few units, ensuring stable cross-sectional aggregation.

\begin{theorem}\label{prop.bayes.risk}
For any fixed $\bbeta$ and $\sigma>0$, under assumptions~\hyperref[assumption1]{A1}-~\hyperref[assumption3]{A3} for some $\alpha \in [0,1/2]$ the risk of $\hat p[\hat{\bm{\beta}},\hat \sigma, g]$ is given by:
\begin{align}
&\,\R_n[\bm{\theta},\cb](g) = a_n[\cb] +  R_{1,n}[\bm \theta,\cb](g)-R_{2,n}[\bm \theta,\cb](g) + O_p(n^{-\alpha}) \text{ as } n \to \infty, \label{eq:risk.2}\\  
&\text{ where, } a_n[\cb]=\frac 1 {2\,\kappa_n}\sum_{i=1}^n \sum_{k=1}^{\Kf_i}\log\bigg(1+\frac{v_{ik}^2}{u_i^2}\bigg)
\text{ and, } \kappa_n=\sum_{i=1}^n \Kf_i~, \label{eq:risk.2.0}\\ 
&R_{1,n}[\bm \theta,\cb](g)=\frac 1 {\kappa_n}\sum_{i=1}^n \sum_{k=1}^{\Kf_i} \ex_{\gamma_i}\big\{\log m_g(v_{ik}Z_i(\bm \beta,\sigma);u_i,v_{ik},\sigma)\big\}~,\label{eq:risk.2.1}\\ 
&R_{2,n}[\bm \theta,\cb](g)= \frac 1 {\kappa_n}\sum_{i=1}^n \sum_{k=1}^{\Kf_i} \ex_{\gamma_i} \big\{\log \tilde m_g(\tilde W_{ik}(\bm \beta,\sigma);u_i,v_{ik}, \sigma)\big\}~, \label{eq:risk.2.2}
\end{align}
where, the expectations are based on the model \eqref{eq:model.11}-\eqref{eq:model.12} and the functions $m_g$ and $\tilde m_g$ are defined in \eqref{m.defined.1}-\eqref{m.defined.2}. 
\end{theorem}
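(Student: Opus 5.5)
Our plan is to prove the identity first in the oracle case $(\hat{\bbeta},\hat\sigma)=(\bbeta,\sigma)$, where it holds exactly with no remainder. We then show that plugging in the estimators perturbs each term by $O_p(n^{-\alpha})$.

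\emph{Step 1 (reduction to sufficient statistics).} With $(\be,\se)=(\bbeta,\sigma)$ fixed, both the posterior of $\gamma_i$ and the Bayes \prde\ factorize across units $i$, because $g$ acts as a product prior. For unit $i$ the likelihood of $\{Y_{ik}\}_k$ depends on $\gamma_i$ only through $Z_i(\bbeta,\sigma)$. Since $Z_i(\bbeta,\sigma)\sim N(\gamma_i/\sigma, u_i^{-2})$, the variable $v_{ik}Z_i$ has density $\phi(\cdot; v_{ik}\gamma_i\sigma^{-1}, v_{ik}u_i^{-1})$, and its marginal under $g$ is exactly $m_g(\cdot;u_i,v_{ik},\sigma)$ of \eqref{m.defined.1}. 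In the same way, the joint density of past and future data for unit $i$, after dividing out $\gamma$-free factors, depends on $\gamma_i$ through the combined UMVUE $\tilde W_{ik}$. For a single future coordinate $k$, $\tilde W_{ik}\sim N(v_{ik}\gamma_i\sigma^{-1}, (1+u_i^2v_{ik}^{-2})^{-1})$, whose marginal under $g$ is $\tilde m_g$.

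We would then write the predictive density of $\tilde y_{ik}$ as the ratio of the joint marginal of $(Y_{i\cdot},\tilde y_{ik})$ to the marginal of $Y_{i\cdot}$. This is the device of \citet{george2006improved}, Eq.~(16). The KL loss splits into $E\log p(\tilde y\mid\theta)$ minus $E\log$ of that ratio. The $\gamma$-free Gaussian factors cancel, apart from a normalizing-constant ratio. That ratio is the variance ratio $v^2u^{-2}$ versus $(1+u^2v^{-2})^{-1}$, together with the Jacobian from rescaling $\tilde y$. Together these produce $\tfrac12\log(1+v_{ik}^2/u_i^2)$, which gives $a_n$ after averaging by $\kappa_n$.

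\emph{Step 2 (multiple future replicates).} The true density and the estimator both factorize over $k$ given $\gamma_i$, but the Bayes \prde\ does not factorize over $k$ marginally. We therefore need to check that the loss, which is a KL divergence of a product truth against the joint \prde, decomposes as a sum over $k$ of per-coordinate terms. Because the truth is a product, the KL of a product measure against a joint estimator equals the sum of the marginal KLs of the coordinates only when the estimator's marginals are used. The statement implicitly treats each $(i,k)$ separately. We therefore interpret $\hat p$ coordinatewise, via its marginals, and note that the risk is linear in $\sum_k$. This justifies summing per-$(i,k)$ expressions and averaging by $\kappa_n$.

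\emph{Step 3 (plug-in error).} It remains to bound $\big|E\log m_g(v Z_i(\hat{\bbeta},\hat\sigma);\cdot,\hat\sigma)-E\log m_g(vZ_i(\bbeta,\sigma);\cdot,\sigma)\big|$, and the analogous $\tilde m_g$ difference, uniformly in $i$. We would use the derivative bound $|\partial_a\log m_g(a)|\le C(|a|+E_g|\gamma|)$, valid for Gaussian convolutions since the score equals $(\text{posterior mean}-a)/\text{variance}$. This lets us apply the mean value theorem. Since $Z_i(\hat{\bbeta},\hat\sigma)-Z_i(\bbeta,\sigma)$ is linear in $\hat{\bbeta}-\bbeta$ and $\hat\sigma^{-1}-\sigma^{-1}$, with bounded coefficients by \hyperref[assumption2]{A2}, Cauchy–Schwarz combined with the $L_4$ rate in \hyperref[assumption1]{A1} and the fourth moments of $\gamma_i$ gives a bound of order $n^{-\alpha}$ after averaging. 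The dependence of the variance parameter on $\sigma$ is handled the same way, using $\hat\sigma\ge c$. \hyperref[assumption3]{A3} ensures $\kappa_n\asymp n$, so the averages stay bounded.

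The main obstacle is this last step. The quantity $\hat{\bbeta}$ is data-dependent and correlated with $Z_i$, so the expectation is not a simple plug-in, and $\log m_g$ is unbounded. For this reason we would state the remainder as $O_p(n^{-\alpha})$, controlled through the score bound and \hyperref[assumption1]{A1}, rather than as an exact expectation.
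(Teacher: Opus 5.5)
Your plan follows the paper's proof. First you prove the exact ratio identity of George et al.\ (2006) for known $(\bbeta,\sigma)$, with $a_n$ coming from the variance ratio (the paper phrases this as a Jacobian). Then you apply a mean-value bound to the two log-marginal terms at $(\hat{\bbeta},\hat\sigma)$, using polynomial-growth derivative bounds and A1. Your Step~2 is more careful than the paper, which simply asserts that $\hat p$ ``factorizes in the transformed coordinates.'' For the joint Bayes \prde\ in \eqref{eq:bayes.prde1} with $\tilde K_i>1$, the identity applies to the whole block $(\tilde y_{i1},\dots,\tilde y_{i\tilde K_i})$. It then produces the statistic pooled through $\sum_k v_{ik}\tilde Z_{ik}$ and the constant $\tfrac12\log(1+\sum_k v_{ik}^2/u_i^2)$, not a sum over $k$. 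So the displayed formula really concerns the coordinatewise (product-of-marginals) predictor. That is also the predictor the paper uses afterwards: $\hat{\bm p}_{\textsf m}$ is coordinatewise, and the later proofs reduce to $\tilde K_i=1$.

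Your Step~3 misses one step, and so does the paper, which says the identity holds ``with $(\bbeta,\sigma)$ replaced by $(\hat{\bbeta},\hat\sigma)$.'' Applying the ratio identity at plugged-in values gives the risk of the flat-prior predictor $\hat p[\hat{\bm\beta},\hat\sigma,\textsf{U}]$ in place of $a_n$. What ``remains'' must therefore also include this difference. It is an explicit Gaussian KL: for fixed $(b,s)$ it equals $a_n+\log(s/\sigma)+\sigma^2/(2s^2)-\tfrac12$ plus a quadratic form in $b-\bbeta$. It is thus second order and harmless, but it has to be stated.

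Relatedly, your expectation-level argument works for the $\bbeta$ part: Cauchy--Schwarz, the $L_4$ rate, and the averaged fourth moments $n^{-1}\sum_i\gamma_i^4$ suffice. Note that this is an average bound, not one uniform in $i$ as you wrote. For $\sigma$, however, A1 gives only $\hat\sigma^{-1}-\sigma^{-1}=O_p(n^{-\alpha})$ together with $\hat\sigma\ge c$, and that does not yield an $O(n^{-\alpha})$ bound on an expectation. A random $O_p(n^{-\alpha})$ remainder is coherent only under the paper's reading. On that reading you compute the risk for fixed $(b,s)$, then evaluate $(b,s)\mapsto \ex_{\gamma_i}\log m_g(v_{ik}Z_i(b,s);u_i,v_{ik},s)$ and its $\tilde m_g$ analogue at $(\hat{\bbeta},\hat\sigma)$ via the mean value theorem. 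These maps are deterministic, so the dependence between $\hat{\bbeta}$ and $Z_i$ never enters. Your final sentence effectively lands there, and you should say so explicitly.
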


The first term $a_n$ in the right side of \eqref{eq:risk.2} does not involve any parameters and solely depend on the design. The other two terms depends on the parameters. In \eqref{eq:risk.2.1}-\eqref{eq:risk.2.2}, the expectations within the sum splits out separately for each $\{\gamma_i: i=1,\ldots,n\}$. They only depend on $\gamma_i$ and $\sigma$ and not of $\bm{\beta}$ as the variables are centered. 

The proof is complicated and is provided in the appendix. It built on results in \citet{george2006improved}. Note that, if $(\bm{\beta}, \sigma)$ were known, then for the risk of $\hat p[\bm \beta,\sigma,g]$ we have an exact equality (non-asymptotic) in \eqref{eq:risk.2}. Also, for this case we do not need the assumptions. 
This is an intermediate step in our proof, for which Proposition 4 in \cite{MukherjeeYanoIJPAM} can be applied. However, obtaining the sharper result presented here under unknown parameters requires substantial asymptotic analysis.

When $g$ is the uniform distribution, then the second and third term in \eqref{eq:risk.2} vanishes. Thus, the asymptotic risk of $\phatu:=\hat p[\hat{\bm{\beta}},\hat \sigma, \textsf{U}]\sim a_n$ which is invariant across parameters and  serve as a benchmark. We would like to minimize the excess risk of Bayes \prdes\ $\hat p[\hat{\bm{\beta}},\hat \sigma,{g}]$ with respect to $\phatu$  over the class $\mathcal{G}$. However, as $\bm{\gamma},\sigma$ are unknown, direct evaluations of $R_{1,n}[\bm \theta,\cb](g)$ and $R_{2,n}[\bm \theta,\cb](g)$ are not possible. We propose to estimate them  by $\hat R_{1,n}^{\,g}$ and $\hat R_{2,n}^{\,g}$ respectively. 

We consider the following estimator for $R_{1,n}[\bm \theta,\cb](g)$: 
\begin{align}\label{eq:r1.hat}
\hat R_{1,n}^{\,g}=\bigg(\sum_{i=1}^n \Kf_i\bigg)^{-1}\sum_{i=1}^n \sum_{k=1}^{\Kf_i} \log m_g(v_{ik}Z_i(\hat{\bm{\beta}},\hat \sigma);u_i,v_{ik},\hat\sigma).
\end{align}
For well-behaved prior classes $\mathcal{G}$, for any fixed $g$ we show that $\hat R_{1,n}^{\,g}$ convergences to the true $R_{1,n}[\bm \theta,\cb](g)$ at a near-parametric rate,  
$|R_{1,n}[\bm \theta,\cb](g)- \hat R_{1,n}^{\,g}|=O_{p}(\max\{n^{-1/2}, n^{-\alpha}\})$ as $n \to \infty$. A formal proof is presented within the proof of Theorem~\ref{thm.risk.conv} in Sec~\ref{sec:theory}. 

To provide a reasonable estimate of $R_{2,n}[\bm \theta,\cb](g)$ is straightforward. We can not use the approach used in constructing $\hat R_{1,n}$ as unlike $R_{1,n}$, $R_{2,n}$ involves the predictive log-likelihood through $\tilde W_{ik}(\bm \beta,\sigma)$ \citep{George12}. Note that, even if $\bm \beta$, $\sigma$ were known, we do not observe $\tilde W_{ik}$ unlike $Z_i$. However, when $\mathcal{G}$ is Gaussian family of priors $\{\phi(\,\cdot\, |0,\tau^{1/2}\sigma): \tau>0\}$, then efficient estimates of  $R_{2,n}$ can be easily constructed. In this case the risk function reduces to terms involving second moments of $\tilde W_{ik}$ which can be easily calculated leading to quadratic functions of ${\gamma}_i$. As such, for $g=\phi(\,\cdot\, |0,\tau^{1/2}\sigma)$, the difference {$R_{1,n}[\bm \theta,\cb](g)-R_{2,n}[\bm \theta,\cb](g)$ equals: 
    \begin{align}
    \label{eq:unbiased.gaussian}
-\frac{1}{m_n}
\sum_{i=1}^{n}\sum_{k=1}^{\tilde{K}_{i}}&\Big\{
\log \frac{(\tau + u_{i}^{-2})(v_{ik}^{2}+u_{i}^{2})}{\tau(v_{ik}^{2}+u_{i}^{2})+\sigma^{2}} 
+\frac{  v_{ik}^{2} (\gamma_{i}^{2}\sigma^{-2}-\tau)}
{(\tau u_{i}^{2}+ 1)\{\tau(v_{ik}^{2}+u_{i}^{2})+1\}}
\Big\}~.
    \end{align}
It can be estimated by replacing $\gamma_i^2$ above by $Z_i(\hat{\bm{\beta}},\hat \sigma)-u_i^{-2}$. By Theorems 4.1 and 4.2 of  \citet{george2021optimal} we know that this estimate convergences  at $\sqrt{n}$ rate. For most popular families of priors we do not have a natural estimate of $R_{2,n}[\bm \theta,\cb]$. In the following section, we develop a novel method for estimating it by using surrogates of $\tilde W_{ik}$ in the predictive log-likelihood using data fission and sample reuse.  

\subsection{Estimating the fixed effects and the variance}
For our proposed methodology, we need good point estimates of $\bbeta$ and $\sigma$. We next provide a constructive procedure yielding estimators with a guaranteed rates of convergence.   

\begin{lemma}\label{lem:lemma1}
For any $\bm{\beta} \in \mathbb{R}^d$ and $\sigma >0$, under ~\hyperref[assumption2]{A2}-~\hyperref[assumption3]{A3},
there exist estimators $\hat{\bbeta}$ and $\hat{\sigma}$ such that $\hat \sigma \geq c$ \text{a.e.} and
\begin{align}
\limsup_{n \to \infty} \; l_n^{-4}\,\ex\|\hat{\bbeta} - \bbeta \|^4 
< \infty \text{ and } |\hat{\sigma}^{-1} - \sigma^{-1}| 
= O_p(l_n), 
 \label{eq:consistent}
\end{align}
where,
$l_n = \min\{ n^{-1/4}, (n \eta_n)^{-1/2} \}$ and $c$ is a positive constant. 
\end{lemma}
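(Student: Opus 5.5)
We construct the estimators explicitly and bound their moments. The key structural fact is that the random effects $\gamma_i$ act as unit-specific nuisance parameters: for the fixed effects we can either eliminate them by within-unit contrasts or absorb them into a noise term. For $\sigma$, only units with $K_i>1$ carry information free of $\gamma_i$. This explains why the rate $l_n$ has two components: $(n\eta_n)^{-1/2}$ from the $n\eta_n$ replicated units, and $n^{-1/4}$ from a cruder estimator that uses all units.

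\textbf{Step 1: the fixed effects.} Treat $u_{ik}\gamma_i$ as part of the error, i.e.\ $y_{ik}=\ap_{ik}'\bbeta+e_{ik}$ with $e_{ik}=u_{ik}\gamma_i+\sigma\epsilon_{ik}$. Conditionally on $\bm{\gamma}$ (the parameters are fixed), the design is deterministic. By A2 the Gram matrix $\sum_{ik}\ap_{ik}\ap_{ik}'$ has eigenvalues of order $n$; I would state this as an explicit design-regularity condition, which is implicit in A2.

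A bias term $\sum_{ik}\ap_{ik}u_{ik}\gamma_i$ then appears, and it need not be small. There are two ways to handle it. The preferred route is within-unit differencing on replicated units combined with a GLS step. A simpler alternative treats $\gamma_i$ as i.i.d.\ from $g_0$, so that after centering the fixed-effects design includes an intercept absorbing the mean of $g_0$. In that case $\hat{\bbeta}-\bbeta$ is a normalized sum of $O(n)$ independent mean-zero terms with bounded fourth moments (by A2's fourth-moment condition). Rosenthal's inequality then gives $\ex\|\hat\bbeta-\bbeta\|^4=O(n^{-2})\le O(l_n^4)$.

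If instead differencing is used, only $\sum_{i:K_i>1}(K_i-1)\asymp n\eta_n$ contrasts are available. The same Rosenthal argument gives $\ex\|\hat\bbeta-\bbeta\|^4=O((n\eta_n)^{-2})$. Taking whichever estimator has the better rate, or combining them, yields the claimed $l_n^{-4}$ bound in both regimes.

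\textbf{Step 2: the variance.} Use the replicated units. For unit $i$ with $K_i>1$, project $(y_{ik}-\ap_{ik}'\hat\bbeta)_k$ onto the orthogonal complement of $(u_{ik})_k$. This removes $\gamma_i$ exactly and leaves $\sigma\,P_i\bm\epsilon_i$ plus a plug-in error linear in $\hat\bbeta-\bbeta$.

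Define $\hat\sigma^2$ as the pooled residual sum of squares divided by $\sum_i(K_i-1)$. Its oracle part is a scaled $\chi^2$ variable with $\asymp n\eta_n$ degrees of freedom, so it has error $O_p((n\eta_n)^{-1/2})$. The plug-in part is bounded by $O(\|\hat\bbeta-\bbeta\|^2+\|\hat\bbeta-\bbeta\|\cdot(n\eta_n)^{-1/2})$ using Cauchy–Schwarz and Step 1. Both parts are $O_p(l_n)$.

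Finally, set $\hat\sigma\leftarrow\max(\hat\sigma,c)$ with $c<\sigma$ (for instance $c$ a small known lower bound, or $c=\sigma/2$ if a known bound is available). Truncation changes $\hat\sigma$ only on an event of vanishing probability. The delta method then transfers the rate to $\hat\sigma^{-1}-\sigma^{-1}=O_p(l_n)$.

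\textbf{Main obstacle.} The main difficulty is Step 1: controlling the $\gamma$-induced bias and the fourth moment uniformly, using only A2's $\limsup n^{-1}\sum\gamma_i^4<\infty$. This requires conditioning on $\bm\gamma$, so independence of the $\gamma_i$ cannot be used. I would first try the differencing or GLS estimator on replicated units, where $\gamma$ cancels exactly. I would then fall back to the $n^{-1/4}$ branch, which is the slower rate allowed by $l_n$, via a sample-splitting bound in which $n^{-1}\sum_{ik}\ap_{ik}u_{ik}\gamma_i$ is controlled by its fourth moment. This looseness is what produces the $n^{-1/4}$ term.
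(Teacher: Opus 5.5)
There is a genuine gap in your Step 2. Your only estimator of $\sigma$ is built from the replicated units, so its oracle error is of order $(n\eta_n)^{-1/2}$. Because $l_n=\min\{n^{-1/4},(n\eta_n)^{-1/2}\}$ is the \emph{faster} of the two rates, this error is $O_p(l_n)$ only when $\eta_n\gtrsim n^{-1/2}$. In the replicate-scarce regime $\eta_n=o(n^{-1/2})$ one has $l_n=n^{-1/4}\ll(n\eta_n)^{-1/2}$; for example, $\eta_n=n^{-3/4}$ gives only the rate $n^{-1/8}$. If $n\eta_n$ stays bounded, your estimator is not even consistent. So the claim ``Both parts are $O_p(l_n)$'' is false there, and the all-unit $n^{-1/4}$ estimator of $\sigma$ promised in your opening paragraph is never constructed.

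The paper supplies this branch as its Regime II, used when $\liminf n^{1/2}\eta_n=0$. It partitions the units into $B_n\asymp\sqrt n$ batches of size $\asymp\sqrt n$ and aggregates within batches. It then fits OLS to the $B_n$ aggregated observations, with the random effect absorbed into the error, and estimates $\sigma$ from the residuals. The $O(B_n^{-1})$ variance gives the $n^{-1/4}$ rate. When you write such a branch out, note that the residual variance of a regression with $u_{ik}\gamma_i$ in the error estimates $\sigma^2$ plus a random-effect contribution. You must separate the two, for instance through variation of $u_{ik}^2$ across units. Without replicates and with $\bm{\gamma}$ held fixed, $\sigma$ is not identified at all.

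For the same reason, the fallback in your ``Main obstacle'' paragraph does not work. Conditionally on $\bm{\gamma}$, the bias $n^{-1}\sum_{i,k}\ap_{ik}u_{ik}\gamma_i$ is deterministic, and A2's bound $\limsup_n n^{-1}\sum_i\gamma_i^4<\infty$ does not make it vanish (take $\gamma_i\equiv 1$). What removes it is the i.i.d.\ randomness of the $\gamma_i$; the paper invokes a CLT within batches. That is exactly what your route (a) in Step 1 uses, so route (a), not a conditional argument, is the right tool in the scarce regime. In route (a), however, the mean of $g_0$ enters as $u_{ik}\,\ex_{g_0}\gamma$. It is therefore absorbed by adding $u_{ik}$ as a regressor, with $(\ap_{ik},u_{ik})$ jointly non-degenerate, not by an intercept unless $u_{ik}$ is constant. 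With that fix, route (a) gives $\ex\|\hat{\bbeta}-\bbeta\|^4=O(n^{-2})$, which is $O(l_n^4)$ since $l_n\ge n^{-1/2}$; this is sharper than the paper's batching rate for $\bbeta$. Your replicate-rich argument (projection onto the orthogonal complement of $(u_{ik})_k$, then pooled residual variance on $\sum_i(K_i-1)$ degrees of freedom) is the paper's Regime I contrast estimator, using all replicates rather than only the first two.
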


A detailed proof is given in the Supplementary Material. The construction proceeds in two regimes. First suppose 
\(\displaystyle \liminf_{n\to\infty} n^{1/2}\eta_n>0\). 
Then, a significant proportion of units has replicates, and so, we form contrasts that are independent of the random-effect \(\bm{\gamma}\) for each replicated unit. Let
$\mathcal{I}=\{i:1\le i\le n,\;K_i\ge 2\}$,
and define, for \(i\in\mathcal{I}\),
\[
f(i,\beta)=\{u_{i2}(Y_{i1}-\bm{x}'_{i1}\bbeta)-u_{i1}(Y_{i2}-\bm{x}'_{i2}\bbeta)\}\,(u_{i1}^2+u_{i2}^2)^{-1/2}.
\]
Set
$\hat\beta=\arg\min_{\beta}\sum_{i\in\mathcal{I}} f^2(i,\beta)$ and $\hat\sigma^2=|\mathcal{I}|^{-1}\sum_{i\in\mathcal{I}} f^2(i,\hat\beta)$.
Under Assumptions~\hyperref[assumption2]{A2}-~\hyperref[assumption3]{A3}, these estimators satisfy \eqref{eq:consistent} with rate \(|\mathcal{I}|^{-1/2}\), which (since \(|\mathcal{I}|\) is of order \(n\eta_n\)) yields the rate \((n\eta_n)^{-1/2}\).

In the complementary regime, where replicates are scarce, we employ batching. Partition the \(n\) units into \(B_n\) disjoint batches of size \(\sqrt{n}\) (so \(B_n\approx\sqrt{n}\)). Aggregate the responses within each batch and fit a (misspecified) linear model that ignores the random-effect mean to these aggregated observations; apply ordinary least squares to estimate \(\bbeta\) and \(\sigma\) on the aggregated data. The variance of the resulting estimators under the misspecified model is \(O(B_n^{-1})=O(n^{-1/2})\), hence the parameter estimation error is \(O_p(n^{-1/4})\). Moreover, since the random effects \(\{\gamma_i\}\) are i.i.d.~and satisfy Assumption~\hyperref[assumption2]{A2}, the bias induced by the misspecification is also \(O(n^{-1/4})\). As the model is Gaussian, we also get $L_4$ convergence of the fixed effects at the same rate. Thus, the batching estimator attains the rate \(n^{-1/4}\).

Combining the two constructions yields estimators \(\hat{\bm{\beta}}\) and \(\hat{\sigma}\) with convergence rate \(l_n = \min\{n^{-1/4}, (n\eta_n)^{-1/2}\}\), as stated in the result.

\subsection{Estimating predictive log-likelihood in KL risk}\label{sec.3.1}
For any fixed $\bm{\beta}$, $\sigma$, in model \eqref{eq:model.11}-\eqref{eq:model.12}, $\tilde W_{ik}$ is Gaussian with the same mean as $v_{ik}Z_{i}$ but lower variance. To construct an estimate of $R_{2,n}[\bm \theta,\cb](g)$, we plan to use estimates of $\tilde W_{ik}$ within the log-marginal likelihood that are based on a collection of $\{Z_{j}:j=1,\ldots,n\}$. Each of those estimates have the same variance as $\tilde W_{ik}$ but not necessarily the same mean. However, as $\gamma_i$'s are exchangeable, we expect the average of the log-likelihood to well-estimate the predictive log-likelihood at $\tilde W_{ik}$. 

For each $i=1,\ldots,n$ and $k=1,\ldots,\Kf_i$ define the set:
$S_{ik}=\{j \in \{1,\ldots,n\}:  u_j^2-u_{i}^2-v_{ik}^2\geq 0\}$.
For all $j \in S_{ik}$ define,  
$d_{ikj}=u_{j}^{-2}v_{ik}^{2}((v_{ik}^{2}+u_{i}^{2})^{-1}u_{j}^{2}-1)$.
By  construction note that $d_{ikj}\geq 0$, { so we can add Gaussian noise with the right scale to $v_{ik}Z_i$ to match the predictive variance target without destabilizing the construction.} 
Next, for each $i=1,\ldots,n$ and $k=1,\ldots,\Kf_i$ and $j \in S_{ik}$, we construct a new variable by adding scaled i.i.d. standard normal noise $\zeta_{ikj}$: 
$$\hat W_{ikj}=v_{ik}Z_i(\hat{\bm{\beta}},\hat \sigma) + d_{ikj}^{1/2} \hat \sigma 
\zeta_{ijk}~.$$
Note that, by construction we have:
$\text{Var}( \hat W_{ikj}) {=}\text{Var}(\tilde W_{ik}(\bbeta,\sigma))+O_p(n^{-\alpha})$ and $\ex(\hat W_{ikj}-\tilde W_{ik}(\bbeta,\sigma))=v_{ik}\red{(\gamma_{j}-\gamma_{i})}+O_p(n^{-\alpha})$. 
Let $|S_{ik}|$ denote the cardinality of the set $S_{ik}$. If $|S_{ik}|>0$, we estimate $\ex_{\gamma_i} \big\{\log \tilde m_g(\tilde W_{ik}(\bbeta,\sigma);u_i,v_{ik},\sigma)\big\}$ in \eqref{eq:risk.2.2} by 
\begin{align}\label{eq:r_ik}
\hat r_{ik}(g)=|S_{ik}|^{-1} \sum_{j\in S_{ik}}\log \tilde m_g(\hat W_{ikj};u_i,v_{ik},\hat \sigma).
\end{align}
Let $\mathcal{I}(h)
=\{(i,k): |S_{ik}|\geq h, i=1,\ldots,n \text{ and } k=1,\ldots, \Kf_i \}$. For any fixed $h \in \mathbb{N}$, we propose to estimate $R_{2,n}[\bm \theta,\cb](g)$ by 
\begin{align}\label{eq:hat.R2}
\tilde  R_{2,n}^g(h)=\bigg(\sum_{i=1}^n \Kf_i\bigg)^{-1}\sum_{i=1}^n \sum_{k=1}^{\Kf_i} \hat r_{ik}(g) \,\,1\{(i,k) \in \mathcal{I}(h)\}~.
\end{align}
In \eqref{eq:hat.R2}, \(h\) is a tuning parameter, which, unless otherwise specified, is set to \(1\). We also allow \(h\) to vary with \(n\). In Section~\ref{sec:simulation}, we provide a comparative analysis of the behavior of our risk estimator for different choices of the sequence \((h_n)_{n \geq 1}\). 

The rationale behind our proposed KL risk estimator is that though each of $\hat W_{ijk}$s used in \eqref{eq:r_ik} have different mean than $\tilde W_{ik}$, as $\bm{\gamma}$ is an exchangeable sequence, the average risk estimator with be a good estimate of the average risk provided we have enough \emph{mixing} of $Z_i(\hat{\bbeta}, \hat \sigma)$ which heuristically can be thought of is that the number of unique members of $\{Z_i(\hat{\bbeta}, \hat \sigma): 1 \leq i \leq n \}$  used in  $\hat R_{2,n}^g(h)$ is large. We prove this rigorously in Section~\ref{sec:theory}. 

Our use of sample splitting differs from that in \citet{Brown2013Poisson}, where the split variables have identical means but different variances. For predictive KL risk, \(\tilde W_{ik}\) has lower variance than \(Z_i\), so sample splitting based solely on \(Z_i\) does not yield a reliable estimate of the marginal log-likelihood. Instead of splitting \(Z_i\), we match the variance of \(v_{ik} Z_i\) to that of \(\tilde W_{ik}\) using \(d_{ikj}\) and additional noise.

Note that, $\hat W_{ikj}$ is a random quantity as it was constructed by adding scaled white noise $\zeta$. We denote its dependence on white noise by using  $\hat W_{ikj}(\zeta)$. Consider enumerating  $\ex_{\zeta}\big\{\log \tilde m_g(\hat W_{ikj}(\zeta);u_i,v_{ik},\hat \sigma)\big\}$ where the expectation of the predictive log-likelihood is taken over white noise. We finally propose the following estimator for $R_{2,n}$:
\begin{align}\label{eq:hat.R2_again}
\hat R_{2,n}^g (h_n)=\bigg(\sum_{i=1}^n \Kf_i\bigg)^{-1}\sum_{i=1}^n \sum_{k=1}^{\Kf_i} |S_{ik}|^{-1} \sum_{j\in S_{ik}}\ex_{\zeta}\big\{\log \tilde m_g(\hat W_{ikj}(\zeta);u_i,v_{ik},\hat \sigma)\big\} \,\,\mathbf{1}\{(i,k) \in \mathcal{I}(h_n)\}
\end{align}

We prescribe to use this in \eqref{eq:r_ik} and so, the resultant risk estimate is not random and also has reduced variance than before (see \citet{Brown2013Poisson} for details on this Rao-Blackwellization step). 
Plugging in these two risk component estimates from \eqref{eq:r1.hat} and \eqref{eq:hat.R2} in \eqref{eq:risk.2} we get the risk estimate $\hat R_{n}^{\,g}(h_n)$.

\section{Theory}\label{sec:theory}
We assume that the covariates follow an exchangeable design. 
Let $\{(u_i,v_i): i=1,\ldots,n\}$ be i.i.d.~with marginal densities 
$f_u$ and $f_v$. To establish the asymptotic efficiency of the proposed method, we impose the following additional assumptions.

\noindent\textbf{A4.}\label{assumption4}
The random variables $U$ and $V$ are bounded and $\inf_{a \in \mathbb{A}} {f_u(a)}/{f_v(a)} > 0,$ where $\mathbb{A}=\{a: f_v(a)>0\}$. Also, $f_u$ is continuous and strictly positive on the boundary.

\noindent\textbf{A5.}\label{assumption5}
$\mathcal{G}$ has uniformly bounded fourth moments, i.e.,
$\sup_{g \in \mathcal{G}} \int g^4(a)\, da < \infty$.

\noindent\textbf{A6.}\label{assumption6}
The sequence $\bm{\gamma} = (\gamma_1,\ldots,\gamma_n)$ is exchangeable.

\noindent\textbf{A7.}\label{assumption7}
The sequence $\eta_n$ satisfies $\liminf_{n \to \infty}  n^{q} \eta_n >0$ for some $q \in [0,1)$.

Assumption~\hyperref[assumption4]{A4} ensures past–future covariate support alignment and prevents degeneracies in the predictive log-likelihood. Assumption~\hyperref[assumption5]{A5} guarantees integrability and variance control for the Gaussian–convolution marginals, making the risk components well-defined and enabling uniform convergence. Assumption~\hyperref[assumption6]{A6} reiterates the definition of random effects while assumption~\hyperref[assumption7]{A7} regulates the proportion of replicates in the design. Under these assumptions, we now state our main result on the asymptotic behavior of our proposed method.  

\begin{theorem}\label{thm.risk.conv}
    Under assumptions~\hyperref[assumption1]{A1}-~\hyperref[assumption7]{A7}, for $h_n$ such that $\limsup_{n \to \infty} n^{-1} \eta_n^{-1} h_n = 0$ and $g \in \mathcal{G}$, the estimate $\hat R_n^{\,g}(h_n)$ of the KL predictive risk of the \prde\, $\hat{p}[\hat{\bbeta},\hat \sigma, g]$ in \eqref{eq:model.11}-\eqref{eq:model.12} satisfies: 
    $$\R_n[\bm{\theta},\cb](g)- \hat R_n^{\,g}(h_n)=O_p(c_n) \text{ as } n \to \infty,$$
    where, $c_n=\max\{n^{-(1-q)/2}\,( \log n),n^{-\alpha}\}$.
\end{theorem}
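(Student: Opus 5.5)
Starting from Theorem~\ref{prop.bayes.risk}, $\R_n[\bm\theta,\cb](g)=a_n+R_{1,n}-R_{2,n}+O_p(n^{-\alpha})$, and the estimate is $\hat R_n^{\,g}(h_n)=a_n+\hat R_{1,n}^{\,g}-\hat R_{2,n}^g(h_n)$. The design term $a_n$ cancels exactly, so it suffices to prove two bounds: $R_{1,n}-\hat R_{1,n}^{\,g}=O_p(\max\{n^{-1/2},n^{-\alpha}\})$ and $R_{2,n}-\hat R_{2,n}^g(h_n)=O_p(c_n)$. In both, I first replace $(\hat\bbeta,\hat\sigma)$ by $(\bbeta,\sigma)$. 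For this I use a mean-value expansion of $\log m_g$ and $\log\tilde m_g$ in the argument and in $\sigma$. For Gaussian convolutions, the score $\partial_a\log m_g(a)$ is bounded by $C(1+|a|)$ uniformly over $g\in\mathcal G$ (via Tweedie's formula and A5), and the derivative in $\sigma$ has the same kind of bound. Combining A1, A2 and Cauchy--Schwarz, the plug-in error is $O_p(n^{-\alpha})$ in both $\hat R_{1,n}$ and $\hat R_{2,n}$. The Rao--Blackwellised noise $\zeta$ enters only through a Gaussian expectation, so the same bound holds after integrating it out.

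For $R_{1,n}$, once the parameters are known, $\hat R_{1,n}-R_{1,n}$ is an average of independent centred terms $\log m_g(v_{ik}Z_i)-\ex\log m_g(v_{ik}Z_i)$. Their variances are bounded by $C(1+\gamma_i^4)$ because $|\log m_g(a)|\le C(1+a^2)$. By A2, A3 and Chebyshev this average is $O_p(n^{-1/2})$.

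The main work is $R_{2,n}$. With parameters known, and after integrating out $\zeta$, $\hat W_{ikj}$ has mean $v_{ik}\gamma_j/\sigma$. The variance is matched: $v_{ik}^2/u_j^2$ from $Z_j$ (reading the construction as $v_{ik}Z_j$ indexed by $j$) plus $d_{ikj}$ equals $v_{ik}^2/(u_i^2+v_{ik}^2)$, which is the variance of $\tilde W_{ik}$. Hence $\ex_\zeta\ex\log\tilde m_g(\hat W_{ikj})=\psi_{ik}(\gamma_j)$, where $\psi_{ik}(\gamma):=\ex_\gamma\log\tilde m_g(\tilde W_{ik})$. I then decompose
\[
\hat R_{2,n}-R_{2,n}=\underbrace{\kappa_n^{-1}\sum_{(i,k)\in\mathcal I(h_n)}|S_{ik}|^{-1}\sum_{j\in S_{ik}}\{\xi_{ikj}-\psi_{ik}(\gamma_j)\}}_{T_1}+\underbrace{\kappa_n^{-1}\sum_{(i,k)}\Big\{|S_{ik}|^{-1}\sum_{j\in S_{ik}}\psi_{ik}(\gamma_j)\mathbf 1_{\mathcal I(h_n)}-\psi_{ik}(\gamma_i)\Big\}}_{T_2},
\]
where $\xi_{ikj}$ is the Rao--Blackwellised log-marginal. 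The term $T_1$ is a noise fluctuation with reuse, since each $Z_j$ appears in many $S_{ik}$. I rewrite it as $\sum_j w_j\{\xi_j-\ex\xi_j\}$ with weights $w_j=\kappa_n^{-1}\sum_{(i,k):j\in S_{ik}}|S_{ik}|^{-1}$. Independence across $j$ gives $\mathrm{Var}(T_1)\lesssim\sum_j w_j^2$, so the task is to control $\max_j w_j$. Here A4 and A7 enter: replicated units have $u_j^2\ge 2\min u^2$, and $S_{ik}$ consists of units with large $u_j$. I will show $|S_{ik}|\gtrsim n\eta_n\gtrsim n^{1-q}$ with high probability for all but a negligible set of $(i,k)$. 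This uses A4 for $f_u/f_v$ support overlap, a binomial/Bernstein concentration, and a union bound, which is where the $\log n$ factor comes from. Then $w_j\lesssim (n\eta_n)^{-1}\log n$, so $T_1=O_p(n^{-(1-q)/2}\log n)$. The indicator loss from $(i,k)\notin\mathcal I(h_n)$ is negligible because $h_n=o(n\eta_n)$.

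For $T_2$ I use exchangeability (A6) together with the i.i.d.\ covariates. Conditionally on the design, $\gamma_j$ with $j\in S_{ik}$ is exchangeable with $\gamma_i$. So the average $|S_{ik}|^{-1}\sum_j\psi_{ik}(\gamma_j)$ concentrates around $n^{-1}\sum_l\psi_{ik}(\gamma_l)$, as does the outer average of $\psi_{ik}(\gamma_i)$. Both differences are controlled by a U-statistic/permutation variance bound of order $(n\eta_n)^{-1/2}$, using $|\psi_{ik}(\gamma)|\le C(1+\gamma^2)$ and A2. I expect the main obstacle to be $T_1$ and $T_2$ together: handling the dependence created by sample reuse, which needs the uniform lower bound on $|S_{ik}|$ and the upper bound on $w_j$. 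The exchangeability step in $T_2$ also needs care, since the $S_{ik}$ are design-dependent. Collecting the pieces gives $O_p(\max\{n^{-(1-q)/2}\log n,n^{-\alpha}\})=O_p(c_n)$.
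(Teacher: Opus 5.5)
Your decomposition follows the paper closely: the $R_{1,n}$ part is Step 1, $T_1$ is Step 2 and $T_2$ is Step 3. Replacing $(\hat{\bbeta},\hat\sigma)$ pathwise before invoking independence is cleaner than the paper's Step 1. The gap is in the one step that carries the rate, namely the bound on the reuse weights.

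First, $\max_j w_j$ is not easier than the paper's dependency fraction; it equals it. The unit $j^*$ with the largest $u_j^2$ lies in every nonempty $S_{ik}$. Hence $w_{j^*}=\kappa_n^{-1}\sum_{(i,k)\in\mathcal{I}(h_n)}|S_{ik}|^{-1}$, which is exactly $D_n(h_n)$.

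Second, your mechanism for bounding it, namely ``$|S_{ik}|\gtrsim n\eta_n$ for all but a negligible set of pairs'' via Bernstein and a union bound, does not work.
\begin{itemize}
\item Conditionally on $(u_i,v_{ik})$, $|S_{ik}|$ is essentially $\mathrm{Bin}(n\eta_n,p_2)$ with $p_2=\mathbb{P}(W\ge u_i^2+v_{ik}^2)$.
\item In typical designs $p_2(U_i^2+V_{ik}^2)$ has density bounded away from zero near $0$. For example, with no covariate shift ($V\stackrel{d}{=}U$) it is exactly uniform on $(0,1)$.
\item So for any fixed $c$, a fraction of order $c$ of the pairs has $|S_{ik}|\le c\,n\eta_n$, and this set is not negligible.
\item Roughly $\eta_n^{-1}$ pairs sit at each level $s\in[h_n,c\,n\eta_n]$, and each puts weight $1/s$ on the top units. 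Together they contribute about $\kappa_n^{-1}\sum_{s\ge h_n}\eta_n^{-1}s^{-1}\asymp \log(n\eta_n/h_n)/(n\eta_n)$ to $w_{j^*}$.
\end{itemize}
The $\log n$ in $c_n$ is this harmonic sum, not a union-bound artifact. Concentrating each $|S_{ik}|$ around its mean cannot control it; you need distributional control of how often $p_2$ is small.

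That control is what Lemma~\ref{lem:order_for_dn_with_replicates} (built on Lemma~\ref{lem:order_for_dn_no_replicate}) supplies:
\begin{itemize}
\item it uses $\mathbb{E}[\mathbf{1}\{S>0\}/S]\le 2/((m+1)p)$ for $S\sim\mathrm{Bin}(m,p)$ with $m=n\eta_n$;
\item it uses $\mathbb{P}(0<p_{2i}\le y)\le Ky$, which comes from the boundary condition in A4;
\item integrating by parts then gives $\sum_{(i,k)\in\mathcal{I}(h_n)}|S_{ik}|^{-1}=O_p(\eta_n^{-1}\log n)$, i.e.\ $w_{j^*}=O_p(\log n/(n\eta_n))$.
\end{itemize}
Once that lemma replaces your union-bound step, your inequality $\sum_j w_j^2\le \max_j w_j\sum_j w_j\le \max_j w_j$ closes $T_1$. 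The same weights control the fluctuation part of $T_2$.

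Separately, your claim that the indicator loss from pairs outside $\mathcal{I}(h_n)$ is negligible ``because $h_n=o(n\eta_n)$'' yields only $o_p(1)$, not $O_p(c_n)$. The excluded fraction has two parts:
\begin{itemize}
\item pairs with $1\le|S_{ik}|<h_n$, a fraction of order $h_n/(n\eta_n)$;
\item pairs with $S_{ik}=\emptyset$, which include replicated units whose own $u_i^2$ is near the top of the support. These make up a fraction of order $\eta_n$, and the condition on $h_n$ does not touch them.
\end{itemize}
The paper's Step 3 passes over the same point: it equates $\mathbb{E}\hat R^{g}_{2,n}$ with $\mathbb{E}R^{g}_{2,n}$ by the tower property and ignores the indicator. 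To actually reach $c_n$ you must bound $\kappa_n^{-1}\sum_{(i,k)\notin\mathcal{I}(h_n)}\psi_{ik}(\gamma_i)$ explicitly. In general that needs extra restrictions such as $h_n=O(c_n\,n\eta_n)$ and $\eta_n=O(c_n)$, or a separate treatment of those pairs.
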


Assumption~\hyperref[assumption7]{A7} together with Lemma~\ref{lem:lemma1} implies that we have
$\alpha \ge 2^{-1}\max\{1-q,\,1/2\}$.
In particular, as $q \downarrow 0$ (corresponding to a nonvanishing fraction of units having replicates), the rate $c_n$ in the risk bounds approaches the parametric rate. The convergence rate deteriorates as $q$ increases, and consistency of the risk estimates is lost in the limit $q \uparrow 1$.
Theorem~\ref{thm.risk.conv}  does not address the highly data-scarce regime in which only finitely many units have replicates, that is, when $\limsup_{n\to\infty} n\eta_n < \infty$. This setting is discussed in Section~\ref{sec:simulation}.

Asymptotically, the tuning parameter $h_n$ can be as large as 
$o_p(n\eta_n)$, although in practice we typically consider small fixed values. In Section~\ref{sec:simulation}, we examine the empirical performance of the procedure for different choices of $h_n$.

Theorem~\ref{thm.risk.conv} establishes that the risk estimates are asymptotically pointwise consistent for any $g$ satisfying Assumption~\hyperref[assumption5]{A5}. 
With the additional assumption~\hyperref[assumption8]{A8} below on the class \(\mathcal{G}\), we next show that the proposed \prde\ is asymptotically optimal, in the sense that its risk converges to that of the oracle \prde. The first two restrictions in \hyperref[assumption8]{A8} are standard conditions on the class of priors. The third condition is a regularity assumption on the posterior mean based on the derivative of the marginal log-likelihood. It ensures that the sensitivity of the posterior mean grows at most quadratically, thereby enabling stable risk estimation.

\noindent\textbf{A8.}\label{assumption8}
{The class $\mathcal{G}$ of distributions satisfies the following:
\begin{enumerate}
\item[(i)] The
\(L^1\)-metric entropy of $\mathcal{G}$ satisfies
\[
\log N(\varepsilon,\mathcal G,\|\cdot\|_1) < C_{\mathcal{G}}\log \varepsilon^{-1},
\quad 0<\varepsilon < \mathrm{diam}(\mathcal{G}),
\]
for some $C_{\mathcal{G}}>0$,
where 
$\log N(\varepsilon,\mathcal G,\|\cdot\|_1)$ is the metric entropy of $\mathcal G$ with respect to $\|\cdot\|_1$,
and
$\mathrm{diam}(\mathcal{G})$ is the diameter of $\mathcal{G}$;
\item[(ii)] There exist constants \(M<\infty\) and \(\pi_*>0\) such that
\[
\inf_{g\in\mathcal G} g([-M,M])\ge \pi_*;
\]
\item[(iii)] For $\sigma\in(0,\infty)$, there exists $L_{\sigma}>0$ independent of $g\in\mathcal{G}$ such that for any $x\in\mathbb{R}$,
\begin{align}
\left|\frac{\partial}{\partial x}\log \int \phi(x\,;\,\gamma,\sigma) g(\gamma)d\gamma
\right|
+
\left|\frac{\partial}{\partial \sigma}\log \int \phi(x\,;\,\gamma,\sigma) g(\gamma)d\gamma
\right| 
\le L_{\sigma}(1+|x|+|x|^{2});
\label{eq: log marginal derivative}
\end{align}
\end{enumerate}
}

\begin{corollary}\label{thm.main_criteria_result}
Let $\max_{i=1,\ldots,n}|\gamma_{i}|=O_p(a_n)$. Then, 
under assumptions~\hyperref[assumption1]{A1}-\hyperref[assumption8]{A8}, the KL risk of our proposed predictor given by \eqref{eq:proposed} satisfies: 
$$\R_n[\cb](\bm{\theta},\hat{p}[\hat{\bm{\beta}},\hat{\sigma}, \hat{g}])
-\R_n[\cb](\bm{\theta},p^{\sf OR})
=O_p(c_n \,a_{n}\log n)
\text{ as } n \to \infty,$$
where, $c_n=\max\{n^{-(1-q)/2}\,( \log n),n^{-\alpha}\}$.
\end{corollary}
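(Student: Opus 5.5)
The plan is the standard oracle-inequality argument for estimated-risk minimizers, with Theorem~\ref{thm.risk.conv} upgraded from pointwise to uniform consistency over $\mathcal{G}$. By Theorem~\ref{prop.bayes.risk}, every member of $\mathcal{P}_{\mathcal{G}}$ that is relevant here has the form $\hat p[\hat{\bbeta},\hat\sigma,g]$. Its risk is $\R_n[\bm\theta,\cb](g)$, up to the common design term $a_n[\cb]$ and an $O_p(n^{-\alpha})$ remainder. I would first reduce the oracle comparison to the plug-in class. This uses the uniform version of the remainder in Theorem~\ref{prop.bayes.risk}, controlled through A8(iii), so the oracle over $(\be,\se,g)$ is approximated by the oracle $g^{\ast}$ over $\mathcal{G}$ at $(\hat{\bbeta},\hat\sigma)$ up to $O_p(c_n)$.

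Write $\Delta_n=\sup_{g\in\mathcal{G}}|\R_n[\bm\theta,\cb](g)-\hat R_n^{\,g}(h_n)|$. Since $\hat g$ minimizes $\hat R_n^{\,g}$,
\[
\R_n(\hat g)-\R_n(g^{\ast})\le \{\R_n(\hat g)-\hat R_n^{\hat g}\}+\{\hat R_n^{g^\ast}-\R_n(g^{\ast})\}\le 2\Delta_n .
\]
Everything therefore reduces to showing $\Delta_n=O_p(c_n a_n\log n)$.

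\textbf{Step 1 (finite net).} Using A8(i), I would take an $\varepsilon_n$-net $\{g_1,\dots,g_N\}$ of $\mathcal{G}$ in $L^1$ with $\varepsilon_n=n^{-1}$. Then $\log N\le C_{\mathcal G}\log n$.

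\textbf{Step 2 (maximal inequality on the net).} I would apply the proof of Theorem~\ref{thm.risk.conv} to each $g_l$. That proof controls $\hat R_{1,n}$ by an average of independent terms, and $\hat R_{2,n}$ by a sample-reuse average whose dependence is handled via exchangeability (A6) and the sizes $|S_{ik}|\gtrsim n\eta_n$ (A4, A7). I would not stop at variance bounds, but extract sub-exponential (Bernstein-type) tail bounds for the centered sums. The log-marginals are controlled by A8(ii) and A8(iii): A8(ii) gives $\log m_g(x)\ge -C(1+x^2)$, and A8(iii) gives the matching upper bound. The summands therefore have envelopes quadratic in $Z_i$, $\hat W_{ikj}$, and hence in $\max_i|\gamma_i|=O_p(a_n)$. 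A union bound over $N$ elements then costs a factor $\sqrt{\log N}\lesssim \log n$. The envelope contributes the factor $a_n$, which explains the rate $c_n a_n\log n$. (The $\log n$ already inside $c_n$ absorbs the $\sqrt{\log n}$ from the union bound; I will not try to sharpen this.)

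\textbf{Step 3 (between net points).} I need Lipschitz continuity of $g\mapsto \log m_g(x)$ and $g\mapsto\log\tilde m_g(x)$ in $L^1$. We have $|m_g-m_{g'}|\le \|\phi\|_\infty\|g-g'\|_1$. By A8(ii), $m_g(x)\ge \pi_*\inf_{|\gamma|\le M}\phi(x;\gamma,\cdot)\ge c\,e^{-C x^2}$. Hence
\[
|\log m_g(x)-\log m_{g'}(x)|\le C e^{Cx^2}\|g-g'\|_1 .
\]
This exponential weight is the main obstacle: it is unbounded in $x$. I would handle it by truncation. On the event $\max_i|Z_i|,\max|\hat W_{ikj}|\le C\sqrt{\log n}+Ca_n$, which has probability tending to one by Gaussian maxima, the weight is at most a polynomial in $n$ times $e^{Ca_n^2}$. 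Choosing $\varepsilon_n=n^{-K}e^{-Ca_n^2}$ for a large $K$ keeps $\log N\lesssim \log n+a_n^2$. If that is too crude, I would fall back on A8(iii): write the difference as an integral along the segment $g_t=tg+(1-t)g'$, and bound the derivative via posterior-ratio arguments by a polynomial in $|x|$. I expect this fallback to be the route consistent with the stated $a_n\log n$ factor. The $\hat\sigma$-dependence is absorbed the same way, using the $\partial_\sigma$ bound in A8(iii) together with A1.

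Combining Steps 1--3 gives $\Delta_n=O_p(c_na_n\log n)$. The displayed inequality then yields the claim.
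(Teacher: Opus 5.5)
Your architecture matches the paper's. Both use the oracle inequality to reduce everything to $\sup_{g\in\mathcal G}$ of the risk-estimation error plus a uniform plug-in term. Both take an $L^1$-net from A8(i), apply exponential deviation bounds with a union bound over the net, and prove continuity in $g$ between net points after removing the region where $m_g$ is small. There are, however, genuine gaps in two places.

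\textbf{The sample-reuse step.} Your justification for this step rests on a false premise: $|S_{ik}|\gtrsim n\eta_n$ does not hold uniformly over the retained pairs. Retained pairs only need $|S_{ik}|\ge h_n$, and the theorem imposes $h_n=o(n\eta_n)$, with $h_n=1$ allowed. Pairs with $u_i^2+v_{ik}^2$ near the top of the support of the replicated $u_j^2$ have $|S_{ik}|$ as small as $h_n$. These small $s_i$ are exactly what produce the logarithm in Lemma~\ref{lem:order_for_dn_with_replicates}.

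The missing device is to regroup $\hat R^{g}_{2,n}$ by the reused index $j$ (take $\Kf_i=1$). With $w_i=\mathbf 1\{s_i\ge h_n\}/s_i$, it becomes $n^{-1}\sum_j\Phi_j(Z_j)$. Here $\Phi_j$ collects all terms that reuse $Z_j$, is independent across $j$ conditionally on the design (up to the plug-in of $(\hat\bbeta,\hat\sigma)$), and has size $e_j=\sum_{i:\,j\in S_i}w_i$. The combinatorial bound of Lemma~\ref{lem:combinatorialbound} gives $\sum_je_j^2=\sum_{i,i'}w_iw_{i'}|S_i\cap S_{i'}|\le(\sum_iw_i\sqrt{s_i})^2\le n\sum_iw_i$. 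Combine this with the averaged binomial bound $\ex\sum_iw_i=O(\eta_n^{-1}\log n)$. The result is the variance proxy $n^{-2}\sum_je_j^2=O_p(\log n/(n\eta_n))$ that Bernstein plus the union bound needs. Only this averaged control is available; no pointwise lower bound on $|S_{ik}|$ is.

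\textbf{Continuity in $g$.} Your fallback in Step 3 does not work. A8(iii) bounds $\partial_x$ and $\partial_\sigma$ of $\log m_g$ and says nothing in the $g$-direction. Take unit kernel scale, $g'=\delta_0$ and $g=(1-\lambda)\delta_0+\lambda\delta_{\tau_2}$ with $\tau_2>0$; both lie in a grid class $\mathcal G_3$ satisfying A8. Then $m_g(x)/m_{g'}(x)=1+\lambda(e^{\tau_2x-\tau_2^2/2}-1)$. So $|\log m_g(x)-\log m_{g'}(x)|$ is of order one at $x\asymp\log(1/\lambda)$, while $\|g-g'\|_1=2\lambda$. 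No bound of the form $\mathrm{poly}(|x|)\,\|g-g'\|_1$ can hold. In the paper, A8(iii) is used only for the plug-in terms, via the mean-value theorem in $(\bbeta,\sigma)$.

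Your primary truncation route is the right one. It is the paper's regularization $\log\max\{m_g,\delta\}$ in disguise: the Lipschitz constant is $1/\delta$, and $\{m_g\le\delta\}\subset\{|x|\gtrsim\sqrt{\log(1/\delta)}\}$ by the quadratic lower bound. But it must also be carried out inside expectations. The population risk $\R_n(g)$ and the $\ex_\zeta$ in $\hat R^g_{2,n}$ are not controlled by a high-probability event on $\max_i|Z_i|$. For them you need the tail term $\ex[(1+X^2)\mathbf 1\{|X|>T\}]$ supplied by the two-sided envelope $|\log m_g(x)|\lesssim 1+x^2$.

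For this tail to be negligible for every $i$, the level $T$ must clear $\max_i|\gamma_i|=O_p(a_n)$. Your radius $\varepsilon_n=n^{-K}e^{-Ca_n^2}$ ensures this, whereas the paper's $\varepsilon_n=n^{-M}$ implicitly needs $a_n\lesssim\sqrt{\log n}$. With your radius, the factor $a_n$ in the rate arises through $\sqrt{\log N(\varepsilon_n)}\lesssim\sqrt{\log n}+a_n$.
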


Corollary~\ref{thm.main_criteria_result} applies to several commonly used prior families, three of which we highlight here. 
For spike-and-slab priors, let $g(\tau)=(1-\eta)\delta_0+(\eta a/2)e^{-a|\tau|}$ with hyper-parameters $(\eta,a)\in (\underline{\eta},\overline{\eta})\times(\underline{a},\overline{a})$,
where $0<\underline{\eta}\le \overline{\eta}<1$ and $0<\underline{a}\le \overline{a}<\infty$ are prespecified bounds
; this class $\mathcal{G}_1$ is widely used in sparse estimation \citep{rovckova2018spike}. 
For scaled Gaussian mixtures, let $g(\tau)=\sum_{l=1}^L \pi_l \phi(\tau;0,\nu_l)$, where $\{\nu_l\}_{l=1}^L$ is a fixed variance grid and $(\pi_1,\ldots,\pi_L)$ are mixture weights, so that $\mathcal{G}_2$ is parameterized by the $L-1$ free proportions. 
For discrete priors, let $g(\tau)=\sum_{l=1}^L \pi_l \delta_{\tau_l}$, where $\{\tau_l\}$ is a fixed grid and $\{\pi_l\}$ are probability weights, defining the class $\mathcal{G}_3$ with $L-1$ free weight parameters. The proof that these families satisfy \hyperref[assumption8]{A8}, used in Corollary~\ref{thm.main_criteria_result}, is provided in the appendix.

\begin{proposition}\label{prop.A8}
    The above discussed three prior families satisfy Assumption~\hyperref[assumption8]{A8}.
\end{proposition}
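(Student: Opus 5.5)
We verify the three parts of A8 separately for each family $\mathcal{G}_1,\mathcal{G}_2,\mathcal{G}_3$. Each family is a finite-dimensional, smoothly parameterized class over a compact parameter set. For (i), the plan is to show that the map from parameters to densities is Lipschitz from a compact subset of $\mathbb{R}^p$ into $L^1$. A covering of the parameter box then transfers to $\mathcal G$ and gives $\log N(\varepsilon,\mathcal G,\|\cdot\|_1)\le p\log(C/\varepsilon)$, which is the required bound with $C_{\mathcal G}$ of order $p$.

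The Lipschitz property is checked family by family. For $\mathcal{G}_3$ and $\mathcal{G}_2$, $\|g_\pi-g_{\pi'}\|_1\le \sum_l|\pi_l-\pi'_l|$, taking total variation for the atoms. For $\mathcal{G}_1$, the difference $\|g_{\eta,a}-g_{\eta',a'}\|_1$ is at most $2|\eta-\eta'|$ plus $\|\tfrac a2 e^{-a|\cdot|}-\tfrac{a'}2 e^{-a'|\cdot|}\|_1$. The second term is bounded by $C|a-a'|/\underline a$ by the mean value theorem in $a$. Part (ii) is also direct. For $\mathcal G_1$, $g([-M,M])\ge 1-\underline\eta\cdot 0\ge 1-\overline\eta>0$, since the spike carries mass $1-\eta\ge 1-\overline\eta$. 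For $\mathcal G_3$, take $M\ge\max_l|\tau_l|$, so $g([-M,M])=1$. For $\mathcal G_2$, take $M$ with $\min_l\Phi$-mass of $N(0,\nu_l)$ on $[-M,M]$ at least $1/2$, which holds since the grid is fixed and finite.

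Part (iii) is the main obstacle. Write $m(x,\sigma)=\int\phi(x;\gamma,\sigma)g(\gamma)d\gamma$. Then
\[
\partial_x\log m=\sigma^{-2}\bigl(\ex[\gamma\mid x]-x\bigr),\qquad \partial_\sigma\log m=-\sigma^{-1}+\sigma^{-3}\,\ex[(x-\gamma)^2\mid x],
\]
with posterior expectations under $g$. It therefore suffices to bound the first two posterior absolute moments: $\ex[|\gamma|\mid x]\le C(1+|x|)$ and $\ex[\gamma^2\mid x]\le C(1+|x|^2)$, uniformly over $g\in\mathcal G$.

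For $\mathcal G_3$ this is trivial, because $|\gamma|\le\max_l|\tau_l|$. For $\mathcal G_2$ the posterior is an explicit mixture of Gaussians, with component means $\nu_l^2x/(\nu_l^2+\sigma^2)$ and bounded variances. Hence $\ex[\gamma^2\mid x]\le \max_l(\nu_l^2x^2+\nu_l^2\sigma^2)/(\nu_l^2+\sigma^2)$, independent of the weights.

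For $\mathcal G_1$ the argument is a Laplace-slab computation. Under the slab alone, the posterior of $\gamma$ is a two-sided truncated Gaussian centred at $x\mp a\sigma^2$. Its first two moments are bounded by $C(1+|x|+a\sigma^2)^k$ with $a\le\overline a$. Mixing with the spike, which contributes $0$, only decreases these moments, since the posterior moment is a convex combination of the spike and slab posterior moments. This gives constants depending on $\sigma,\overline a$ only.

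Combining the two moment bounds yields \eqref{eq: log marginal derivative} with some $L_\sigma$ uniform in $g$. The delicate point I expect to check carefully is the uniformity in the slab case, namely bounding the truncated-normal moments independently of $\eta$ and $a$ in the allowed ranges.
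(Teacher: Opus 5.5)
Your proposal is correct and follows essentially the same route as the paper. Both use a finite-dimensional covering for (i) and direct mass bounds for (ii). For (iii), both reduce the two derivatives of $\log m$ to the first two posterior moments, dominate the spike-and-slab posterior moments by the slab ones, and bound the two half-line truncated Gaussian pieces; the paper does this last step with Mills-ratio bounds, which is exactly the check you flag.

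Two of your details go beyond the paper. Your Lipschitz bound for the parameter-to-density map sharpens the paper's appeal to mere $L^1$-continuity, which alone does not give $\log N(\varepsilon)\lesssim\log\varepsilon^{-1}$. Your bounded-support argument for $\mathcal G_3$ fills in the case the paper omits.
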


\subsection{Proof Overview}
To understand the mechanism driving the proof, consider the dependency fraction
\[
D_n(h_n)=\kappa_n^{-1}\sum_{i=1}^n\sum_{k=1}^{\Kf_i} |S_{ik}|^{-1}\,\mathbf{1}\{|S_{ik}|\geq h_n\}.
\]
Recall that $Z_i(\hat{\bbeta},\hat{\sigma})$s are reused in constructing the risk estimator $\hat R_{2,n}^g$. Heuristically, $D_n(h_n)$ quantifies the overall fraction of reuse of the individual $Z_i$’s. $|S_{ik}|$ denotes the effective amount of historical information contributing to the $(i,k)$th component of the risk estimate. When $|S_{ik}|$ is large, each reused $Z_j(\bbeta,\sigma)$ has only minor influence. In contrast, small pooling sets imply stronger individual influence and greater dependence among the units used in the risk estimate. A key step in the proof of Theorem~\ref{thm.risk.conv} is to establish asymptotic control of $D_n(h_n)$. Smaller values of $D_n(h_n)$ correspond to more effective mixing across units and faster convergence of the risk estimator.

For additional intuition regarding the growth of $D_n(h_n)$, consider the simplified highly data-scarce setting where $h_n=1$ and the $v_{ik}$ are uniformly small, in the sense that $\sup_{i,k} v_{ik}^2 \le \inf_{1\le i\neq j\le n}|u_i^2-u_j^2|$. In this case, $D_n(1)\sim \log n$ (for derivation: see Lemma~\ref{lem:order_for_dn_no_replicate} in \hyperref[sec:appendix]{Appendix}). The proof of Theorem~\ref{thm.risk.conv} requires a more delicate asymptotic analysis to control $D_n(h)$. We bound it by $\eta_n^{-1}\log n$. In Section~\ref{sec:simulation}, we investigate the behavior of $D_n(h)$ under various designs and choices of $h_n$ via simulation.

\subsection{Highly data scarce regime}\label{sec:theory.scare}
In the highly data-scarce regime, where only finitely many units have replicates so that $\limsup_{n\to\infty} n\eta_n < \infty$, the analysis of the previous section no longer applies. In constructing the risk estimator $\hat R_{2,n}^g(h)$, we ignored all $(i,k)$ pairs not belonging to $\mathcal{I}(h)$. In this regime, however, the proportion of such pairs is non-negligible and can no longer be disregarded.

Here, we consider a modified \prde\ that combines $\hat{p}[\hat{\bm{\beta}},\hat{\sigma},g]$ with the uniform-prior-based \prde\ $\hat p^{\textsf U}$, using the latter for $(i,k)$ pairs outside $\mathcal{I}(h)$:
\[
\hat{\bm{p}}_{\textsf{m}}[\hat{\bm{\beta}},\hat{\sigma},g](\tilde{\bm{Y}}; \bm{Y},\cc)
=\prod_{(i,k)\in\mathcal{I}(h)}
\phat[\hat{\bbeta},\hat\sigma,g](\tilde Y_{ik};Z_i,u_i,v_{ik})
\prod_{(i,k)\notin\mathcal{I}(h)}
\phat[\hat{\bbeta},\hat\sigma,\textsf U](\tilde Y_{ik};Z_i,u_i,v_{ik}),
\]
where, $\phat[\hat{\bbeta},\hat\sigma,g](\tilde Y_{ik};Z_i,u_i,v_{ik})
\propto
\int
\phi(\tilde Y_{ik}-\af_{ik}^{\prime}\hat{\bbeta};\,v_{ik}\gamma,\;\hat\sigma)\;
\phi(Z_{i}[\hat{\bbeta},\hat \sigma],\gamma ,\hat \sigma) \, d\gamma$. Thus, no shrinkage is applied to coordinates outside $\mathcal{I}(h_n)$. We define the \emph{improvement factor} \[\textsf{IF}_n(h_n)=|\mathcal{I}(h_n)|/\kappa_n,\] representing the proportion of coordinates to which shrinkage is applied. The risk 
of \(\hat{\bm p}_{\textsf m}\) follows directly from the proof of Theorem~\ref{prop.bayes.risk} and is given by \eqref{eq:risk.2.0} with the terms \(R_{1,n}\) and \(R_{2,n}\) restricted to \((i,k)\in\mathcal{I}(h)\) and with $\kappa_n$ replaced by $|\mathcal{I}(h_n)|$. The risk estimator is constructed as before. The following theorem provides the corresponding asymptotic guarantee.
 
\begin{theorem}\label{thm.risk.conv_no_repicates}
Under assumptions~\hyperref[assumption1]{A1}-~\hyperref[assumption6]{A6} and if \(\limsup_{n\to\infty} n\eta_n<\infty\), for any \(g\in\mathcal{G}\) we have
\[
\R_n^{\textsf m}[\bm{\theta},\cb](g)-\hat R_n^{\,g}(h_n)=O_p(c_n),
\qquad n\to\infty,
\]
where \(c_n=\max\{n^{-1/2}(\log n),\,n^{-\alpha}\}\).
\end{theorem}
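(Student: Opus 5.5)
The proof reuses the decomposition of Theorem~\ref{prop.bayes.risk}, restricted to the index set $\mathcal{I}(h_n)$. First, the argument that proves Theorem~\ref{prop.bayes.risk} factorizes over $(i,k)$ and nowhere uses replicates. Applied to $\hat{\bm p}_{\textsf m}$, it gives
\[
\R_n^{\textsf m}[\bm\theta,\cb](g)=a_n^{\textsf m}+R^{\textsf m}_{1,n}(g)-R^{\textsf m}_{2,n}(g)+O_p(n^{-\alpha}),
\]
where the sums run over $(i,k)\in\mathcal{I}(h_n)$ and are normalized by $|\mathcal{I}(h_n)|$. Coordinates outside $\mathcal{I}(h_n)$ use the uniform prior, so they contribute only the design term $\tfrac12\log(1+v_{ik}^2/u_i^2)$. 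The error then splits into $(R^{\textsf m}_{1,n}-\hat R^{\,g}_{1,n})$ and $(R^{\textsf m}_{2,n}-\hat R^{\,g}_{2,n})$. Each of these is further split into a plug-in part, where $(\hat\bbeta,\hat\sigma)$ replaces $(\bbeta,\sigma)$, and a stochastic part at the true $(\bbeta,\sigma)$.

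\emph{Plug-in parts.} Bound these by a mean value expansion of $\log m_g$ and $\log\tilde m_g$ in their argument and in $\sigma$. The derivative is controlled as in \eqref{eq: log marginal derivative}: the log-marginal of a Gaussian convolution has derivative at most linear in its argument whenever $g$ has bounded moments (A5). With A1, A2 and Cauchy--Schwarz, this makes the plug-in error $O_p(n^{-\alpha})$. Here $n^{-\alpha}$ is the rate from Lemma~\ref{lem:lemma1}, which in this regime is the batching rate $n^{-1/4}$, since $n\eta_n$ is bounded.

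\emph{Stochastic part of $R_1$.} The terms $\log m_g(v_{ik}Z_i;\cdot)$ are independent across $i$ given $\bm\gamma$, each unit contributes at most $\sup_i\Kf_i$ terms (A3), and they have bounded second moments (A2, A5). A variance computation then gives $O_p(|\mathcal{I}(h_n)|^{-1/2})$. This equals $O_p(n^{-1/2})$ provided $\textsf{IF}_n(h_n)$ is bounded below, which I would verify from A4: the density ratio condition and the positivity of $f_u$ on the boundary imply that a positive fraction of units have $|S_{ik}|\ge h_n$, namely those whose $u_i^2+v_{ik}^2$ lies below a fixed quantile of $u^2$.

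\emph{Stochastic part of $R_2$ (the main obstacle).} After Rao--Blackwellization over $\zeta$, each summand $\ex_\zeta\log\tilde m_g(\hat W_{ikj})$ is a function $\psi_{ik}(\gamma_j)$ of $Z_j$ alone. Its mean under exchangeability (A6) matches $\ex\log\tilde m_g(\tilde W_{ik})$ on average over $j$, because $v_{ik}Z_j+d_{ikj}^{1/2}\sigma\zeta$ has exactly the variance of $\tilde W_{ik}$. So the bias is a permutation average, controlled at $O_p(n^{-1/2})$ by a U-statistic/Hoeffding decomposition over the exchangeable $\gamma$'s. The variance requires controlling the reuse of each $Z_j$ across many $(i,k)$. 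Expanding, the variance is bounded by a constant times $\kappa_n^{-1}+D_n(h_n)/\kappa_n$, up to covariance terms that share a common $j$. The covariance term for a fixed $j$ is bounded by $\big(\sum_{(i,k):j\in S_{ik}}|S_{ik}|^{-1}\big)^2/\kappa_n^2$, and summing over $j$ reduces everything to controlling $D_n(h_n)$. Without replicates, $u_i^2=u_{i1}^2$, and ordering units by $u_j^2$ shows that $|S_{ik}|$ is essentially the rank count above $u_i^2+v_{ik}^2$. For i.i.d.\ designs (A4), $\sum_i |S_{ik}|^{-1}$ is therefore a harmonic-type sum of order $\log n$, as in Lemma~\ref{lem:order_for_dn_no_replicate}. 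This gives variance $O(n^{-1}\log n)$, or with the crude maximal bound $O(n^{-1}\log^2 n)$, hence $O_p(n^{-1/2}\log n)$. The delicate point is making the harmonic bound uniform over random $v_{ik}$ shifts, and showing that the finitely many replicated units (bounded $n\eta_n$) perturb $D_n$ by only $O(1/n)$ each. Combining the three bounds gives $O_p(\max\{n^{-1/2}\log n,n^{-\alpha}\})=O_p(c_n)$.
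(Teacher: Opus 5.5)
Your overall route is the paper's. Its proof of this theorem is just Steps 1--3 of the proof of Theorem~\ref{thm.risk.conv} combined with Lemma~\ref{lem:order_for_dn_no_replicate}. Your variance bound, which groups terms by the reused $Z_j$, is the device of Lemma~\ref{lem:combinatorialbound}: with $e_j=\sum_{(i,k):\,j\in S_{ik}}|S_{ik}|^{-1}$ one gets $\mathrm{Var}\lesssim\kappa_n^{-2}\sum_j e_j^2\lesssim n^{-1}\sum_{(i,k)\in\mathcal I(h_n)}|S_{ik}|^{-1}=O_p(n^{-1}\log n)$.

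Two of the steps you add, however, do not hold as stated. First, A4 does not force $\textsf{IF}_n(h_n)$ to be bounded below. Take $U$ and $V$ with continuous densities on $[1,1.1]$ that are positive at the endpoints. Then A2 and A4 hold, yet $u_i^2+v_{ik}^2\ge 2>1.21\ge u_j^2$ for all unreplicated units. So in a design without replicates every $S_{ik}$ is empty, and the theorem places no growth restriction on $h_n$ either. You need this lower bound only because you normalize the restricted sums by $|\mathcal I(h_n)|$. The KL loss of $\hat{\bm p}_{\textsf m}$ is normalized by $\kappa_n$, as are the paper's Steps 1--3. With that normalization the three error terms are directly $O_p(n^{-1/2})$, $O_p\bigl((n^{-1}\sum_{\mathcal I(h_n)}|S_{ik}|^{-1})^{1/2}\bigr)$ and $O_p(n^{-\alpha})$, with no condition on $\textsf{IF}_n$. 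If you keep the $|\mathcal I(h_n)|$ normalization, a lower bound on $\textsf{IF}_n$ is an additional hypothesis, not a consequence of A4. Theorem~\ref{thm:scarce_improvement_factor} supplies one only for $h_n=1$, and only when its bound is positive.

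Second, your claim that the finitely many replicated units perturb $D_n$ negligibly fails if a replicated unit carries the pooled $u_j^2=\sum_k u_{jk}^2$, as in the paper's replicate lemma. Such a $j^*$ can have $u_{j^*}^2$ above the upper end $\sup U^2$ of the unreplicated $u_i^2$. Consider every $(i,k)$ with $\sup U^2<u_i^2+v_{ik}^2\le u_{j^*}^2$, typically a positive fraction of coordinates. For these, $S_{ik}$ is nonempty but contains only replicated units, so $|S_{ik}|\le n\eta_n=O(1)$. With $h_n=1$ these coordinates lie in $\mathcal I(h_n)$ and contribute order $n$ to $\sum|S_{ik}|^{-1}$. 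Their $\hat r_{ik}$ are all driven by $O(1)$ reused statistics such as $Z_{j^*}$, so their aggregate error is of order one rather than $O_p(c_n)$.

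The repair is to let $h_n$ eventually exceed $n\eta_n$, e.g.\ $h_n\to\infty$. These coordinates are then thresholded out. For the remaining ones, $|S_{ik}|^{-1}$ is at most the reciprocal of the number of unreplicated members of $S_{ik}$. Lemma~\ref{lem:order_for_dn_no_replicate}, applied to the i.i.d.\ unreplicated units, then still gives $O_p(\log n)$. The paper's proof sidesteps the issue by applying that lemma directly, i.e.\ by treating all $u_i$ as i.i.d.\ as in the design assumption of Section~\ref{sec:theory}. Under that reading no patch is needed; otherwise the patch you proposed is not the right one.
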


In this regime the convergence rate of the risk estimator is sharper than in the replicate-rich setting, but the improvement factor is typically much smaller. Under Assumption~\hyperref[assumption5]{A5} one has \(\textsf{IF}_n\to1\), whereas in the highly sparse regime \(\textsf{IF}_n\) may be substantially lower. In Section~\ref{sec:simulation} we study the finite-sample behavior of \(\textsf{IF}_n\) using simulations.  Below, in Theorem~\ref{thm:scarce_improvement_factor} we show that \(\textsf{IF}_n\) admits a nontrivial positive lower bound in the highly data scarce case which exceeds 25\% in symmetric designs.

\begin{theorem}\label{thm:scarce_improvement_factor}
Assume~\hyperref[assumption1]{A1}-~\hyperref[assumption6]{A6}  and $\limsup_{n\to\infty} n\eta_n < \infty$. Then, for any $\omega>0$,
\[
\liminf_{n\to\infty} \textsf{IF}_n(1) 
\;\ge\;
\omega \cdot F_{V^2}\,\!\big(F_{U^2}^{-1}(1)-F_{U^2}^{-1}(\omega)\big)
\quad \text{in P},
\]
where $F_{U^2}$ and $F_{V^2}$ denote the distribution functions of $U^2$ and $V^2$ respectively. In particular, if $F_{U^2}=F_{V^2}$ and the common distribution is symmetric, then
$\liminf_{n\to\infty} \textsf{IF}_n \ge 1/4 \text{ in P}.$
\end{theorem}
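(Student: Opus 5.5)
The plan is to reduce $\textsf{IF}_n(1)$ to an empirical frequency of i.i.d. design events and then apply a law of large numbers. By definition, $(i,k)\in\mathcal{I}(1)$ if and only if $S_{ik}\neq\emptyset$, that is, if and only if
\[
\max_{j} u_j^2 \;\ge\; u_i^2+v_{ik}^2 .
\]
Since $\limsup_n n\eta_n<\infty$, only $O(1)$ units have $K_i\ge 2$. They contribute $O(1/n)$ to $\textsf{IF}_n$ by A3, so I discard them and, for the remaining units, take $u_i^2=u_{i1}^2$. Write $M_n=\max_j u_j^2$ and $a^*=F_{U^2}^{-1}(1)$ for the upper endpoint of the support of $U^2$, which is finite by A4.

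\textbf{Step 1: the maximum converges to the endpoint.} I will show $M_n\to a^*$ in probability. For each $\varepsilon>0$ we have $p_\varepsilon:=P(U^2>a^*-\varepsilon)>0$, because $f_u$ is continuous and strictly positive at the boundary (A4). Hence $P(M_n\le a^*-\varepsilon)\le(1-p_\varepsilon)^{n}\to 0$.

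\textbf{Step 2: a sufficient event.} Fix $\omega$ and write $t_\omega=a^*-F_{U^2}^{-1}(\omega)$. On the event $\{M_n\ge a^*-\varepsilon\}$, every pair $(i,k)$ with
\[
u_i^2\le F_{U^2}^{-1}(\omega)\quad\text{and}\quad v_{ik}^2\le t_\omega-\varepsilon
\]
lies in $\mathcal{I}(1)$. Therefore, with probability tending to one,
\[
\textsf{IF}_n(1)\;\ge\;\kappa_n^{-1}\sum_{i}\sum_{k\le \Kf_i}\mathbf{1}\{u_i^2\le F_{U^2}^{-1}(\omega)\}\,\mathbf{1}\{v_{ik}^2\le t_\omega-\varepsilon\}.
\]
The right-hand side is a ratio of sums of i.i.d. bounded terms, the weights $\Kf_i$ being bounded by A3. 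By the weak law of large numbers it converges to
\[
P\bigl(U^2\le F_{U^2}^{-1}(\omega),\,V^2\le t_\omega-\varepsilon\bigr).
\]
Here I will use the exchangeable-design setup, reading it as $U$ and $V$ being independent, so that the limit factors as $\omega\,F_{V^2}(t_\omega-\varepsilon)$. If $\Kf_i$ is random, I would normalize by $\ex\Kf$ in both the numerator and $\kappa_n$. Letting $\varepsilon\downarrow 0$ and using right-continuity of $F_{V^2}$ only at the point $t_\omega$ from below, I would take $\varepsilon\to 0$ along a sequence and use monotonicity. This gives the bound with $F_{V^2}(t_\omega^-)$, which equals $F_{V^2}(t_\omega)$ since $V$ has a density.

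\textbf{Step 3: the symmetric case.} For the claim in the statement I take $\omega=1/2$. The bound becomes $\tfrac12 F_{V^2}\bigl(a^*-\mathrm{med}(U^2)\bigr)$, and it remains to show $F_{V^2}\bigl(a^*-\mathrm{med}(U^2)\bigr)\ge\tfrac12$. Let $F_{U^2}=F_{V^2}$ be symmetric about its median $m$, with support $[a_*,a^*]$. Then $a^*-m=m-a_*$, so the condition reduces to $m-a_*\ge m$, i.e.\ to the lower endpoint $a_*$ being (asymptotically) $0$. This step is where I expect the main obstacle. The symmetric claim needs the design to be symmetric in a sense compatible with this requirement, and I would make that convention explicit: either symmetry about half the upper endpoint, with support starting at $0$, or reading A2's lower bound as non-binding in this limit. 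I would also check whether the intended definition of ``symmetric'' yields threshold $\ge$ median directly.

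The other delicate point is the independence of $u_i$ and $v_{ik}$ used in the factorization. If this is not assumed, I would instead bound the joint probability through A4's density-ratio condition, though that would change the constant.
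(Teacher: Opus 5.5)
Your argument is correct. It uses the same mechanism as the paper's proof:
\begin{enumerate}
\item restrict to units whose $u_i^2$ lies below the $\omega$-quantile;
\item use that the top of the $u^2$-sample approaches $a^*=F_{U^2}^{-1}(1)$;
\item count, by a law of large numbers and independence of $U$ and $V$, the units whose $v^2$ fits in the remaining gap.
\end{enumerate}

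The paper implements this with empirical order statistics. It takes the $\lfloor n\omega\rfloor$ smallest $U$'s and compares each attached $V_i$ with the random gap $A_n-B_n=U_{(\lfloor n-h_n\rfloor)}-U_{(\lfloor n\omega\rfloor)}$. It computes the conditional mean $1-F_{V^2}(A_n-B_n)$ and the variance given the $U$'s, then concludes by Chebyshev, order-statistic consistency of $A_n,B_n$, and continuity of $F_{V^2}$ at the limiting gap.

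You replace both random thresholds by the deterministic ones $F_{U^2}^{-1}(\omega)$ and $t_\omega-\varepsilon$. You push the randomness of the maximum into the event $\{M_n\ge a^*-\varepsilon\}$ and finish with a plain WLLN. This buys you three things:
\begin{itemize}
\item you avoid conditioning on order statistics;
\item you do not need quantile consistency of $U_{(\lfloor n\omega\rfloor)}$, because $F_{U^2}(F_{U^2}^{-1}(\omega))\ge\omega$ suffices;
\item you do not need continuity of $F_{V^2}$ at a random point, only left-continuity at $t_\omega$. That, not right-continuity as you wrote, is what your $\varepsilon\downarrow 0$ step uses, and the density supplies it.
\end{itemize}
Your criterion $\max_j u_j^2\ge u_i^2+v_{ik}^2$ is also the exact characterization of $\mathcal{I}(1)$. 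The paper's $U_{(\lfloor n-h_n\rfloor)}$ is conservative at $h=1$, but it carries over directly to thresholds $h_n>1$.

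Both of your caveats are genuine, and the paper settles them as you propose. It treats the $V_i$ as i.i.d.\ $F_{V^2}$ independent of the $U$'s when it conditions on $U$. It proves the symmetric case only for a law symmetric on $[0,M]$, so that $F^{-1}(1/2)=M/2$.

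Neither convention can be dropped.
\begin{itemize}
\item \emph{Support starting at zero.} Take no replicates and $U^2,V^2$ i.i.d.\ uniform on $[1,1.1]$. This law is symmetric and compatible with A2 and A4. Then $u_i^2+v_i^2\ge 2>\max_j u_j^2$ for every $i$, so $\textsf{IF}_n(1)\equiv 0$ and the ``in particular'' claim fails.
\item \emph{Independence.} Take the antithetic design $V^2=a^*+c_0-U^2$ with $U^2$ uniform on $[c_0,a^*]$. This also gives $\textsf{IF}_n(1)\equiv 0$. Yet the marginals of $U$ and $V$ coincide, so A4's density-ratio condition holds with ratio $1$, and the stated bound is positive for small $c_0$.
\end{itemize}
So your fallback of controlling dependence through A4 cannot work; independence is essential, not a convenience.
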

The bound shows that even without replicates, a nontrivial fraction of indices can be improved whenever future magnitudes \(V\) typically fall below a certain  gap in the past magnitudes $U$, namely \(F_{U^2}^{-1}(1)-F_{U^2}^{-1}(\omega)\). This gap captures the margin provided by the upper tail of \(U\) to accommodate future variability. The larger this gap (relative to the mass of \(V\)), the greater the improvement fraction.
\begin{lemma}\label{thm:rich_improvement_factor}
Under the assumptions of Theorem~\ref{thm.risk.conv}, as $n \to \infty$, $\textsf{IF}_{n}(h_n) \xrightarrow{P} 1 $ .
\end{lemma}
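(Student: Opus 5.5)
We must show that $\textsf{IF}_n(h_n)=|\mathcal{I}(h_n)|/\kappa_n \to 1$ in probability. By A3 we have $\kappa_n\asymp n$, and $1-\textsf{IF}_n(h_n)=\kappa_n^{-1}\sum_{i,k}\mathbf{1}\{|S_{ik}|<h_n\}$. Since $\Kf_i$ is uniformly bounded, it suffices to show two things: for each fixed $k$, the fraction of units $i$ with $|S_{ik}|<h_n$ is $o_p(1)$; and, more conveniently, its expectation tends to zero. Markov's inequality then gives the claim.

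\textbf{Step 1 (reduce to a probability under the exchangeable design).} Recall $u_j^2=\sum_{l\le K_j}u_{jl}^2$. Every unit with $K_j\ge 2$ satisfies $u_j^2\ge u_{j1}^2+u_{j2}^2$, so it has strictly larger effective design magnitude than a single-replicate unit with comparable covariates. Condition on unit $i$ with $(u_i,v_{ik})$. Then
\[
|S_{ik}|\ \ge\ N_{ik}:=\#\{j\in\mathcal{I}: u_j^2\ge u_i^2+v_{ik}^2\},
\]
where $\mathcal{I}=\{j:K_j\ge2\}$ has cardinality $n\eta_n\to\infty$ by A7, since $q<1$. Define $p(a)=P(U_1^2+U_2^2\ge a)$, with $U_1,U_2$ i.i.d.\ as in the design (taking the replicated-unit design distribution). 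Conditionally on $a_{ik}=u_i^2+v_{ik}^2$, the count $N_{ik}$ is a sum of $\approx n\eta_n$ independent Bernoulli$(p(a_{ik}))$ variables, up to the single index $i$ itself.

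\textbf{Step 2 (positivity of $p$).} This is the main obstacle: we need $p(a_{ik})$ bounded away from zero on the relevant range. A4 says $U,V$ are bounded with $\inf f_u/f_v>0$, so the support of $V$ lies inside that of $U$, and $f_u$ is positive near the upper endpoint $\bar u$ of its support. For a single-replicate unit, $a_{ik}=u_i^2+v_{ik}^2\le 2\bar u^2$. For a replicated pair, $U_1^2+U_2^2$ reaches up to $2\bar u^2$ with positive density near the top. Hence $p(a)>0$ whenever $a<2\bar u^2$, and $p$ is continuous there by continuity of $f_u$.

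To handle the region $a\approx 2\bar u^2$, split on $\delta>0$. The event $\{a_{ik}>2\bar u^2-\delta\}$ has probability $\epsilon(\delta)\to0$ as $\delta\to0$, because $u_i$ and $v_{ik}$ have densities and so no atoms. On the complement we have $p(a_{ik})\ge p_\delta>0$. (If replicated units contribute more than two replicates, $p$ only improves.)

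\textbf{Step 3 (concentration and conclusion).} On $\{a_{ik}\le 2\bar u^2-\delta\}$, a Chernoff bound gives
\[
P(N_{ik}<h_n)\le \exp\{-c\,n\eta_n p_\delta\}
\]
as soon as $h_n\le n\eta_n p_\delta/2$. This holds eventually, because the hypothesis $h_n/(n\eta_n)\to0$ of Theorem~\ref{thm.risk.conv} applies. Therefore
\[
\ex[1-\textsf{IF}_n(h_n)]\le \epsilon(\delta)+\exp\{-c\,n\eta_n p_\delta\}+O(\eta_n),
\]
where the $O(\eta_n)$ term absorbs units in $\mathcal{I}$, which are negligible in proportion, or dominated similarly. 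Letting $n\to\infty$ and then $\delta\to0$, and applying Markov's inequality, yields $\textsf{IF}_n(h_n)\xrightarrow{P}1$.
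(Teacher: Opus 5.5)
Your route differs technically from the paper's, although the key idea is the same. The paper works with order statistics. It bounds the unimproved fraction among single-replicate units by $Y_{n,\eta_n}$, the fraction of $V_i$ exceeding the gap $A_n-B_n$, where $A_n$ is the $h_n$-th largest effective $u^2$ in the replicated block and $B_n$ is the largest $u^2$ in the single block. It shows $A_n-B_n\to M$ in probability and concludes by a mean/variance (Chebyshev) argument.

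You instead condition on $a_{ik}=u_i^2+v_{ik}^2$ and dominate $|S_{ik}|$ by a binomial count over replicated units. You then use $p(a)>0$ for $a<2\bar u^2$ together with a $\delta$-truncation near the top of the range, and finish with Chernoff and Markov. This buys you something. Because you condition on $a_{ik}$ jointly, you never need $u_i$ and $v_{ik}$ to be independent. The paper's identity $\mathbb{E}[Y_{n,\eta_n}\mid U]=1-F_2(A_n-B_n)$, as written, treats $V_i$ as independent of $B_n$, which involves $U_i$; the exchangeable design (i.i.d.\ pairs with given marginals, cf.\ Case C) does not grant this.

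The genuine gap is your last step, where the $O(\eta_n)$ term ``absorbs units in $\mathcal{I}$, which are negligible in proportion, or dominated similarly.'' Assumption A7 with $q=0$ allows $\eta_n$ to stay bounded away from zero, for instance $\eta_n\equiv 1/2$ with $h_n=1$. In that case replicated units are neither negligible nor dominated. Suppose every replicated unit has $K_j=2$. Then $u_j^2\le 2\bar u^2$ for every $j$, while a replicated unit $i$ has $a_{ik}=u_{i1}^2+u_{i2}^2+v_{ik}^2$. This exceeds $2\bar u^2$ with positive probability, because $v_{ik}$ is bounded away from zero (A2) and $u_{i1},u_{i2}$ can both be close to $\bar u$. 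For those $(i,k)$ one has $S_{ik}=\emptyset$, so $\textsf{IF}_n(1)$ stays bounded away from $1$. The statement as written therefore needs the extra hypothesis $\eta_n\to 0$.

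The paper's proof has exactly the same limitation: its final display only gives $\mathbb{P}(\textsf{IF}_n>1-\eta_n-\delta)\to 1$. If you add $\eta_n\to0$ (the regime the paper emphasizes) and drop ``or dominated similarly,'' your argument is complete.
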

\section{Numerical Experiments}\label{sec:simulation}
Through simulation studies, we examine the finite-sample risk behavior of the proposed predictive density estimator under six distinct designs. Figures~\ref{fig:d_n_plot1} and~\ref{fig:ifactor_plot2} display the behavior of $D_n(h_n)$ (the dependency factor governing the convergence rate) and the improvement factor $\textsf{IF}_n$ as functions of $n$ and $h_n$. We consider sample sizes $n$ on the grid $\{50,100,\ldots,2500\}$ and report averages over $100$ independent replications at each grid point. Throughout, a common color scheme denotes the threshold hyperparameter $h_n$ used in the risk estimator: blue ($h_n=1$), orange ($h_n=\log n$), green ($h_n=n^{1/4}$), and red ($h_n=n^{1/2}$).

The proofs of Theorems~\ref{thm.risk.conv} and~\ref{thm:scarce_improvement_factor} show that asymptotically $D_n$ scales as $\log n$ in the highly data-scarce regime and as $\eta_n^{-1}\log n$ in the non-data scarce regime. Accordingly, the figures include in purple the growth rates predicted by the asymptotic theory. For fixed $n$, both $D_n(h)$ and $\textsf{IF}_n(h)$ decrease as $h$ increases. Figures~\ref{fig:d_n_plot1} and~\ref{fig:ifactor_plot2} illustrate this trade-off across the different designs. The reduction in $D_n$ achieved by larger thresholds comes at the expense of the improvement factor: aggressive choices of $h_n$ substantially lower $\textsf{IF}_n$, indicating that overly large thresholds are undesirable in practice. For moderate sample sizes (e.g., $n\approx 500$) and finite $h_n$, the empirical growth of $D_n(h_n)$ and $\textsf{IF}_n(h_n)$ aligns closely with the asymptotic predictions indicated by the purple reference curves. All code required to reproduce the results is available at \url{https://github.com/AbirS2026/Simulations-prde}.

\begin{figure}[!htbp]
  \centering
  \includegraphics[width=1\textwidth]{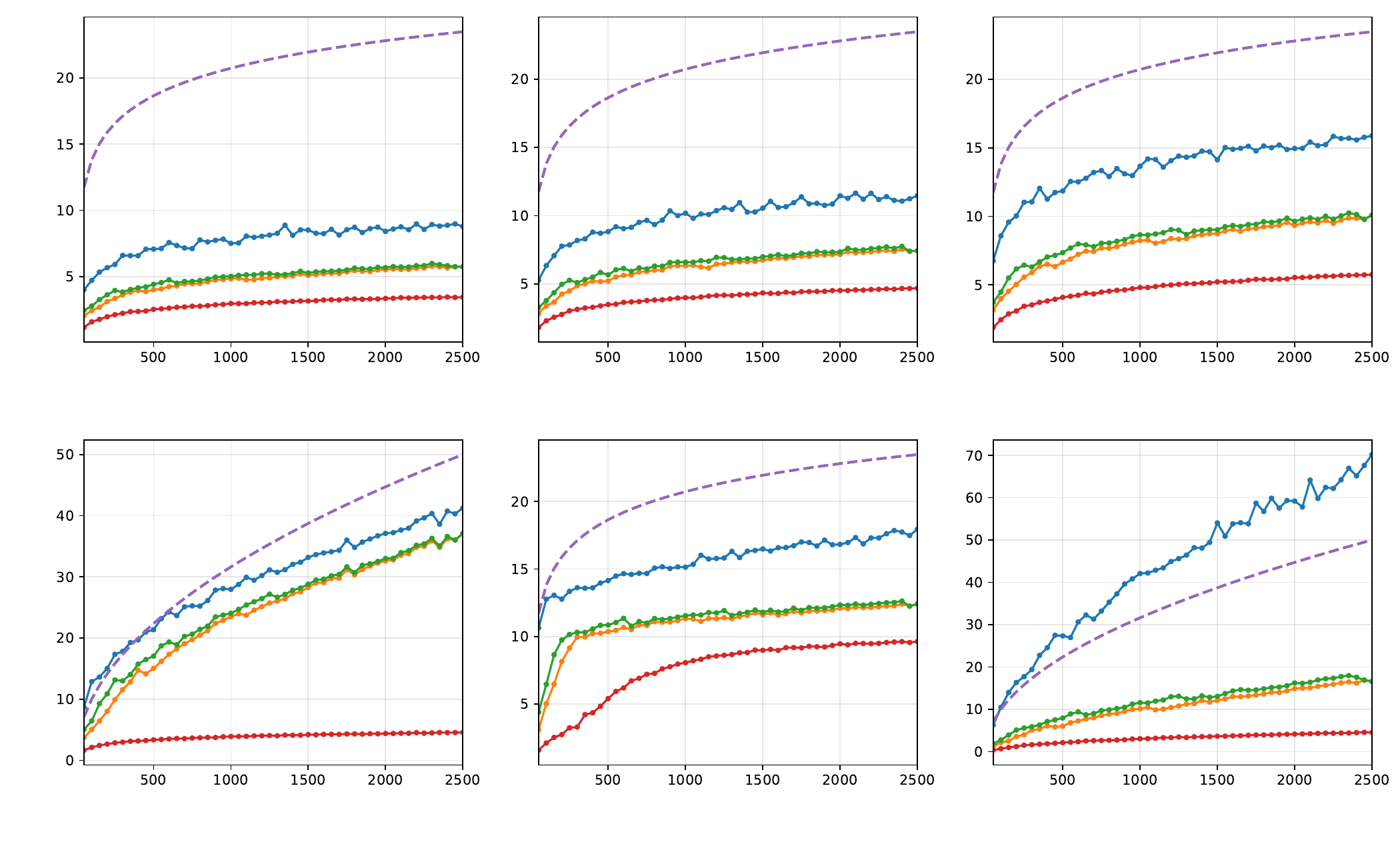}
\caption{
Plot of $D_n(h_n)$ across the six regimes A--F (ordered from top left by row) as a function of $n$. The curves correspond to different choices of $h_n$: blue ($h_n=0$), orange ($h_n=\log n$), green ($h_n=n^{1/4}$), and red ($h_n=n^{1/2}$). The dotted purple lines represent the growth rates implied by the asymptotic theory.
}
  \label{fig:d_n_plot1}
\end{figure}

\noindent We next describe the individual predictive settings and summarize the findings.

\textbf{Case A.}
We consider a no-replicate design with known $\bbeta$ and $\sigma$, corresponding to the highly data-scarce regime of Section~\ref{sec:theory.scare}. The covariates $(U,V)$ are drawn independently from $\mathcal{N}(2,1)$ truncated to $[0,4]$.

\textbf{Case B.}
This setting is identical to Case~A except that the covariates are non-Gaussian: $U,V \stackrel{\mathrm{iid}}{\sim} \mathrm{Exp}(1)$ truncated to $[0,2]$. In both Cases~A and~B, Figure~\ref{fig:d_n_plot1} shows that the empirical trajectories of $D_n$ remain uniformly below the theoretical benchmark. Figure~\ref{fig:ifactor_plot2} reports the corresponding improvement fractions, which are bounded away from zero and exceed $25\%$, consistent with the asymptotic theory. The improvement factor is noticeably larger in Case~B than in Case~A.

\textbf{Case C.}
We retain the no-replicate setting but introduce dependence between the covariates. The pair $(U,V)$ has $\mathrm{Exp}(1)$ marginals with positive dependence calibrated to achieve correlation $\rho=0.5$. The trajectories of $D_n$ and $\textsf{IF}_n$ are qualitatively similar to those in Case~B, indicating robustness to moderate dependence.

\textbf{Case D.}
We next consider a design with replication. A fraction $\eta_n=n^{-1/2}$ of units, selected uniformly at random, receive two replicated observations. The covariates satisfy $U,V\stackrel{\mathrm{iid}}{\sim}\mathcal{N}(1,1)$ truncated to $[0,2]$. The empirical growth of $D_n$ remains bounded by the theoretical $\sqrt{n}$ benchmark (purple line in the figures), while the improvement fractions are adequately high even for moderate sample sizes.

\textbf{Case E.}
We consider a design combining replication and covariate shift. We generate $U\sim \mathcal{N}(1,1)$ truncated to $[0,2]$ and $V\sim \chi^2(1)$ truncated to $[0,3]$, independently, with $\eta_n=1/10$. The resulting $D_n$ values remain well controlled, and the improvement fractions increase toward high levels as $n$ grows, in line with the asymptotic predictions.

\textbf{Case F.}
Finally, we examine a setting in which the regularity conditions of Section~\ref{sec:theory} are violated. With known $\bbeta$ and $\sigma$, we consider a no-replicate design with independent covariates, drawing $V\sim \mathrm{Unif}(1/2,1)$ and $U$ on $[1/2,1]$ with cumulative distribution function
$F(x)=1-\exp(-(1-x)^{-2}), \,x\in[1/2,1]$.
Under this specification Assumption~\hyperref[assumption4]{A4} fails. For \(h=1\), \(D_n(h)\) grows substantially faster than \(\sqrt{n}\); for larger \(h\) the growth of \(D_n(h)\) is controlled, but the improvement fractions remain comparatively small and do not approach the levels observed under the regular regimes.

\begin{figure}[!htbp]
  \centering
  \includegraphics[width=1\textwidth]{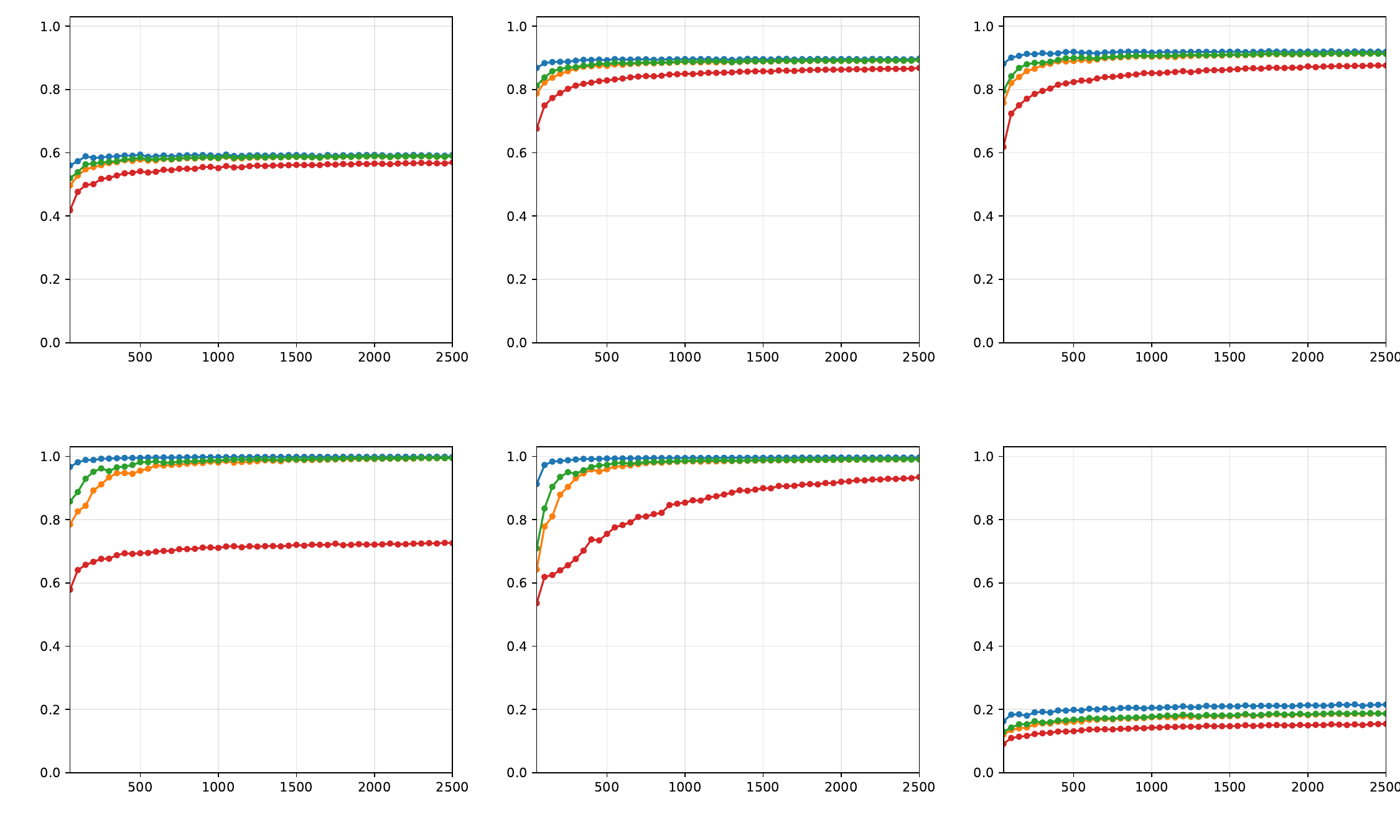}
\caption{
Plot of the improvement factor $\textsf{IF}_n$ (expressed as fractions) across the six regimes A--F (ordered from top left by row) as a function of $n$. The curves correspond to different choices of $h_n$: blue ($h_n=0$), orange ($h_n=\log n$), green ($h_n=n^{1/4}$), and red ($h_n=n^{1/2}$).
}
  \label{fig:ifactor_plot2}
\end{figure}

We next compare the KL risk of the proposed method, under the various choices of $h_n$, with competing approaches. The true random-effects distribution $g_0$ is specified as the two-component Gaussian mixture $0.7\,\mathcal N(0,0.5^{2}) + 0.3\,\mathcal N(0,1)$. We assume $\bm{\beta}$ and $\sigma$ are known and focus exclusively on the random effects.

For our proposed method, we take $\mathcal{G}$ to be the class of scaled two-component Gaussian mixtures centered at zero. We compare our approach with two competing methods: 
(a) A naive plug-in \prde\ that first computes point estimates of the random effects based on \eqref{eq:model.11} and then substitutes these estimates into \eqref{eq:model.12} for prediction. 
(b) A $g$-modeling based plug-in approach that estimates the random-effects distribution $g_0$ from the data in \eqref{eq:model.11} by fitting a member of $\mathcal{G}$ via the EM algorithm, following \citet{Efron2016DeconvolveR}. The resulting estimate $\hat g$ is then used to construct the \prde\ in \eqref{eq:model.12}.

In Figure~\ref{fig:risk_plot3}, we report the excess predictive KL risk of the competing \prdes\ relative to the true Bayes risk across the six cases considered above. Table~\ref{tab:table11} summarizes the large-sample excess KL risk (after subtracting the Bayes risk) for each method. Our method consistently achieves the lowest predictive risk across all regimes. Its excess risk converges to zero as $n$ increases. The naive plug-in estimator exhibits substantially higher risk in every setting. The $g$-modeling plug-in approach is also suboptimal and performs particularly poorly in regimes involving covariate shift.

\begin{figure}[!htbp]
  \centering
  \includegraphics[width=1\textwidth]{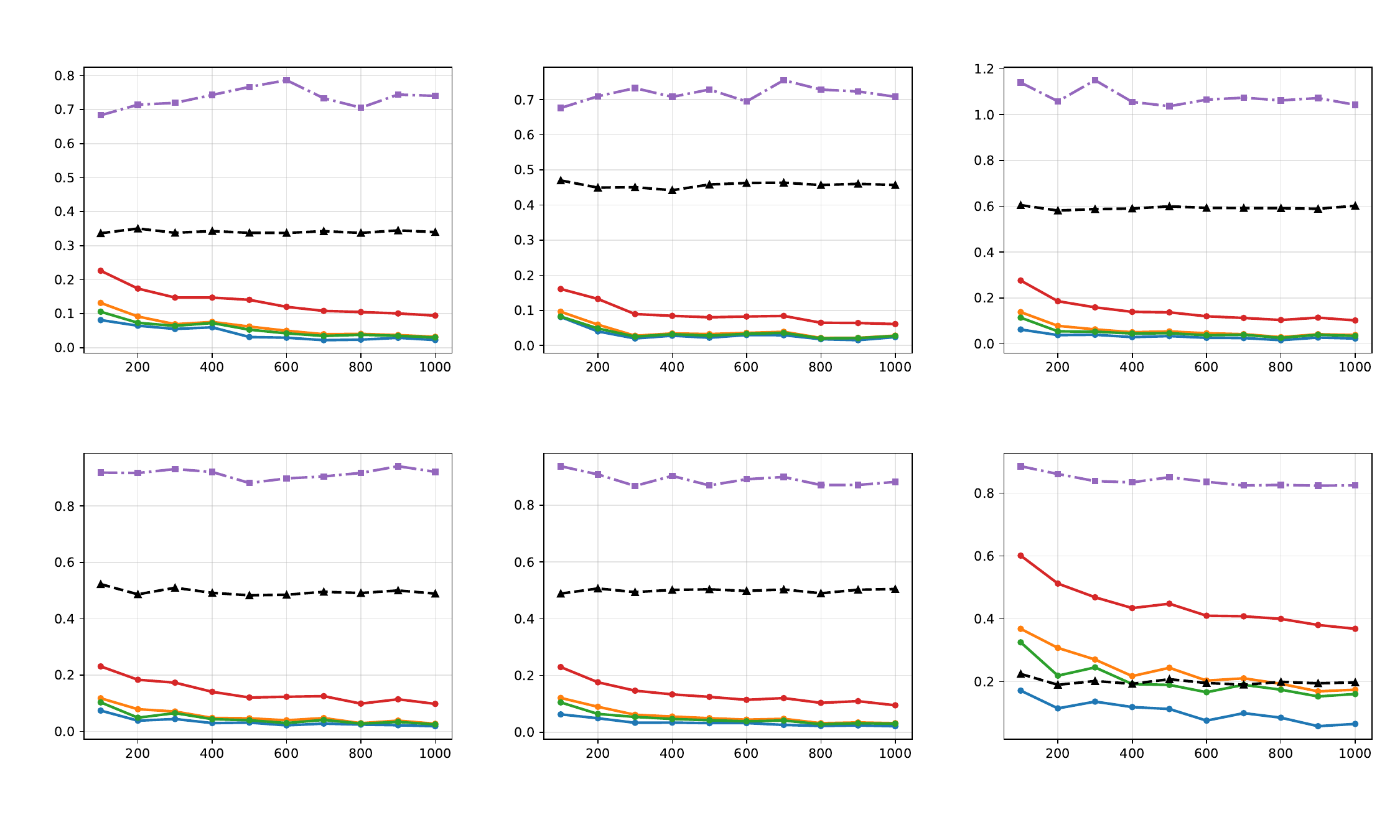}
\caption{Plot of the excess KL risk of the competing \prdes\ relative to the Bayes benchmark across the six regimes A--F (ordered from top left by row) as a function of $n$. The curves correspond to prdes for different $h_n$: blue ($h_n=0$), orange ($h_n=\log n$), green ($h_n=n^{1/4}$), and red ($h_n=n^{1/2}$), black (plugin: $g$-modeling), and purple (naive plugin).}

  \label{fig:risk_plot3}
\end{figure}

\begin{table}[t]
\centering
\renewcommand{\arraystretch}{1.15}
\begin{tabular}{lcccccc}
\hline
Method & Case 1 & Case  2 & Case  3 & Case  4 & Case  5 & Case  6 \\ \hline
Proposed: $h_n=1$           & 0.0228 & 0.0238 & 0.0230 & 0.0187 & 0.0208 & 0.0650 \\
Proposed: $h_n=\log(n)$     & 0.0323 & 0.0282 & 0.0382 & 0.0277 & 0.0313 & 0.1744 \\
Proposed: $h_n=n^{1/4}$     & 0.0304 & 0.0273 & 0.0345 & 0.0255 & 0.0295 & 0.1608 \\
Proposed: $h_n=n^{1/2}$     & 0.0946 & 0.0611 & 0.1018 & 0.0983 & 0.0947 & 0.3681 \\
Plugin: $g$-modeling        & 0.3398 & 0.4569 & 0.6021 & 0.4891 & 0.5045 & 0.1976 \\
Plugin: Naive               & 0.7402 & 0.7083 & 1.0427 & 0.9213 & 0.8821 & 0.8248 \\
\hline
\end{tabular}
\caption{Large-sample excess KL risk (at $n=1000$) of the competing \prdes.}
\label{tab:table11}
\end{table}

\section{Discussion}\label{sec:discussion}
We proposed a risk-calibrated empirical Bayes framework for predictive density estimation in high-dimensional linear mixed models under Kullback–Leibler loss. The method addresses severe data imbalance due to sparse replication and distributional shifts in future covariates. By combining flexible prior modeling with predictive KL risk estimation based on data fission and sample reuse, the resulting estimator achieves asymptotic decision-theoretic optimality and robustness to covariate shift.

Our analysis characterizes how the efficiency of risk estimation degrades as replication becomes scarce and quantifies the interplay between shrinkage, induced dependence from sample reuse, and predictive calibration. The probabilistic arguments extend the predictive heat-equation framework of \citet{george2006improved} to settings with covariates and non-Gaussian random effects.

Several directions merit further investigation. First, while we focus on KL loss, extending the calibration framework to broader divergence measures \citep{roy2026asymptotic} would provide a unified treatment of predictive optimality beyond the logarithmic score; related work in sequence models without covariates includes \citet{ghosh2018hierarchical,maruyama2016harmonic,george2021optimal,MatsudaStrawderman2021}. Second, it would be of interest to generalize the methodology to richer regression structures, including generalized linear mixed models and partially linear or functional mixed-effects models. Finally, understanding predictive calibration under more complex forms of covariate shift, including dependent or non-exchangeable designs, remains an important open problem.

\newpage

\section{Appendix}\label{sec:appendix}

\subsection{Proof of Theorem~\ref{prop.bayes.risk}}

We begin by establishing the result in the setting where $(\bbeta,\sigma)$ are known. In this case, equality in \eqref{eq:risk.2} holds exactly, without invoking asymptotic arguments. We then treat the practically relevant setting in which $(\bbeta,\sigma)$ are unknown and compare the risk of the plug-in estimator $\phat[\hat{\bbeta},\hat{\sigma},g]$ with that of the oracle $\phat[\bbeta,\sigma,g]$, which assumes knowledge of the fixed effects and variance parameter.

Let $\R_n[\bm{\theta},\cb](\hat \bbeta,\hat \sigma,g)$ denote the predictive KL risk of the \prde\ $\phat[\hat{\bbeta},\hat\sigma,g]$, and similarly let $\R_n[\bm{\theta},\cb](\bbeta,\sigma,g)$ denote the predictive KL risk of the \prde\ $\phat[\bbeta,\sigma,g]$ (i.e., the same Bayes \prde\ with $(\bbeta,\sigma)$ treated as known). For known $(\bbeta,\sigma)$, we first prove the following equality:
\begin{align}
\R_n[\bm{\theta},\cb](\bbeta,\sigma,g)
&=
a_n[\cb]
+\frac{1}{\kappa_n}\sum_{i=1}^n\sum_{k=1}^{\Kf_i}
\mathbb{E}_{\gamma_i}\!\left[\log m_g\!\Big(v_{ik}Z_i(\bbeta,\sigma);u_i,v_{ik},\sigma\Big)\right]\nonumber\\
&\quad-
\frac{1}{\kappa_n}\sum_{i=1}^n\sum_{k=1}^{\Kf_i}
\mathbb{E}_{\gamma_i}\!\left[\log \tilde m_g\!\Big(\widetilde W_{ik}(\bbeta,\sigma);u_i,v_{ik},\sigma\Big)\right].
\label{eq:known.decomp}
\end{align}

By Assumption~\hyperref[assumption1]{A1} and uniform boundedness in~\hyperref[assumption2]{A2}-~\hyperref[assumption3]{A3}, replacing $(\hat{\bbeta},\hat{\sigma})$ by $(\bbeta,\sigma)$ in the predictive density incurs an $o(1)$ contribution to $R_n[\cb](\bm{\theta},\hat p)$. Hence it suffices to work with the true $(\bbeta,\sigma)$.

Let $u_i^2:=\sum_{k=1}^{K_i}u_{ik}^2$, and define the training sufficient statistic and future residual
\[
Z_i(\bbeta,\sigma):=\frac{1}{u_i}\sum_{k=1}^{K_i} u_{ik}\,\frac{y_{ik}-x_{ik}'\bbeta}{\sigma},
\qquad
W_{ik}(\bbeta,\sigma):=\frac{\tilde y_{ik}-\tilde x_{ik}'\bbeta}{\sigma}.
\]
Conditional on $\gamma_i$, we have,
\[
v_{ik}Z_i(\bbeta,\sigma)\,\big|\,\gamma_i \sim N\!\Big(\frac{v_{ik}\gamma_i}{\sigma},\,\frac{v_{ik}^2}{u_i^2}\Big),
\qquad
W_{ik}(\bbeta,\sigma)\,\big|\,\gamma_i \sim N\!\Big(\frac{v_{ik}\gamma_i}{\sigma},\,1\Big),
\]
independently. Hence the marginal (over $g$) densities of these Gaussian mixtures are the following:
\begin{align*}
m_g\!\Big(a;u_i,v_{ik},\sigma\Big)
&=\int \phi\!\Big(a;\,\frac{v_{ik}\gamma}{\sigma},\,\frac{v_{ik}^2}{u_i^2}\Big)\,g(\gamma)\,d\gamma,\\
\tilde m_g\!\Big(a;u_i,v_{ik},\sigma\Big)
&=\int \phi\!\Big(a;\,\frac{v_{ik}\gamma}{\sigma},\,\tfrac{1}{1+v_{ik}^2 / u_i^2}\Big)\,g(\gamma)\,d\gamma,
\end{align*}
as in \eqref{m.defined.1}–\eqref{m.defined.2} (equivalently using $1+ v_{ik}^2 / u_i^2$ in the denominator).

Consider the linear transform
\[
A_{ik}:=v_{ik}Z_i(\bbeta,\sigma),
\qquad
\widetilde W_{ik}(\bbeta,\sigma):=\frac{W_{ik}(\bbeta,\sigma)-A_{ik}}{\sqrt{1+ v_{ik}^2 / u_i^2}}.
\]
The Jacobian of $(W_{ik},A_{ik})\mapsto (A_{ik},\widetilde W_{ik})$ contributes
$(1/2)\log\!\Big(1+ v_{ik}^2 / u_i^2\Big)$
to the log-density. Therefore, for each $(i,k)$,
\[
\log \frac{p\big(\tilde y_{ik}\mid\bm{\theta},\cct\big)}{\hat p\big(\tilde y_{ik}\mid\bm{Y},\cc,\cct\big)}
=
\frac12\log\!\Big(1+\frac{v_{ik}^2}{u_i^2}\Big)
+\log m_g\!\Big(A_{ik};u_i,v_{ik},\sigma\Big)
-\log \tilde m_g\!\Big(\widetilde W_{ik}(\bbeta,\sigma);u_i,v_{ik},\sigma\Big),
\]
where the estimator $\hat p$ factorizes in the transformed coordinates, using the corresponding Gaussian–mixture marginals $m_g$ and $\tilde m_g$. Averaging over $(i,k)$, integrating over $\bm{Y}$ (i.e., over $\varepsilon$ and $\tilde\varepsilon$) and $\gamma_i\sim g_0$, and normalizing by $\kappa_n=\sum_{i=1}^n\Kf_i$, we obtain
\begin{align*}
R_n[\cb](\bm{\theta},\hat p)
&=
\frac{1}{2\kappa_n}\sum_{i=1}^n\sum_{k=1}^{\Kf_i}\log\!\Big(1+\frac{v_{ik}^2}{u_i^2}\Big)
+\frac{1}{\kappa_n}\sum_{i=1}^n\sum_{k=1}^{\Kf_i}\mathbb{E}_{\gamma_i}\!\left[\log m_g\!\Big(v_{ik}Z_i(\bbeta,\sigma);u_i,v_{ik},\sigma\Big)\right]\\
&\quad-
\frac{1}{\kappa_n}\sum_{i=1}^n\sum_{k=1}^{\Kf_i}\mathbb{E}_{\gamma_i}\!\left[\log \tilde m_g\!\Big(\widetilde W_{ik}(\bbeta,\sigma);u_i,v_{ik},\sigma\Big)\right],
\end{align*}
which is precisely \eqref{eq:risk.2}–\eqref{eq:risk.2.2}. This completes the known-parameter case.\\

\noindent Similarly, define $\R_n[\bm{\theta},\cb](g):=\R_n[\bm{\theta},\cb](\hat{\bbeta},\hat\sigma,g)$, i.e., the same functional as in \eqref{eq:known.decomp} but evaluated at $(\hat{\bbeta},\hat\sigma)$. Therefore,
\begin{equation}
\R_n[\bm{\theta},\cb](g)
=
\R_n[\bm{\theta},\cb](\bbeta,\sigma,g)
+
\Big\{\R_n[\bm{\theta},\cb](\hat{\bbeta},\hat\sigma,g)-\R_n[\bm{\theta},\cb](\bbeta,\sigma,g)\Big\}.
\label{eq:split}
\end{equation}
It suffices to show that the second term in \eqref{eq:split} is $O_p(n^{-\alpha})$. To conclude this, define
\begin{align*}
R_{1,n}[\bm{\theta},\cb](\bbeta,\sigma,g)
&:=
\frac{1}{\kappa_n}\sum_{i=1}^n\sum_{k=1}^{\Kf_i}
\mathbb{E}_{\gamma_i}\!\left[\log m_g\!\Big(v_{ik}Z_i(\bbeta,\sigma);u_i,v_{ik},\sigma\Big)\right],\\
R_{2,n}[\bm{\theta},\cb](\bbeta,\sigma,g)
&:=
\frac{1}{\kappa_n}\sum_{i=1}^n\sum_{k=1}^{\Kf_i}
\mathbb{E}_{\gamma_i}\!\left[\log \tilde m_g\!\Big(\widetilde W_{ik}(\bbeta,\sigma);u_i,v_{ik},\sigma\Big)\right].
\end{align*}
Then \eqref{eq:known.decomp} yields
\[
\R_n[\bm{\theta},\cb](\bbeta,\sigma,g)=a_n[\cb]+R_{1,n}[\bm{\theta},\cb](\bbeta,\sigma,g)-R_{2,n}[\bm{\theta},\cb](\bbeta,\sigma,g),
\]
and the same identity holds with $(\bbeta,\sigma)$ replaced by $(\hat{\bbeta},\hat\sigma)$. Hence
\begin{align}
\R_n[\bm{\theta},\cb](\hat{\bbeta},\hat\sigma,g)-\R_n[\bm{\theta},\cb](\bbeta,\sigma,g)
=
\Delta_{1,n}-\Delta_{2,n},
\label{eq:Deltas}
\end{align}
where,
\begin{align*}
\Delta_{1,n}&:=R_{1,n}[\bm{\theta},\cb](\hat{\bbeta},\hat\sigma,g)-R_{1,n}[\bm{\theta},\cb](\bbeta,\sigma,g),\\
\Delta_{2,n}&:=R_{2,n}[\bm{\theta},\cb](\hat{\bbeta},\hat\sigma,g)-R_{2,n}[\bm{\theta},\cb](\bbeta,\sigma,g).
\end{align*}
Thus it is enough to prove $\Delta_{1,n}=O_p(n^{-\alpha})$ and $\Delta_{2,n}=O_p(n^{-\alpha})$.

\noindent For fixed $(i,k)$, define the two scalar maps
\begin{align*}
\Phi_{ik}(\bbeta,\sigma)
&:=
\mathbb{E}_{\gamma_i}\!\left[\log m_g\!\Big(v_{ik}Z_i(\bbeta,\sigma);u_i,v_{ik},\sigma\Big)\right],\\
\Psi_{ik}(\bbeta,\sigma)
&:=
\mathbb{E}_{\gamma_i}\!\left[\log \tilde m_g\!\Big(\widetilde W_{ik}(\bbeta,\sigma);u_i,v_{ik},\sigma\Big)\right].
\end{align*}
Then,
\begin{align*}
R_{1,n}[\bm{\theta},\cb](\bbeta,\sigma,g)&=\frac{1}{\kappa_n}\sum_{i=1}^n\sum_{k=1}^{\Kf_i}\Phi_{ik}(\bbeta,\sigma),\\
R_{2,n}[\bm{\theta},\cb](\bbeta,\sigma,g)&=\frac{1}{\kappa_n}\sum_{i=1}^n\sum_{k=1}^{\Kf_i}\Psi_{ik}(\bbeta,\sigma).
\end{align*}

\noindent We claim that under~\hyperref[assumption2]{A2} the functions $(\bbeta,\sigma)\mapsto \Phi_{ik}(\bbeta,\sigma)$ and $(\bbeta,\sigma)\mapsto \Psi_{ik}(\bbeta,\sigma)$ are continuously differentiable on sets of the form
\[
\Big\{(\bbeta,\sigma):\ \|\bbeta-\bbeta_0\|\le M,\ \underline\sigma\le\sigma\le\bar\sigma\Big\}
\]
and their first derivatives are uniformly bounded (in $(i,k)$) on such sets, for any fixed $\bbeta_0$ and $0<\underline\sigma<\bar\sigma<\infty$. Indeed, by~\hyperref[assumption3]{A3} the covariates $x,\tilde x,u,v$ are bounded; by \eqref{eq:model.11}–\eqref{eq:model.12}, $Z_i(\bbeta,\sigma)$ is affine in $\bbeta$ and in $\sigma^{-1}$, and $\widetilde W_{ik}(\bbeta,\sigma)$ is affine in $\bbeta$ and in $\sigma^{-1}$ through $Z_i(\bbeta,\sigma)$ and $\tilde Z_{ik}(\bbeta,\sigma)$; and by \eqref{m.defined.1}–\eqref{m.defined.2}, $m_g(\cdot;u,v,\sigma)$ and $\tilde m_g(\cdot;u,v,\sigma)$ are Gaussian convolutions in the first argument and in $\sigma$. Thus, differentiation under the integral in \eqref{m.defined.1}–\eqref{m.defined.2} is justified on compact $\sigma$-intervals by dominated convergence (Gaussian kernels dominate uniformly because the covariates are bounded), and yields continuous first derivatives. Moreover, the resulting derivatives of $\log m_g$ and $\log\tilde m_g$ have at most polynomial growth in their first arguments, and the latter have uniformly bounded fourth moments under~\hyperref[assumption2]{A2} since $\limsup_{n\to\infty}n^{-1}\sum_{i=1}^n \gamma_i^4<\infty$ and the noises are Gaussian.

By \hyperref[assumption1]{A1}, $\hat{\bbeta}\to\bbeta$ and $\hat\sigma^{-1}\to\sigma^{-1}$ in probability, so $\hat\sigma\to\sigma$ in probability. Hence there exist fixed constants $0<\underline\sigma<\sigma<\bar\sigma<\infty$ such that $\mathbb{P}(\hat\sigma\in[\underline\sigma,\bar\sigma])\to 1$. On the event $\{\hat\sigma\in[\underline\sigma,\bar\sigma]\}$, the multivariate mean value theorem gives, for each $(i,k)$,
\begin{align}
\Phi_{ik}(\hat{\bbeta},\hat\sigma)-\Phi_{ik}(\bbeta,\sigma)
&=
\nabla_{\bbeta}\Phi_{ik}(\bbeta^\star_{ik},\sigma^\star_{ik})^{\!\top}(\hat{\bbeta}-\bbeta)
+
\partial_{\sigma}\Phi_{ik}(\bbeta^\star_{ik},\sigma^\star_{ik})\,(\hat\sigma-\sigma),
\label{eq:MVT.Phi}\\
\Psi_{ik}(\hat{\bbeta},\hat\sigma)-\Psi_{ik}(\bbeta,\sigma)
&=
\nabla_{\bbeta}\Psi_{ik}(\tilde{\bbeta}^\star_{ik},\tilde\sigma^\star_{ik})^{\!\top}(\hat{\bbeta}-\bbeta)
+
\partial_{\sigma}\Psi_{ik}(\tilde{\bbeta}^\star_{ik},\tilde\sigma^\star_{ik})\,(\hat\sigma-\sigma),
\label{eq:MVT.Psi}
\end{align}
for some intermediate points $(\bbeta^\star_{ik},\sigma^\star_{ik})$ and $(\tilde{\bbeta}^\star_{ik},\tilde\sigma^\star_{ik})$ on the line segment between $(\bbeta,\sigma)$ and $(\hat{\bbeta},\hat\sigma)$.

By the differentiability discussion above and~\hyperref[assumption2]{A2}-\hyperref[assumption3]{A3}, there exists a (non-random) constant $C<\infty$ such that, with probability tending to one,
\begin{align}
\sup_{1\le i\le n}\sup_{1\le k\le \Kf_i}\Big\|\nabla_{\bbeta}\Phi_{ik}(\bbeta^\star_{ik},\sigma^\star_{ik})\Big\|
+
\sup_{1\le i\le n}\sup_{1\le k\le \Kf_i}\Big|\partial_{\sigma}\Phi_{ik}(\bbeta^\star_{ik},\sigma^\star_{ik})\Big|
&\le C, \label{eq:deriv.bound.Phi}\\
\sup_{1\le i\le n}\sup_{1\le k\le \Kf_i}\Big\|\nabla_{\bbeta}\Psi_{ik}(\tilde{\bbeta}^\star_{ik},\tilde\sigma^\star_{ik})\Big\|
+
\sup_{1\le i\le n}\sup_{1\le k\le \Kf_i}\Big|\partial_{\sigma}\Psi_{ik}(\tilde{\bbeta}^\star_{ik},\tilde\sigma^\star_{ik})\Big|
&\le C. \label{eq:deriv.bound.Psi}
\end{align}
Summing \eqref{eq:MVT.Phi} over $(i,k)$, dividing by $\kappa_n$, and using \eqref{eq:deriv.bound.Phi} yields, on $\{\hat\sigma\in[\underline\sigma,\bar\sigma]\}$,
\begin{align*}
|\Delta_{1,n}|
&=
\left|
\frac{1}{\kappa_n}\sum_{i=1}^n\sum_{k=1}^{\Kf_i}
\big(\Phi_{ik}(\hat{\bbeta},\hat\sigma)-\Phi_{ik}(\bbeta,\sigma)\big)
\right|\\
&\le
\frac{1}{\kappa_n}\sum_{i=1}^n\sum_{k=1}^{\Kf_i}
\left(
\Big\|\nabla_{\bbeta}\Phi_{ik}(\bbeta^\star_{ik},\sigma^\star_{ik})\Big\|\cdot\|\hat{\bbeta}-\bbeta\|
+
\Big|\partial_{\sigma}\Phi_{ik}(\bbeta^\star_{ik},\sigma^\star_{ik})\Big|\cdot|\hat\sigma-\sigma|
\right)\\
&\le C\|\hat{\bbeta}-\bbeta\| + C|\hat\sigma-\sigma|.
\end{align*}
By~\hyperref[assumption1]{A1}, $\|\hat{\bbeta}-\bbeta\|=O_p(n^{-\alpha})$. Also, since $\sigma>0$ is fixed and $\hat\sigma\to\sigma$ in probability, the map $t\mapsto t^{-1}$ is differentiable at $t=\sigma$ and
\[
\hat\sigma^{-1}-\sigma^{-1}
= -\sigma^{-2}(\hat\sigma-\sigma) + o_p(|\hat\sigma-\sigma|),
\]
so~\hyperref[assumption1]{A1} implies $|\hat\sigma-\sigma|=O_p(n^{-\alpha})$. Therefore,
\begin{equation}\label{eq:Delta1.rate}
\Delta_{1,n}=O_p(n^{-\alpha}).
\end{equation}
An identical argument using \eqref{eq:MVT.Psi} and \eqref{eq:deriv.bound.Psi} gives
\begin{equation}\label{eq:Delta2.rate}
\Delta_{2,n}=O_p(n^{-\alpha}).
\end{equation}

\noindent Combining \eqref{eq:Deltas}, \eqref{eq:Delta1.rate}, and \eqref{eq:Delta2.rate} yields
\[
\R_n[\bm{\theta},\cb](\hat{\bbeta},\hat\sigma,g)
=
\R_n[\bm{\theta},\cb](\bbeta,\sigma,g) + O_p(n^{-\alpha}).
\]
Finally, substituting \eqref{eq:known.decomp} into \eqref{eq:split} gives
\[
\R_n[\bm{\theta},\cb](g)
=
a_n[\cb]+R_{1,n}[\bm{\theta},\cb](g)-R_{2,n}[\bm{\theta},\cb](g)+O_p(n^{-\alpha}),
\]
which is \eqref{eq:risk.2}. Hence, the result follows.

\subsubsection{Proof of Lemma~\ref{lem:lemma1}}
We observe for unit \(i=1,\dots,n\) and replicate \(k=1,\dots,K_i\),
\[
y_{ik} \;=\; x_{ik}'\bbeta \;+\; u_{ik}\,\gamma_i \;+\; \sigma\,\varepsilon_{ik}, 
\qquad \varepsilon_{ik}\stackrel{iid}{\sim}N(0,1),
\]
and for future replicates,
\[
\tilde y_{ik} \;=\; \tilde x_{ik}'\bbeta \;+\; v_{ik}\,\gamma_i \;+\; \sigma\,\tilde\varepsilon_{ik}, 
\qquad \tilde\varepsilon_{ik}\stackrel{iid}{\sim}N(0,1),
\]
with the same \((\bbeta,\sigma)\) and \(\{\gamma_i\}\). We denote the proportion of replicated units by
\[
\eta_n \;:=\; \frac{1}{n}\sum_{i=1}^n \mathbf{1}\{K_i\ge 2\}\in[0,1]
\qquad\text{and}\qquad 
\mathcal{I} \;:=\; \{i:K_i\ge 2\},\quad |\mathcal{I}|=n\eta_n.
\]

\noindent We construct estimators in two regimes and then take the better rate.

\medskip
\noindent\textbf{Regime I: replicate-rich case.}
Assume
\begin{equation}\label{eq:eta-cond}
\liminf_{n\to\infty} n^{1/2}\eta_n > 0,
\end{equation}
so that \(|\mathcal{I}|=n\eta_n\) diverges at least of order \(\sqrt{n}\).
For each \(i\in\mathcal{I}\), choose two replicates \(k=1,2\) and define the contrast weights
\[
\nu_{i1} \;:=\; \frac{u_{i1}}{\sqrt{u_{i1}^2+u_{i2}^2}},
\qquad
\nu_{i2} \;:=\; \frac{u_{i2}}{\sqrt{u_{i1}^2+u_{i2}^2}},
\]
so that \(\nu_{i2}u_{i1}-\nu_{i1}u_{i2}=0\) and \(\nu_{i1}^2+\nu_{i2}^2=1\).
Define the per-unit contrast
\begin{equation}\label{eq:fi}
f_i(\bbeta) \;:=\; \nu_{i2}\big(y_{i1}-x_{i1}'\bbeta\big) \;-\; \nu_{i1}\big(y_{i2}-x_{i2}'\bbeta\big)
\;=\; s_i'(\bbeta_0-\bbeta) \;+\; \sigma\,\tilde\varepsilon_i,
\end{equation}
where \(\bbeta_0\) is the true parameter, 
\[
s_i \;:=\; \nu_{i2} x_{i1} \;-\; \nu_{i1} x_{i2},
\qquad
\tilde\varepsilon_i \;:=\; \nu_{i2}\varepsilon_{i1} \;-\; \nu_{i1}\varepsilon_{i2} \sim N(0,1),
\]

\noindent Consider the least squares estimator based on these contrasts:
\begin{equation}\label{eq:beta-hat-contrast}
\hat\bbeta \;:=\; \arg\min_{\bbeta} \sum_{i\in\mathcal{I}} f_i(\bbeta)^2.
\end{equation}
Using \eqref{eq:fi}, the first-order condition is
\[
0 \;=\; \sum_{i\in\mathcal{I}} s_i f_i(\hat\bbeta) 
\;=\; \sum_{i\in\mathcal{I}} s_i s_i'(\bbeta_0-\hat\bbeta) \;+\; \sigma\sum_{i\in\mathcal{I}} s_i \tilde\varepsilon_i,
\]
which yields the explicit linearization
\begin{equation}\label{eq:beta-linearization}
\hat\bbeta-\bbeta_0 \;=\; \sigma\,S_n^{-1} M_n,
\qquad
S_n \;:=\; \sum_{i\in\mathcal{I}} s_i s_i',\quad
M_n \;:=\; \sum_{i\in\mathcal{I}} s_i \tilde\varepsilon_i.
\end{equation}
By uniform boundedness of \(s_i\) and i.i.d. standard normal \(\tilde\varepsilon_i\), we have
\(\|M_n\|=O_p(|\mathcal{I}|^{1/2})\). Moreover, by a standard LLN and a design regularity),
\[
\frac{1}{|\mathcal{I}|} S_n \;\xrightarrow{p}\; Q \quad\text{with}\quad Q:=\mathbb{E}(s_i s_i') \succ 0,
\]
so that the smallest eigenvalue of \(S_n\) satisfies \(\lambda_{\min}(S_n)\ge c\,|\mathcal{I}|\) with probability
tending to one. Combining these bounds in \eqref{eq:beta-linearization} gives
\begin{equation}\label{eq:beta-rate-I}
\|\hat\bbeta-\bbeta_0\| \;=\; O_p\big(|\mathcal{I}|^{-1/2}\big) \;=\; O_p\big((n\eta_n)^{-1/2}\big).
\end{equation}

\noindent For \(\sigma\), define the plug-in estimator
\begin{equation}\label{eq:sigma-hat-contrast}
\hat\sigma^2 \;:=\; \frac{1}{|\mathcal{I}|}\sum_{i\in\mathcal{I}} f_i(\hat\bbeta)^2.
\end{equation}
Expanding \eqref{eq:fi} at \(\hat\bbeta\) and using \(\mathbb{E}[\tilde\varepsilon_i^2]=1\),
\[
\hat\sigma^2 - \sigma^2
\;=\; \frac{1}{|\mathcal{I}|}\sum_{i\in\mathcal{I}} \big(s_i'(\bbeta_0-\hat\bbeta)\big)^2
\;+\; \frac{2\sigma}{|\mathcal{I}|}\sum_{i\in\mathcal{I}} \tilde\varepsilon_i\,s_i'(\bbeta_0-\hat\bbeta).
\]
The first term is bounded by 
\(\|\bbeta_0-\hat\bbeta\|^2 \cdot \lambda_{\max}(S_n)/|\mathcal{I}| = O_p(|\mathcal{I}|^{-1})\),
and the second term is \(O_p(|\mathcal{I}|^{-1/2})\) by Cauchy–Schwarz and \eqref{eq:beta-rate-I}.
Thus
\begin{equation}\label{eq:sigma-rate-I}
\hat\sigma^2 - \sigma^2 \;=\; O_p\big(|\mathcal{I}|^{-1/2}\big)
\qquad\Longrightarrow\qquad
\hat\sigma^{-1}-\sigma^{-1} \;=\; O_p\big(|\mathcal{I}|^{-1/2}\big),
\end{equation}
using the delta method (or a Taylor expansion of \(x\mapsto x^{-1}\)) and \(\sigma>0\).
Combining \eqref{eq:beta-rate-I} and \eqref{eq:sigma-rate-I} yields
\[
\big\|(\hat\bbeta,\hat\sigma^{-1})-(\bbeta_0,\sigma^{-1})\big\|
\;=\; O_p\big((n\eta_n)^{-1/2}\big).
\]

\medskip
\noindent\textbf{Regime II: replicate-scarce case.}
Assume the complementary regime where \(n^{1/2}\eta_n\) is not bounded away from zero,
so that direct use of \eqref{eq:beta-hat-contrast} would be slow. We employ batching and partition the \(n\) units into \(B_n\) disjoint batches \(\{\mathcal{B}_b\}_{b=1}^{B_n}\) of equal size
\(m_n:=\lceil\sqrt{n}\rceil\), so \(B_n=\lfloor n/m_n\rfloor \asymp \sqrt{n}\). Within each batch,
form an aggregated observation by normalized averaging:
\begin{equation}\label{eq:batch-agg}
\bar y_b \;:=\; \frac{1}{\sqrt{m_n}} \sum_{i\in\mathcal{B}_b} \sum_{k=1}^{K_i} w_{ik}\,y_{ik},
\qquad
\bar x_b \;:=\; \frac{1}{\sqrt{m_n}} \sum_{i\in\mathcal{B}_b} \sum_{k=1}^{K_i} w_{ik}\,x_{ik},
\end{equation}
where the deterministic weights \(w_{ik}\) are uniformly bounded and chosen so that
\(\sum_{i\in\mathcal{B}_b}\sum_{k=1}^{K_i} w_{ik}^2\) is uniformly bounded (for normalization).
From \eqref{eq:model.11} and \eqref{eq:model.12},
\begin{equation}\label{eq:batch-model}
\bar y_b \;=\; \bar x_b'\bbeta \;+\; \bar v_b \;+\; \sigma\,\bar\varepsilon_b,
\quad\text{with}\quad
\bar v_b \;:=\; \frac{1}{\sqrt{m_n}} \sum_{i\in\mathcal{B}_b}\sum_{k=1}^{K_i} w_{ik}\,u_{ik}\,\gamma_i,
\quad
\bar\varepsilon_b \;:=\; \frac{1}{\sqrt{m_n}} \sum_{i,k} w_{ik}\,\varepsilon_{ik}.
\end{equation}
By~\hyperref[assumption2]{A2}-~\hyperref[assumption3]{A3} (boundedness of \(u_{ik}\), finite fourth moment of \(\gamma_i\)) and the normalization,
\[
\sup_b \mathbb{E}\big[\bar v_b^2\big] \;\lesssim\; \frac{1}{m_n}\sum_{i\in\mathcal{B}_b}\mathbb{E}[\gamma_i^2]
\;=\; O(1),
\qquad
\sup_b \mathbb{E}\big[\bar\varepsilon_b^2\big] \;=\; O(1).
\]
Hence, the composite batch error \(e_b:=\bar v_b+\sigma\bar\varepsilon_b\) has variance uniformly bounded
in \(b\) and \(n\). Fit the misspecified linear regression
\(\bar y_b=\bar x_b'\bbeta + e_b\) by OLS over \(b=1,\dots,B_n\):
\[
\hat\bbeta \;:=\; \arg\min_{\bbeta} \sum_{b=1}^{B_n}\big(\bar y_b - \bar x_b'\bbeta\big)^2,
\qquad
\hat\sigma^2 \;:=\; \frac{1}{B_n}\sum_{b=1}^{B_n}\big(\bar y_b - \bar x_b'\hat\bbeta\big)^2.
\]
By boundedness of \(\bar x_b\) and a standard LLN, the Gram matrix
\(B_n^{-1}\sum_{b=1}^{B_n} \bar x_b \bar x_b'\) converges in probability to a positive definite
limit (design regularity). With error variance uniformly bounded, the OLS covariance satisfies
\[
\operatorname{Var}(\hat\bbeta) \;=\; O\big(B_n^{-1}\big)
\qquad\Longrightarrow\qquad
\|\hat\bbeta-\bbeta_0\| \;=\; O_p\big(B_n^{-1/2}\big) \;=\; O_p\big(n^{-1/4}\big).
\]
The residual variance estimator \(\hat\sigma^2\) has deviation \(\hat\sigma^2-\sigma^2=O_p(B_n^{-1/2})\),
so \(\hat\sigma^{-1}-\sigma^{-1}=O_p(B_n^{-1/2})=O_p(n^{-1/4})\) by the delta method.
Potential misspecification bias originating from \(\bar v_b\) (e.g., due to a nonzero mean of \(\gamma_i\))
is of order \(O_p(m_n^{-1/2})=O_p(n^{-1/4})\) by the CLT and uniform boundedness of \(u_{ik}\), 
and hence does not alter the \(n^{-1/4}\) rate.

\noindent Therefore, we have
\[
\big\|(\hat{\bbeta}, \hat{\sigma}^{-1}) - (\bbeta, \sigma^{-1})\big\|
\;=\; O_p\!\Big(\min\{n^{-1/4},\; (n\eta_n)^{-1/2}\}\Big)
\;=\; O_p(l_n),
\]
which proves the lemma.

\subsection{Proof of Theorem~\ref{thm.risk.conv}}
    
The proof proceeds by splitting the prediction risk of the proposed estimator into two natural pieces: one that reflects how well we fit with the observed data (the training part) and one that reflects how well we forecast new outcomes (the forecasting part), and then showing that the empirical versions of both pieces concentrate tightly around their theoretical targets. For the training part, we control the variability of averages of log–mixture terms using simple convexity arguments and moment bounds, together with the consistency of the plug-in estimators for the nuisance parameters. For the forecasting part, we exploit the structure of the estimator, built by averaging over improved coordinates and their local neighborhoods, so that it is unbiased and has small variance.  A iterated-logarithm factor appears when upgrading basic variance bounds to uniform, high-probability statements.  The technical proof of the theorem involves the following steps. We need to prove:\\

\noindent \textbf{Step 1.} $
\hat R_{1,n}^{\,g}- R_{1,n}^{\,g}
= O_p\,\!\bigg(\max\Big\{n^{-1/2} ,\, n^{-\alpha}\Big\}\bigg);
$\\
\textbf{Step 2.} $\tilde R_{2,n}^{\,g}(h) -  \hat R_{2,n}^{\,g} (h)=O_p\,\!\bigg(\sqrt{\dfrac{{D_n}}{{n}} } \bigg)$;\\
\textbf{Step 3.} $\tilde R_{2,n}^{\,g}(h) -  R_{2,n}^{\,g}=O_p\!\bigg(\sqrt{\dfrac{{D_n}}{{n}}}\bigg)$.
\vspace{1cm}

\noindent\textbf{Proof of Step 1}
Let $\kappa_n:=\sum_{i=1}^n K_i$, recall
\begin{align*}
\hat R_{1,n}^{\,g}
:=\frac{1}{\kappa_n}\sum_{i=1}^n\sum_{k=1}^{K_i}
\log m_g\!\Big(v_{ik}Z_i(\hat{\bbeta},\hat{\sigma});\,u_i,v_{ik},\hat{\sigma}\Big),
\,\,\\
R_{1,n}^{\,g}
:=\frac{1}{\kappa_n}\sum_{i=1}^n\sum_{k=1}^{K_i}
\mathbb{E}_{\gamma_i}\!\Big[\log m_g\!\Big(v_{ik}Z_i(\bbeta,\sigma);\,u_i,v_{ik},\sigma\Big)\Big],
\end{align*}

\noindent By Assumption~\hyperref[assumption1]{A1}, $\hat{\bbeta}\to\bbeta$ and $\hat{\sigma}\to\sigma$ in probability and, with probability tending to one, $\hat{\sigma}\in[\underline\sigma,\bar\sigma]$ for fixed $0<\underline\sigma<\bar\sigma<\infty$. By~\hyperref[assumption3]{A3}, $\{u_i,v_{ik}\}$ are uniformly bounded and $K_i\le K_{\max}<\infty$. Write
\[
\hat R_{1,n}^{\,g}-R_{1,n}^{\,g}
=\underbrace{\Big(\hat R_{1,n}^{\,g}-\mathbb{E}[\hat R_{1,n}^{\,g}]\Big)}_{\text{sampling fluctuation}}
+\underbrace{\Big(\mathbb{E}[\hat R_{1,n}^{\,g}]-R_{1,n}^{\,g}\Big)}_{\text{plug-in bias}}.
\]

\noindent We set $\psi_{ik}:=\log m_g\!\big(v_{ik}Z_i(\hat{\bbeta},\hat{\sigma});u_i,v_{ik},\hat{\sigma}\big)$. Using that $m_g$ is a Gaussian convolution with parameters uniformly bounded by~\hyperref[assumption2]{A2} and that $Z_i(\hat{\bbeta},\hat{\sigma})$ is affine in $\gamma_i$ and Gaussian noise with bounded coefficients (~\hyperref[assumption3]{A3}),
there exists a constant $C<\infty$ such that
\begin{align}
\sup_{i,k}\,\mathbb{E}\big[\psi_{ik}^2\big]\le C
\quad\text{and}\quad
\sup_{i}\,\mathrm{Var}\!\Big(\sum_{k=1}^{K_i}\psi_{ik}\Big)\le C\,K_i^2,
\label{eq:boundvarphi}
\end{align}
with probability tending to one.

\noindent Next, we define $F_i:=\sum_{k=1}^{K_i}\psi_{ik}$; then $\{F_i\}_{i=1}^n$ are independent across $i$, and
\[
\mathrm{Var}\!\Big(\hat R_{1,n}^{\,g}\Big)
=\mathrm{Var}\!\Big(\kappa_n^{-1}\sum_{i=1}^n F_i\Big)
\le
\kappa_n^{-2}\sum_{i=1}^n \mathrm{Var}(F_i)
\le
\frac{C}{\kappa_n^{2}}\sum_{i=1}^n K_i^2
\le
\frac{C K_{\max}}{\kappa_n},
\]
with probability tending to one. Since $\kappa_n$ is of order $n$, a standard high-probability refinement of Chebyshev’s inequality yields
\[
\hat R_{1,n}^{\,g}-\mathbb{E}[\hat R_{1,n}^{\,g}]
=O_p\!\Big(n^{-1/2}\Big).
\]

\noindent by the smoothness of $(\bbeta,\sigma)\mapsto \mathbb{E}_{\gamma_i}[\log m_g(v_{ik}Z_i(\bbeta,\sigma);u_i,v_{ik},\sigma)]$ (DCT with the uniform second-moment bound above),
\[
\bigl|\mathbb{E}[\hat R_{1,n}^{\,g}]-R_{1,n}^{\,g}\bigr|
\;\le\;
C\,\mathbb{E}\bigl(\,\|\hat{\bm\beta}-\bm\beta\|+|\hat\sigma-\sigma|\,\bigr)
\;=\;O\!\left(n^{-\alpha}\right).
\]
by~\hyperref[assumption1]{A1}. \noindent Combining the two above panels gives,
\[
\hat R_{1,n}^{\,g}- R_{1,n}^{\,g}
= O_p\,\!\bigg(\max\Big\{n^{-1/2},\, n^{-\alpha}\Big\}\bigg).
\]
which completes the proof of Step~1.

\noindent \textbf{Proof of Step 2.} 
Hereafter, due to Lemma~\ref{lem:lemma1},
without any loss of generality we assume $\Kf_i=1$ for all $i=1,\ldots,n$. 
We denote $S_{ik}$ as $S_i$ as  $\Kf_i=1$. We recall the intermediate proposed estimator for $R_{2,n}$
\begin{align}\label{eq:hat.R2_new}
\tilde R_{2,n}^g(h)=\bigg(\sum_{i=1}^n \Kf_i\bigg)^{-1}\sum_{i=1}^n \sum_{k=1}^{\Kf_i} \hat r_{ik}(g) \,\,1\{(i,k) \in \mathcal{I}(h)\}~.
\end{align}

\noindent Define $s_i=|S_i|$, and recall, the set of improved coordinates, $\mathcal{I} (h)$. We use $\hat r_{ik}$ for $i\in \mathcal{I} (h)$.

\noindent We note that:
\begin{align*}
\tilde R_{2,n}^g(h) &= \bigg(\sum_{i=1}^n \Kf_i\bigg)^{-1}\sum_{i=1}^n \hat r_{ik}(g)\, \mathbf{1}\{i \in \mathcal{I} (h)\}\\ &= \bigg(\sum_{i=1}^n \Kf_i\bigg)^{-1}\sum_{i \in \mathcal{I}(h)} s_i^{-1} \sum_{j=1}^{n}\ex_{\gamma_j}\big\{\log \tilde m_g(\hat W_{ikj};u_i,v_{ik}, \hat \sigma)\big\}\cdot\mathbf{1}\{j \in S_i\}
\end{align*}
Similarly, we have,
\begin{align}\label{a.1}
\hat R_{2,n}^{\,g} (h)= \bigg(\sum_{i=1}^n \Kf_i\bigg)^{-1}\sum_{i \in \mathcal{I}(h)}\sum_{j=1}^n s_i^{-1} \ex_{\zeta}\big\{\log \tilde m_g(\hat W_{ikj}(\zeta);u_i,v_{ik},\hat \sigma)\big\}\cdot\mathbf{1}\{j \in S_i\}
\end{align}
It is easy to see that $\hat R_{2,n}^{\,g}$ is an unbiased estimate of $R_{2,n}^g(h)$. We next want to calculate its variance. For that purpose, we represent the terms in the right side of \eqref{a.1} as functions of $Z_j(\hat \beta, \hat \sigma)$ for $j=1,\ldots,n$ and apply the following upper bound
$$ \text{Var}\left(\hat R_{2,n}^{\,g} (h)\right) \leq \bigg(\sum_{i=1}^n \Kf_i\bigg)^{-1}\sum_{i \in \mathcal{I}(h)} \text{Var} \left(\sum_{j=1}^n s_i^{-1} \ex_{\zeta}\big\{\log \tilde m_g(\hat W_{ikj}(\zeta);u_i,v_{ik},\hat \sigma)\big\}\cdot\mathbf{1}\{j \in S_i\}\right)$$
Now, let's define $ \mathcal{A}_i=\sum_{j=1}^n s_i^{-1} f(v_iZ_j(\hat \beta, \hat \sigma);u_i,v_{ik}, \hat \sigma)\cdot\mathbf{1}\{j \in S_i\}.$ Using independence of the sum, 
$$ \text{Var} (\mathcal{A}_i) \leq \left(\sum_{j=1}^n s_i^{-2}\cdot\mathbf{1}\{j \in S_i\}\right) \times \{\max_{1 \leq j \leq n} \text{Var}(f(v_iZ_j(\hat \beta, \hat \sigma);u_i,v_{ik}, \hat \sigma))\}$$
Note that the second term is $O(1)$. Hence, we obtain the following bound
$$\text{Var}(\hat R_{2,n}^{\,g} (h)) \leq n^{-1}\sum_{i \in \mathcal{I}(h)} \text{Var}(\mathcal{A}_i) = O \left(n^{-1} \sum_{i \in \mathcal{I}(h)} 1/s_i\right) = O \left(\frac{D_n}{n}\right).$$  

\noindent Consequently, by Chebyshev’s inequality,
$$\hat R_{2,n}^{\,g}(h)-\mathbb{E}\big[\hat R_{2,n}^{\,g}(h)\big]
=O_p\!\bigg(\sqrt{\frac{{D_n}}{{n}} }\bigg),$$
which completes Step~2.

\noindent \textbf{Proof of Step 3.} 
By definition, 
$$R_{2,n}^g=\frac{1}{n}\sum_{i =1}^n f(\gamma_i,u_i,v_i),$$
where, 
$$f(\gamma,u,v)=\int \bigg\{\log \int \phi\bigg(\frac{\gamma-a}{\sigma (v^2r^{-1}+u^2)^{-1/2}}+z\bigg)g(a)\,da\bigg\}\,\phi(z)\,dz.$$
We assume that $\bm{\gamma}$ are i.i.d. from $\tilde g$ (this is different from $g$), $\bm{u}, \bm{v}$ are i.i.d. from $p_1$ and $p_2$ respectively. Also, $\bm{\gamma}$, $\bm{u}$, $\bm{v}$ are independent among themselves. 

Next, consider: 
$$\hat R_{2,n}^g(h)=\frac{1}{n}\sum_{i \in \mathcal{I}_{F,n}}^n \sum_{j=1}^{n}s_i^{-1}f(\gamma_j,u_i,v_i)1\{j \in S_i\}.$$
As $\bm{\gamma}$, $\bm{u}$ and $\bm{v}$ are independent, by Tower property of conditional expectation, we have: 
$$\ex \{\hat R_{2,n}^g(h)\} = \ex\{R_{2,n}^g\}=\int f(\gamma, u,v) \tilde g(\gamma) p_1(u) p_2(v) d\gamma du dv.$$
Next, note that 
$$\text{Var}(\hat R_{2,n}^g(h))=\text{Var}(E(\hat R_{2,n}^g(h)|\bm{u},\bm{v}))+\ex(\text{Var}(\hat R_{2,n}^g(h)|\bm{u},\bm{v})).$$
Following the similar steps to the proof of step 2, we obtain the variance is of $O \left( D_n / n\right)$. This completes step 3 of this proof. 

\noindent We are yet to assess the asymptotic order of $D_n$. We prove the followings lemmas.

\subsubsection{Supplementary Lemmas
}\begin{lemma}\label{lem:order_for_dn_no_replicate}
    Let $\{U_i\}_{i\ge1}$ and $\{V_i\}_{i\ge1}$ be sequences of i.i.d.\ random variables satisfying~\hyperref[assumption4]{A4}. For each $n\in\mathbb{N}$ and $i=1,\dots,n$, assume $\tilde{K_i} =1$.
Then,
\[
\mathbb{E}[D_n (1)]
= O(\log n)
\quad\text{as } n\to\infty.
\]
In particular, $D_n (1) = O_p(\log n)$.

\end{lemma}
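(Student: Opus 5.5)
With $\Kf_i=1$ and no replicates, $u_i^2$ is just $U_i^2$. The set is $S_i=\{j: U_j^2\ge U_i^2+V_i^2\}$, so $s_i=|S_i|$ and
\[
D_n(1)=n^{-1}\sum_{i=1}^n s_i^{-1}\mathbf{1}\{s_i\ge 1\}.
\]
The plan is to condition on $(U_i,V_i)$ and use that $s_i$ is a binomial count. Write $T=U^2$, with distribution function $F$, and let $p(t,w)=P(T\ge t+w)$. Conditionally on $(U_i,V_i)=(u,v)$, the other $n-1$ indices are i.i.d. and independent of $(U_i,V_i)$. Hence $s_i\mid(u,v)\sim \mathrm{Bin}(n-1,p)$ with $p=p(u^2,v^2)$; note $j=i$ never lies in $S_i$ when $v>0$. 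By exchangeability, $\mathbb{E}[D_n(1)]=\mathbb{E}\big[\,\mathbb{E}[s_1^{-1}\mathbf{1}\{s_1\ge1\}\mid U_1,V_1]\,\big]$.

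\textbf{Step 1: the binomial estimate.} For $X\sim\mathrm{Bin}(N,p)$ we have
\[
\mathbb{E}[X^{-1}\mathbf{1}\{X\ge1\}]\le \mathbb{E}\Big[\tfrac{2}{X+1}\Big]=\frac{2(1-(1-p)^{N+1})}{(N+1)p}\le \min\Big\{1,\frac{2}{(N+1)p}\Big\}.
\]
This is the standard identity for $\mathbb{E}[1/(X+1)]$. Therefore
\[
\mathbb{E}[D_n(1)]\le \mathbb{E}\min\{1,\,2/(n\,p(U^2,V^2))\}.
\]

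\textbf{Step 2: the distribution of $p(U^2,V^2)$.} Since $V^2\ge0$, we have $p(U^2,V^2)\le P(T\ge U^2)=1-F(U^2)$, but this inequality goes the wrong way for our purpose. We need to control the lower tail of $p$. This is where A4 enters. Because $U,V$ are bounded with $f_u$ continuous and positive up to the boundary, and $\inf_{a\in\mathbb{A}} f_u/f_v>0$, the quantity $p(U^2,V^2)$ should be stochastically no smaller than a multiple of a Uniform$(0,1)$ variable. Concretely, I would aim to show
\[
P\big(p(U^2,V^2)\le x\big)\le Cx \quad\text{for all } x\in(0,1].
\]
The event $\{p\le x\}$ requires $U^2+V^2$ to land in the top-$x$ quantile band of $T$. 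Its probability is at most $P(U^2\ge F^{-1}(1-x)-V^2)$, and this is bounded by $Cx$ using the density lower bound at the boundary together with the comparison between $f_v$ and $f_u$. The point $p=0$, where $S_i=\emptyset$, contributes nothing, because the indicator vanishes and Step 1 already gives a bound of at most $1$.

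\textbf{Step 3: integration.} With the tail bound from Step 2,
\[
\mathbb{E}\min\{1,2/(np)\}\le P(p\le 2/n)+\int_{2/n}^{1}\frac{2}{nx}\,dP_p(x).
\]
Integrating by parts using $P(p\le x)\le Cx$ gives
\[
O(1/n)\cdot n+\frac{2}{n}\int_{2/n}^1 \frac{C}{x}\,dx + O(1)=O(\log n).
\]
Markov's inequality then gives $D_n(1)=O_p(\log n)$.

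\textbf{Main obstacle.} Step 2 is the crux: turning A4 into the linear lower-tail bound $P(p\le x)\le Cx$ near the upper endpoint of the support of $U^2$, where $V^2$ pushes $U^2+V^2$ beyond it. The first attempt would be the change of variables $t=U^2$, bounding the density of $U^2+V^2$ restricted to $\{p>0\}$ by a constant times the density of $T$ there. If that fails, a weaker bound $P(p\le x)\le Cx\log(1/x)$ would still give $O(\log^2 n)$, and the argument would then need refining.
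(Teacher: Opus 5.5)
Your skeleton matches the paper's: condition on $(U_i,V_i)$ so that $s_i$ is $\mathrm{Bin}(n-1,p)$, use the identity for $\mathbb{E}[1/(X+1)]$, prove a linear small-ball bound for $p$, and integrate by parts to get the $\log n$. However, \emph{Step 2 as you state it is false, and Step 3 uses it in that false form.}

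The point $p=0$ is an atom. Let $M$ be the right endpoint of the support of $U^2$. Then $\{p=0\}\supseteq\{U^2+V^2>M\}$, and since $V$ is bounded away from zero (A2), $\mathbb{P}(p=0)\ge\mathbb{P}(U^2>M-\inf V^2)>0$ by the definition of $M$. So $\mathbb{P}(p\le x)\le Cx$ cannot hold for small $x$. Your dominating quantity $\mathbb{P}(U^2\ge F^{-1}(1-x)-V^2)$ fails for the same reason: it tends to $\mathbb{P}(U^2+V^2\ge M)>0$ as $x\downarrow 0$. Consequently the first term $\mathbb{P}(p\le 2/n)$ in Step 3 is of order one, not $O(1/n)$; for the unnormalized sum discussed below it contributes order $n$. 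Likewise, $\min\{1,2/(np)\}=1$ on $\{p=0\}$, so the Step 1 bound must carry the factor $\mathbf{1}\{p>0\}$.

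You did note that $p=0$ contributes nothing, so the repair is exactly what the paper does. Prove $G(x):=\mathbb{P}(0<p\le x)\le Cx$ and integrate by parts against $G$.
\begin{itemize}
\item Strict positivity of $f_u$ at the boundary (A4) gives $1-F(M-y)\ge c_0y$ for $y\in[0,\delta]$.
\item Hence, for $x<c_0\delta$, the conditions $0<p\le x$ and $U^2+V^2=M-y$ force $c_0y\le p\le x$. This gives $\{0<p\le x\}\subseteq\{M-x/c_0\le U^2+V^2\le M\}$.
\item Therefore $G(x)\le c_0^{-1}x\sup_{t\in[M-\delta,M]}f_{U^2+V^2}(t)$.
\end{itemize}
The ingredient that closes this step is a bounded density of $U^2+V^2$ near $M$. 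This is automatic, for example, when $U$ and $V$ are independent with bounded densities and $U$ is bounded away from zero. The ratio condition $\inf f_u/f_v>0$ is not what is needed. This is essentially your ``first attempt'' via the density restricted to $\{p>0\}$.

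Two bookkeeping points.

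First, with your normalization $D_n(1)=n^{-1}\sum_i s_i^{-1}\mathbf{1}\{s_i\ge 1\}$, one has $D_n(1)\le 1$ trivially, and your Steps 1--3 actually yield $O(n^{-1}\log n)$. The nontrivial statement, and the one the paper's proof bounds, concerns the unnormalized sum $\sum_i\mathbf{1}\{s_i>0\}/s_i$. Your Step 3 display mixes the two: the first term is multiplied by $n$, but the integral term is not, and as written it is $O(n^{-1}\log n)$.

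Second, $2(1-(1-p)^{N+1})/((N+1)p)$ is not bounded by $1$; it tends to $2$ as $p\downarrow 0$. Your final bound $\min\{1,2/((N+1)p)\}$ is still correct, but only because separately $\mathbb{E}[X^{-1}\mathbf{1}\{X\ge 1\}]\le 1$.
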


\noindent \textbf{Proof:} Let $F_{U^2}$ and $F_{V^2}$ denote the distribution functions of $U^2$ and $V^2$ respectively, and the support of $U^2$ be $[0,M]$ for some $M >0 $. For notational simplicity, we rewrite $U^2$ and $V^2$ as $U$ and $V$, respectively, and denote the corresponding CDFs by $F_1, F_2$. The probability density function for $U$ is denoted by $f_U$. Assumption~\hyperref[assumption4]{A4} implies, there exist constants $\delta>0$ and $c_0>0$ such that, for all $x\in[0,\delta]$,
\[
\mathbb{P}(U \ge M - x)
=
1 - F_1(M-x)
\;\ge\;
c_0 x.
\]
Equivalently, the upper tail of $U$ at the right endpoint $M$ is at least linear in the distance to $M$.
Note that this is not too restrictive assumption. Since most of $U, V$ will be truncated at $M$, \[
\int_{M-\delta}^M f_U(x)\,\, dx \geq (M- (M-\delta)) \inf_{x\in(M-\delta,M]} f_U(x) \,\,  = c_0 \delta
\]
Hence, we only need $\inf_{x\in(M-\delta,M]} f_U(x) \neq 0$, which is usually true for most of the distributions unless the the tail probability is slower than a linear rate.

\noindent Since $\tilde{K_i} =1$, we can rewrite $S_{ik}$ ans $S_i$ and $s_i = |S_i|$. Hence,
\[D_n (1) = \left[\sum_{i=1}^n\frac{\mathbf{1}\{s_i>0\}}{s_i}\right]
\]
Now, we fix $n$ and $i\in\{1,\dots,n\}$, and set
\[
T_i := U_i + V_i,
\qquad
p_i := \mathbb{P}(U_j \ge T_i \mid T_i)
= 1 - F_1(T_i).
\]
For $j=1,\dots,n$ and $j \neq i$, define $I_{ij} := \mathbf{1}_{\{U_j \ge T_i\}}$. Conditional on $T_i$,
the indicators $\{I_{ij}\}_{j\neq i}$ are i.i.d.\ Bernoulli$(p_i)$, so
\[
s_i \,\big|\, T_i \sim \mathrm{Binomial}(n-1,p_i).
\]
Note that if $T_i>M$ then $p_i=0$ and $s_i=0$ almost surely, so
$\mathbf{1}_{\{s_i>0\}}/s_i = 0$ on $\{T_i>M\}$.

\medskip
\noindent
Now, note that, if $S\sim\mathrm{Binomial}(m,p)$ with $m\ge1$ and $p\in(0,1]$. Then
\[
\frac{\mathbf{1}\{S>0\}}{S} \le \frac{2}{S+1},
\]
hence,
\[
\mathbb{E}\!\left[\frac{\mathbf{1}\{S>0\}}{S}\right]
\le 2\,\mathbb{E}\!\left[\frac{1}{S+1}\right].
\]
We compute
\[
\mathbb{E}\Big[\frac{1}{S+1}\Big]
= \sum_{k=0}^n \frac{1}{k+1}\binom{m}{k} p^k (1-p)^{m-k}.
\]
Using $\frac{1}{k+1}\binom{m}{k} = \frac{1}{m+1}\binom{m+1}{k+1}$,
\[
\mathbb{E}\Big[\frac{1}{S+1}\Big]
= \frac{1}{m+1}
\sum_{k=0}^m \binom{m+1}{k+1} p^k (1-p)^{m-k}
= \frac{1}{(m+1)p}
\sum_{j=1}^{m+1} \binom{m+1}{j} p^j (1-p)^{(m+1)-j}.
\]
The sum equals $1 - (1-p)^{m+1}$, so
\[
\mathbb{E}\Big[\frac{1}{S+1}\Big]
= \frac{1 - (1-p)^{m+1}}{(m+1)p}
\le \frac{1}{(m+1)p}.
\]
Thus, for $p\in(0,1]$,
\begin{equation}\label{eq:binom-bound}
\mathbb{E}\!\left[\frac{\mathbf{1}\{S>0\}}{S}\right]
\le \frac{2}{(m+1)p}
\le 2\min\!\left\{1,\frac{1}{(m+1)p}\right\}.
\end{equation}
For $p=0$ we have $S=0$ almost surely and hence
$\mathbb{E}\big[\mathbf{1}_{\{S>0\}}/S\big]=0$, so
\eqref{eq:binom-bound} trivially holds if we interpret the right-hand side
as $2$.

\noindent Applying this with $S=s_i\mid p_i$, we obtain
\[
\mathbb{E}\!\left[\frac{\mathbf{1}\{s_i>0\}}{s_i} \,\Big|\, p_i\right]
\le 2\min\!\left\{1,\frac{1}{n p_i}\right\},
\]
and hence,
\begin{equation}\label{eq:Ei-bound}
\mathbb{E}\!\left[\frac{\mathbf{1}\{s_i>0\}}{s_i}\right]
\le
2\,\mathbb{E}\!\left[\min\!\left\{1,\frac{1}{n p_i}\right\}\right],
\end{equation}
where the integrand is interpreted as $0$ when $p_i=0$.

\noindent By symmetry, the distribution of $p_i$ does not depend on $i$, so let
$p_1$ denote a generic copy. Then
\[
\mathbb{E}[D_n]
=
\sum_{i=1}^n \mathbb{E}\!\left[\frac{\mathbf{1}\{s_i>0\}}{s_i}\right]
\le
2n\,\mathbb{E}\!\left[\min\!\left\{1,\frac{1}{n p_1}\right\}\right].
\]

\medskip
\noindent
Now, define $p_1 = p(T_1)$.
By~\hyperref[assumption4]{A4}, for $x\in[0,\delta]$,
\[
p(M-x) = \mathbb{P}(U_1 \ge M-x) \ge c_0 x.
\]
Define $p_* := p(M-\delta) = \mathbb{P}(U_1 \ge M-\delta)>0$.

\noindent If $T_1 \le M-\delta$, then $p_1 = p(T_1) \ge p_*$ and thus cannot be arbitrarily small. On the other hand, if $T_1\in(M-\delta,M]$, write $T_1=M-x$ with $x\in(0,\delta]$, so that
\[
p_1 = p(M-x) \ge c_0 x,\qquad \text{hence}\quad
p_1 \le y \ \Rightarrow\ x \le y/c_0,\ \ T_1 \ge M - y/c_0.
\]

\noindent Also note that $p_1>0$ implies $T_1\le M$, because $U_1\le M$ almost surely.
Therefore, for any $y>0$,
\[
\{0 < p_1 \le y\} \subseteq \{M - y/c_0 \le T_1 \le M\}.
\]

\noindent  Moreover, for $y>0$ small enough so that $y/c_0 \le \delta$,
\[
\mathbb{P}(0 < p_1 \le y)
\le \mathbb{P}(M - y/c_0 \le T_1 \le M)
= \int_{M-y/c_0}^{M} f_T(t)\,dt
\le C_T \frac{y}{c_0}.
\]
where $C_T =\text{sup}_{t\in [M-y/c_0, M]} \,\,f_T(t)$. Thus there exists $y_0>0$ and a constant $K>0$ such that, for all $0<y\le y_0$,
\begin{equation}\label{eq:small-p}
F_P(y) := \mathbb{P}(0 < p_1 \le y) \le K y.
\end{equation}
\noindent For $n$ large enough we have $1/n \le y_0$.
We do the following split
\[
\min\!\left\{1,\frac{1}{n p_1}\right\}
=
\mathbf{1}\{0<p_1 < 1/n\}
+ \frac{1}{n p_1}\,\mathbf{1}\{p_1 \ge 1/n\}.
\]
Hence,
\[
\mathbb{E}\!\left[\min\!\left\{1,\frac{1}{n p_1}\right\}\right]
\le
\mathbb{P}\!\left(0<p_1 < \frac{1}{n}\right)
+ \frac{1}{n}\,\mathbb{E}\!\left[\frac{1}{p_1}\,\mathbf{1}\{p_1 \ge 1/n\}\right].
\]

\noindent Using \eqref{eq:small-p},
\[
\mathbb{P}\!\left(0<p_1 < \frac{1}{n}\right)
\le K\cdot \frac{1}{n}
= O\!\left(\frac{1}{n}\right).
\]

\noindent For the second term, define the cdf $F_P$ of $p_1$ on $(0,1]$ as above.
Then,
\[
\mathbb{E}\!\left[\frac{1}{p_1}\,\mathbf{1}\{p_1 \ge 1/n\}\right]
= \int_{(1/n,1]} \frac{1}{p}\,dF_P(p).
\]
By integration by parts,
\[
\int_{1/n}^1 \frac{1}{p}\,dF_P(p)
= \left[\frac{F_P(p)}{p}\right]_{p=1/n}^{1}
+ \int_{1/n}^1 \frac{F_P(p)}{p^2}\,dp.
\]
Since $F_P(1)\le 1$ and $F_P(1/n)\le K(1/n)$, we have,
\[
\left[\frac{F_P(p)}{p}\right]_{p=1/n}^{1}
\le 1 + K.
\]

\noindent For the integral, split at some fixed $\delta'\in(0,1]$ with $\delta'\le y_0$:
\[
\int_{a/n}^1 \frac{F_P(p)}{p^2}\,dp
= \int_{a/n}^{\delta'} \frac{F_P(p)}{p^2}\,dp
+ \int_{\delta'}^{1} \frac{F_P(p)}{p^2}\,dp.
\]
For $p\in[1/n,\delta']$, \eqref{eq:small-p} gives $F_P(p)\le Kp$, hence
\[
\int_{1/n}^{\delta'} \frac{F_P(p)}{p^2}\,dp
\le K\int_{1/n}^{\delta'} \frac{1}{p}\,dp
= K\log\!\left(\delta' n\right)
= K\log n + O(1).
\]
For $p\in[\delta',1]$, we simply use $F_P(p)\le 1$:
\[
\int_{\delta'}^{1} \frac{F_P(p)}{p^2}\,dp
\le \int_{\delta'}^{1} \frac{1}{p^2}\,dp
= \frac{1}{\delta'} - 1
= O(1).
\]

\noindent Combining these,
\[
\mathbb{E}\!\left[\frac{1}{p_1}\,\mathbf{1}\{p_1 \ge 1/n\}\right]
= O(1+\log n).
\]
Therefore,
\[
\mathbb{E}\!\left[\min\!\left\{1,\frac{1}{n p_1}\right\}\right]
\le
O\!\left(\frac{1}{n}\right)
+ \frac{1}{n}\,O(1+\log n)
= O\!\left(\frac{\log n}{n}\right).
\]

\noindent Returning to \eqref{eq:Ei-bound}, we obtain
\[
\mathbb{E}\!\left[\frac{\mathbf{1}\{s_i>0\}}{s_i}\right]
\le 2\,\mathbb{E}\!\left[\min\!\left\{1,\frac{1}{n p_i}\right\}\right]
= O\!\left(\frac{\log n}{n}\right),
\]
and by symmetry this bound is the same for each $i=1,\dots,n$. Hence,
\[
\mathbb{E}[D_n (1)]
= \sum_{i=1}^n \mathbb{E}\!\left[\frac{\mathbf{1}\{s_i>0\}}{s_i}\right]
= n \cdot O\!\left(\frac{\log n}{n}\right)
= O(\log n).
\]

\noindent Finally, by Markov's inequality,
$D_n /\log n$ is bounded in probability, i.e. $D_n = O_p(\log n)$.

\noindent This completes the proof.

\noindent \textbf{Remark:} Note that, $D_n(h)$ is a non-increasing random variable in $h$. Hence, \[
\mathbb{E}[D_n (h_n)] \leq \mathbb{E}[D_n (1)] = O\,(\log n)
\]
for all tuning parameters $h_n$ satisfying the regularity conditions in Section~\ref{sec:section2}.
\begin{lemma}\label{lem:order_for_dn_with_replicates}
Under assumption~\hyperref[assumption4]{A4} and~\hyperref[assumption7]{A7}, and $\tilde{K_i} =1$ for all $i =1,2,\ldots,n$, and $\limsup_{n \to \infty} n^{-1} \eta_n^{-1} h_n = 0$
\[
\mathbb{E}\big[D_n (h_n)\big]
=\mathbb{E}\!\left[\sum_{i\in\mathcal{I} (h_n)} \frac{\mathbf{1}\{s_i>h_n\}}{s_i}\right]
=O\!\left(\frac{\log n}{\eta_n}\right)
\quad\text{as }n\to\infty,
\]
and hence $D_n=O_p\!\left(\eta_n^{-1}\log n\right)$.
\end{lemma}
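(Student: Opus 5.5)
The plan is to mirror the argument of Lemma~\ref{lem:order_for_dn_no_replicate}. We reduce $\mathbb{E}[D_n(h_n)]$ to $n$ times the expectation of a single term, bound that conditional expectation via a binomial reciprocal-moment inequality, and then integrate over the distribution of the threshold variable. The new ingredient is replication. A replicated unit $j$ (with $K_j\ge 2$) has $u_j^2=\sum_k u_{jk}^2$, which is roughly twice a single squared covariate. Hence replicated units lie well above the threshold $T_i:=u_i^2+v_i^2$ for most unreplicated $i$. Only these $n\eta_n$ replicated units can reliably be counted on to populate $S_i$.

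The first step is to fix $i$ and write $s_i=|S_i|\ge s_i^{\mathrm{rep}}:=\#\{j\in\mathcal{I},\,j\neq i: u_j^2\ge T_i\}$. Conditional on $T_i$ and on the replication pattern, $s_i^{\mathrm{rep}}\sim\mathrm{Binomial}(n\eta_n-O(1),\tilde p_i)$, where $\tilde p_i=\mathbb{P}(u_j^2\ge T_i)$ is computed under the law of a replicated unit's aggregated $u_j^2$. The treatment of $S_i$ as a union over replicated and unreplicated units is then monotone, since adding units only decreases $\mathbf 1\{s_i>h_n\}/s_i$ once $s_i>h_n$. More precisely, I would bound $\mathbf 1\{s_i>h_n\}/s_i\le \mathbf 1\{s_i^{\mathrm{rep}}>0\}/s_i^{\mathrm{rep}}+\mathbf 1\{s_i^{\mathrm{rep}}=0,\,s_i>h_n\}/h_n$. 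The second piece contributes at most $1/h_n\le 1$ per index; it must be handled by the unreplicated-neighbour count, for which the no-replicate lemma already gives an $O(\log n)$ total.

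The binomial bound \eqref{eq:binom-bound} then gives $\mathbb{E}[\mathbf 1\{s_i^{\mathrm{rep}}>0\}/s_i^{\mathrm{rep}}\mid \tilde p_i]\le 2\min\{1,(n\eta_n\tilde p_i)^{-1}\}$. The distributional step copies the previous proof with $n$ replaced by $m_n:=n\eta_n$. Assumption~\hyperref[assumption4]{A4} (positive continuous density at the upper boundary, and the density-ratio condition aligning supports of $U$ and $V$) yields the linear small-ball bound $\mathbb{P}(0<\tilde p_i\le y)\le Ky$. Integration by parts of $\int_{1/m_n}^1 p^{-1}\,dF_P(p)$ then gives $\mathbb{E}\min\{1,(m_n\tilde p_i)^{-1}\}=O(m_n^{-1}\log m_n)$. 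Summing over $i\le n$ yields $n\cdot O(\log n/(n\eta_n))=O(\eta_n^{-1}\log n)$. Here I use that Assumption~\hyperref[assumption7]{A7} ensures $m_n\to\infty$ polynomially, so $\log m_n\asymp\log n$, and that $h_n=o(n\eta_n)$ keeps $\mathcal I(h_n)$ from being emptied, which is needed only for the accompanying Lemma~\ref{thm:rich_improvement_factor}, not for the upper bound. Markov's inequality gives the $O_p$ statement.

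The main obstacle is the distributional step for replicated units. The law of $u_j^2$ for $K_j\ge2$ is a convolution of squared covariates, so its upper endpoint and boundary density differ from those of $f_u$. One must check that the linear tail condition $\mathbb P(u_j^2\ge M'-x)\ge c_0x$ still holds near the new endpoint $M'$, or else show that $T_i$ stays bounded away from $M'$ so that $\tilde p_i\ge p_*>0$. I would try the latter first: since $T_i\le M_u+M_v$ while a replicated $u_j^2$ can approach $2M_u$, bounded covariates from A2 may keep $\tilde p_i$ uniformly positive. In that case even the bound $O(\eta_n^{-1})$ follows directly, and the $\log n$ is only a safety margin. If this fails, say when $M_v>M_u$, I fall back on the integration-by-parts argument, using continuity of $f_u$ at the boundary to transfer the linear tail to the convolution.
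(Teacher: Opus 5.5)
Your skeleton is the paper's: bound $s_i$ below by the replicated-neighbour count $\mathrm{Bin}(n\eta_n,\tilde p_i)$, apply the binomial reciprocal bound, and rerun the integration by parts of Lemma~\ref{lem:order_for_dn_no_replicate} with $n\eta_n$ in place of $n$. Your separate treatment of $\{s_i^{\mathrm{rep}}=0,\ s_i>h_n\}$ is unnecessary: $\mathbf 1\{s_i>0\}/s_i\le 2/(s_i+1)\le 2/(s_i^{\mathrm{rep}}+1)$, and $\mathbb{E}[1/(S+1)]\le 1/((m+1)p)$ for $S\sim\mathrm{Bin}(m,p)$ already covers that event.

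The gap is in the step you flag, and it is worse than you anticipate. First, your fallback is false. If $U_1^2,U_2^2$ are independent with density bounded between positive constants near the endpoint $M$, then $\mathbb{P}(U_1^2+U_2^2\ge 2M-x)\asymp x^2$, so a linear tail does not pass to the convolution. The integration by parts does not need a linear tail of $W$. It needs the small-ball bound $\mathbb{P}(0<\tilde p_i\le y)\le Ky$, which compares the tail of $W$ with the law of $T_i$ near $2M$. For an unreplicated $i$ this bound holds, by a different mechanism than yours:
\begin{itemize}
\item A4 forces $\sup v\le \sup u$, so your case $M_v>M_u$ never occurs.
\item If $M_v<M_u$, your first option gives $\tilde p_i\ge p_*>0$.
\item If $M_v=M_u$, then $\tilde p_i\gtrsim (2M-T_i)^2$ for $T_i$ near $2M$. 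Also $\mathbb{P}(T_i\ge 2M-x)\le \mathbb{P}(U_i^2\ge M-x,\ V_i^2\ge M-x)\lesssim x^2$ (for independent $u_i,v_i$, or a bounded joint density). Hence $\{0<\tilde p_i\le y\}\subseteq\{T_i\ge 2M-C\sqrt y\}$ has probability $O(y)$: the two quadratic rates cancel.
\end{itemize}
The paper's ``similar argument'' skips this verification as well.

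For a replicated $i$ there is no such cancellation, yet you sum your per-index bound over all $i\le n$. There $T_i=W_i+V_i$ with $V_i\ge \inf v^2>0$, so $T_i$ typically has a density bounded away from zero at $2M$. An example is independent single-replicate $u_{jk}^2$ and $v_i^2$ uniform on $[a,M]$ with small $a>0$ and $K_j=2$. Then $\mathbb{P}(0<\tilde p_i\le y)\asymp \sqrt y$, and $\mathbb{E}[\mathbf 1\{s_i>h_n\}/s_i]$ is of order $(h_n n\eta_n)^{-1/2}$ rather than $\log n/(n\eta_n)$. These $n\eta_n$ indices alone contribute order $(n\eta_n/h_n)^{1/2}$ to $\mathbb{E}[D_n(h_n)]$. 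That exceeds $\eta_n^{-1}\log n$ whenever $n\eta_n^3\gg h_n(\log n)^2$, for instance with $\eta_n\equiv\eta$ and $h_n=1$.

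So your argument proves the bound only for the unreplicated part of $D_n(h_n)$. That is also all the paper's proof covers: its factor $n(1-\eta_n)$ silently drops the replicated indices. To complete the argument you need the cancellation step above for unreplicated $i$. For replicated $i$ you need one of the following:
\begin{itemize}
\item an assumption giving the aggregated law of a replicated $u_j^2$ a linear tail at its endpoint (as A4 would if read for the aggregated covariate);
\item a restriction such as $n\eta_n^3\lesssim h_n(\log n)^2$;
\item removal of those indices from $D_n$.
\end{itemize}
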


\begin{proof}
For $i\in\mathcal{I}(h_n)$, define
\[
p_{1i}:=\mathbb{P}\big(U\ge U_i+V_i\big),
\qquad
p_{2i}:=\mathbb{P}\big(W\ge U_i+V_i\big),
\]
where $U$ is a single-replicate draw and $W$ is the sum of two independent replicates (both independent of $(U_i,V_i)$).
Define
\[
p_{1i} \;=\; \mathbb{P}\bigl(U > U_i + V_i\bigr),
\qquad
p_{2i} \;=\; \mathbb{P}\bigl(W > U_i + V_i\bigr),
\]
We are interested in
\[
\mathbb{E}\left[\sum_{i=1}^n \frac{1}{s_i}\,\mathbf{1}\{i \in \mathcal{I} (h_n)\}\right].
\]

\noindent Note that $i \notin S_i$ and for all $i \in \mathcal{I} (h_n)$,
\[
s_i \sim \mathrm{Bin}\bigl(n\eta_n, p_{2i}\bigr)
      + \mathrm{Bin}\bigl(n(1-\eta_n)-1, p_{1i}\bigr)
\;\;\Rightarrow\;\;
s_i \;\ge\; \mathrm{Bin}\bigl(n\eta_n, p_{2i}\bigr).
\]
Next,
\[
\mathbb{E}\!\left[\frac{1}{s_i}\,\Big|\,p_{2i}\right]
\;\le\;
\mathbb{E}\!\left[\frac{1}{n\eta_n p_{2i}}\right].
\]
Hence,
\[
\mathbb{E}\left[\sum_{i=1}^n \frac{1}{s_i}\,\mathbf{1}\{i \in \mathcal{I}_{F,n}\}\,\Big|\,p_{2i}\right]
\;\le\;
n(1-\eta_n)\,
\mathbb{E}\!\left[\min\!\left\{1,\frac{1}{n\eta_n p_{2i}}\right\}\right].
\]
Following a similar argument as in Lemma~\ref{lem:order_for_dn_no_replicate}, we conclude
\[
\mathbb{E}\!\left[\min\!\left\{1,\frac{1}{n\eta_n p_{2i}}\right\}\right]
= O\!\left(\frac{\log n}{n\eta_n}\right).
\]
Therefore, we can conclude
\[
\mathbb{E} [D_n (h_n)]
= \mathbb{E}\!\left[\sum_{i \in \mathcal{I} (h_n)} \frac{1}{s_i}\right]
= O\!\left(\frac{(1 - \eta_n)\log n}{\eta_n}\right) = O\!\left(\frac{\log n}{\eta_n}\right).
\]
This concludes the lemma.

\noindent Now, note that if $\eta_n = n^{-\beta}$, this implies
$
\mathbb{E} [D_n] = O\!\left(n^{\beta}\log n\right).
$\end{proof}
\noindent Now Theorem~\ref{thm.risk.conv} follows from aggregating the results of Step 1-3 above and Lemma~\ref{lem:order_for_dn_with_replicates}

\subsection{Proof of Theorem~\ref{thm.risk.conv_no_repicates}}

    This proof directly follows from the steps 1-3 shown above in the proof of Theorem~\ref{thm.risk.conv} along with the rate calculation of $D_n$ for highly data-scarce model in Lemma~\ref{lem:order_for_dn_no_replicate}.
\subsection{Proof of Theorem~\ref{thm:scarce_improvement_factor}}
 Let the distribution functions of $\bm{u}^2$ and $\bm{v}^2$ be denoted by $F_1$ and $F_2$, respectively. 
For notational simplicity, we rewrite $u^2$ and $v^2$ as $u$ and $v$, respectively, for
the rest of the proof. Furthermore, we assume $K_i = 1$ for all $i=1,2,\ldots,n$. Hence, $S_{ik}$ defined in
Section~\ref{sec:section2} is now considered as $S_i$. Let $U_1 \le U_2 \le \cdots \le U_n$ be the ordered sample simulated from a
continuous cdf $F_1$ with bounded support. Consider any $\omega>0$,

\noindent Define,
\begin{align*}
X_{n,\omega}
&= \frac{1}{\lfloor n\omega\rfloor}
   \sum_{i=1}^{\lfloor n\omega\rfloor}
   \mathbf{1}\!\left\{V_i>U_{\lfloor n-h_n\rfloor}-U_i\right\},\\
Y_{n,\omega}
&= \frac{1}{\lfloor n\omega\rfloor}
   \sum_{i=1}^{\lfloor n\omega\rfloor}
   \mathbf{1}\!\left\{V_i>U_{\lfloor n-h_n\rfloor}-U_{\lfloor n\omega\rfloor}\right\}.
\end{align*}
By monotonicity of the ordered sample, $U_i \le U_{\lfloor n\omega\rfloor}$ for all $i\le \lfloor n\omega\rfloor$,
hence
\begin{align}
X_{n,\omega} \le Y_{n,\omega}. \label{eq:XleY_simple}
\end{align}

\noindent Let $A_n:=U_{\lfloor n-h_n\rfloor}$ and $B_n:=U_{\lfloor n\omega\rfloor}$. By standard order-statistic
consistency,
\[
A_n \xrightarrow{P} F_1^{-1}(1),
\qquad
B_n \xrightarrow{P} F_1^{-1}(\eta).
\]
Conditional on $U$, we have
\[
\mathbb{E}\big[Y_{n,\omega}\,\big|\,U\big]
= \frac{1}{\lfloor n\omega\rfloor}
  \sum_{i=1}^{\lfloor n\omega\rfloor} \bigl(1-F_2(A_n-B_n)\bigr)
= 1-F_2(A_n-B_n).
\]
By continuity of $F_2$,
\[
\mathbb{E}[Y_{n,\omega}]
= \mathbb{E}_U\!\left[1-F_2(A_n-B_n)\right]
\xrightarrow{} 1-F_2\big(F_1^{-1}(1)-F_1^{-1}(\omega)\big).
\]
Moreover, writing $Z_i:=\mathbb{I}\{V_i>A_n-B_n\}$,
\begin{align*}
\mathrm{Var}(Y_{n,\omega})
&= \frac{1}{\lfloor n\omega\rfloor}\,\mathrm{Var}(Z_1)
 + \frac{\lfloor n\omega\rfloor-1}{\lfloor n\omega\rfloor}\,\mathrm{Cov}(Z_1,Z_2)\\
&\le \frac{1}{4\lfloor n\omega\rfloor}
 + \mathrm{Var}_U\!\left[\,1-F_2(A_n-B_n)\,\right]
 \xrightarrow{} 0,
\end{align*}
where we used conditional independence of $V_1,V_2$, boundedness of the indicators, and dominated
convergence for the $U$-variation. By Chebyshev's inequality,
\[
Y_{n,\omega}\xrightarrow{P} 1-F_2\big(F_1^{-1}(1)-F_1^{-1}(\omega)\big).
\]

\noindent From \eqref{eq:XleY_simple}, for any $a>0$,
\[
\mathbb{P}\Big(
X_{n,\omega} \le 1-F_2\big(F_1^{-1}(1)-F_1^{-1}(\omega)\big) + a
\Big)\;\xrightarrow{}\;1.
\]
Equivalently, the proportion of ``improved'' coordinates within the first $\lfloor n\omega\rfloor$ indices,
\[
1 - X_{n,\omega} \;\ge\; F_2\big(F_1^{-1}(1)-F_1^{-1}(\omega)\big) - a,
\]
with probability tending to one. Hence the overall improvement fraction satisfies
\[
\textsf{IF}_n
\;\ge\; \frac{\lfloor n\omega\rfloor}{n}\,\bigl(1-X_{n,\omega}\bigr)
\;\ge\; \omega\cdot \left(F_2\big(F_1^{-1}(1)-F_1^{-1}(\omega)\big) - a\right),
\]
and, for every $a, \omega>0$,
\[
\mathbb{P}\Big(\textsf{IF}_n > \omega\cdot
\left(F_2\big(F_1^{-1}(1)-F_1^{-1}(\omega)\big) - a\right)\;\longrightarrow\;1.
\]
Hence, the result follows. Note that, if $F_1 = F_2 = F$ is symmetric on $[0,M]$,
\[
F\big(M-F^{-1}(1/2)\big) = F(M/2) = \tfrac{1}{2}.
\]
Consequently, for $\omega=\tfrac{1}{2}$,
$\liminf_{n\to\infty} \textsf{IF}_n \ge 1/4$ in probability.

\subsubsection{Proof of Lemma~\ref{thm:rich_improvement_factor}}
 We recall the notations used in Section~\ref{sec:section2}. The sample size is $n$ and let
\[
K_i = 1 \quad \forall\, i \le \bigl\lfloor n(1-\eta_n)\bigr\rfloor,
\qquad
K_i > 1 \quad \forall\, i > \bigl\lfloor n(1-\eta_n)\bigr\rfloor.
\]
Define:
\[
U_i = u_{i1}^2 \quad \forall\, i \le \bigl\lfloor n(1-\eta_n)\bigr\rfloor,
\qquad
U_i = \sum_{j>1} u_{ij}^2 \quad \forall\, i > \bigl\lfloor n(1-\eta_n)\bigr\rfloor.
\]
We order $\{U_1, U_2,\ldots, U_{\lfloor n(1-\eta_n)\rfloor}\}$ by
$\{\tilde U_1, \tilde U_2,\ldots, \tilde U_{\lfloor n(1-\eta_n)\rfloor}\}$.\\
Similarly, we order
$\{U_{\lfloor n(1-\eta_n)\rfloor + 1}, U_{\lfloor n(1-\eta_n)\rfloor + 2},\ldots, U_n\}$ by
$\{\tilde U_{\lfloor n(1-\eta_n)\rfloor + 1}, \tilde U_{\lfloor n(1-\eta_n)\rfloor + 2},\ldots, \tilde U_n\}$.
We also denote
\[
U_{n-h_n}^* = U_{\lfloor(n-h_n)\rfloor},
\]
i.e., the $\lfloor n-h_n\rfloor$-th order statistic of the entire data, for a tuning parameter $h_n$.

In the earlier sections, we have already explained why $\{|S_i| > h_n\}$ is an event of interest.
For providing more clarity to the reader, note that this event corresponds to shrinkage for the $i$'th variable,
and we will be able to produce better shrunken risk estimates in these cases. For the $i$'th coordinate, we
would be able to provide a better shrunken risk estimate only when there exists at least one index $j$ that
satisfies $U_j \ge U_i + V_i$. In other words, we can not provide a better risk estimate for the $i$'th
coordinate if $V_i > U_{n-h_n}^* - U_i$. Define:
\begin{align*}
X_{n,\eta_n}
&= \frac{1}{\lfloor n(1-\eta_n)\rfloor}
   \sum_{i=1}^{\lfloor n(1-\eta_n)\rfloor}
   \mathbf{1}\!\left\{V_i > U_{n-h_n}^* - U_i\right\},\\
Y_{n,\eta_n}
&= \frac{1}{\lfloor n(1-\eta_n)\rfloor}
   \sum_{i=1}^{\lfloor n(1-\eta_n)\rfloor}
   \mathbf{1}\!\left\{V_i > \tilde {U}_{n-h_n} - \tilde {U}_{\lfloor n(1-\eta_n)\rfloor}\right\}.
\end{align*}
We note that,
\begin{align}
X_{n,\eta_n} \le Y_{n,\eta_n}, \label{eq:XleY}
\end{align}
since for each $i \le \lfloor n(1-\eta_n)\rfloor$ we have
$U_{n-h_n}^* - U_i \ge \tilde {U}_{n-h_n} - \tilde{U}_{\lfloor n(1-\eta_n)\rfloor}$.

The proof goes as follows:
\begin{enumerate}
\item \label{step1} $Y_{n,\eta_n} \xrightarrow{P} 0$.
\item Combining \ref{step1} and \eqref{eq:XleY} we conclude
\[
X_{n,\eta_n} \xrightarrow{P} 0,
\]
hence, for any $\delta>0$
\[
\mathbb{P}\!\left(\textsf{IF}_n > (1-\eta_n-\delta)\right)\to 1.
\]
\end{enumerate}
Define $A_n := \tilde{U}_{n-h_n}$, note that by assumption $n^{-1}h_n = o(\eta_n)$.
Hence, $\tilde{U}_{n-h_n}$ falls in the block with replicates (where $K_i \ge 2$) and
$B_n := \tilde{U}_{\lfloor n(1-\eta_n)\rfloor}$, the maximum of the no-replicate block (where $K_i = 1$).

Let the distribution functions of $u^2$ and $v^2$ be denoted by $F_1$ and $F_2$,
respectively. Note that assumption~\hyperref[assumption4]{A4} implies $F_1$ and $F_2$ have same bounded support (say $M$).
Since $F_1$ is continuous, within the no-replicate block the $U_i$'s are i.i.d. with support contained in
$[0,M]$. Therefore, as $\lfloor n(1-\eta_n)\rfloor \to \infty$,
\[
B_n \xrightarrow{P} M.
\]
Within the block with replicates, the $U_i$'s are i.i.d. sums over indices $j\ge 2$ of $u_{ij}^2$, so their
support is contained in $[0,sM]$ for some integer $s\ge 2$. Standard quantile consistency yields
\[
A_n / G_{\mathrm{right}}^{-1}(1-n^{-1}h_n) \xrightarrow{P} 1
\quad\text{and}\quad
G_{\mathrm{right}}^{-1}(1-n^{-1}h_n)\uparrow sM \quad \text{as } n \to \infty,
\]
where $G_{\text{right}}$ is the CDF of $U_j$'s in the block with replicates. so, in particular,
\[
A_n - B_n \xrightarrow{P} (s-1)M \ge M.
\]

\noindent Using the continuity of $F_2$ on $[0,M]$, we have
\[
\mathbb{E}[Y_{n,\eta_n}]
= \mathbb{E}_U\!\left[1-F_2(A_n-B_n)\right]
\xrightarrow{} 1-F_2(M)=0.
\]

\noindent We similarly control the variance. Let $Z_i := \mathbb{I}\{V_i > A_n-B_n\}$ for
$i \le \lfloor n(1-\eta_n)\rfloor$. Then
\begin{align*}
\mathrm{Var}(Y_{n,\eta_n})
&= \frac{1}{\lfloor n(1-\eta_n)\rfloor}\,\mathrm{Var}(Z_1)
 + \frac{\lfloor n(1-\eta_n)\rfloor-1}{\lfloor n(1-\eta_n)\rfloor}\,\mathrm{Cov}(Z_1,Z_2)\\
&\le \frac{1}{4\lfloor n(1-\eta_n)\rfloor}
 + \mathrm{Var}_U\!\left[1-F_2(A_n-B_n)\right],
\end{align*}
where we used conditional independence of $V_1,V_2$ given $U$ to obtain
\begin{align*}
\mathrm{Cov}(Z_1,Z_2)
&= \mathbb{E}_U\!\left[\bigl(1-F_2(A_n-B_n)\bigr)^2\right]
 - \Bigl(\mathbb{E}_U\!\left[1-F_2(A_n-B_n)\right]\Bigr)^2\\
&= \mathrm{Var}_U\!\left[1-F_2(A_n-B_n)\right].
\end{align*}
Since $A_n-B_n\xrightarrow{P} (s-1)M$ and $0\le 1-F_2(\cdot)\le 1$, dominated convergence implies
$\mathrm{Var}_U[1-F_2(A_n-B_n)]\to 0$. Hence $\mathrm{Var}(Y_{n,\eta_n})\to 0$.

\noindent By Chebyshev's inequality, $Y_{n,\eta_n}\xrightarrow{P}0$, completing Step \ref{step1} and the proof.

By Step~\ref{step1} and the inequality \eqref{eq:XleY}, we have $X_{n,\eta_n} \xrightarrow{P} 0$.
Recall that $X_{n,\eta_n}$ averages the ``no-improvement'' indicators over the first
$L_n := \lfloor n(1-\eta_n)\rfloor$ ordered statistics $\{U_i\}$, i.e., it is the proportion of non-improved
coordinates within this block. Hence the fraction of improved coordinates within the block is
$1-X_{n,\eta_n}\xrightarrow{P}1$. Consequently, the overall improvement factor satisfies
\[
\textsf{IF}_n
\;\ge\; (1-\eta_n)\,\bigl(1-X_{n,\eta_n}\bigr).
\]

\noindent We can therefore claim for any fixed $\delta>0$,
\[
\mathbb{P}\!\left(\textsf{IF}_n > 1-\eta_n-\delta\right)\;\longrightarrow\;1.
\]
This completes the proof of the lemma.

\subsubsection{Proof of Corollary~\ref{thm.main_criteria_result}}

In the proof, positive constants $b_{1},b_{2},\ldots$ do not depend on $n$ and $g$,
and notations $\lesssim$ and $\gtrsim$ indicates inequalities with multiplicative constants independent from $n$ and $g$.
From Assumption~\hyperref[assumption1]{A1},
we have $\hat{\sigma}\in[\underline{\sigma},\overline{\sigma}]$ for fixed $0<\underline{\sigma}<\overline{\sigma}<\infty$ 
and $\|\hat{\bbeta}-\bbeta\|<1$
with probability tending to one.
We denote the corresponding set of $(\hat{\bbeta},\hat{\sigma})$ by $H_{n}:=\{(\hat{\bbeta},\hat{\sigma})\,:\,
\hat{\sigma}\in[\underline{\sigma},\overline{\sigma}],\,
\|\hat{\bbeta}-\bbeta\|<1
\}$.

First, 
note that we have
\begin{align*}
&\R_n[\cb](\bm{\theta},\hat{p}[\hat{\bm{\beta}},\hat{\sigma}, \hat{g}])
-\R_n[\cb](\bm{\theta},p^{\sf OR})
\\
&=
\R_n[\cb](\bm{\theta},\hat{p}[\hat{\bm{\beta}},\hat{\sigma}, \hat{g}])
-\hat R_n^{\,g^{\sf OR}}(h_n)
+\hat R_n^{\,g^{\sf OR}}(h_n)
-\R_n[\cb](\bm{\theta},p^{\sf OR})
\\
&\le 
\R_n[\cb](\bm{\theta},\hat{p}[\hat{\bm{\beta}},\hat{\sigma}, \hat{g}])
-\hat R_n^{\,\hat{g}}(h_n)
+\hat R_n^{\,g^{\sf OR}}(h_n)
-\R_n[\cb](\bm{\theta},p^{\sf OR})
\\
&\le 2\sup_{g\in\mathcal{G}}|\widehat{\Delta}(g)| + \sup_{g\in\mathcal{G}} |\widehat{\mathcal{P}}_{n}[g]|
\end{align*}
where we define
\begin{align*}
\widehat{\Delta}_{n}[g]&:=\R_n[\bm{\theta},\cb](\bbeta,\sigma,g)- \hat R_n^{\,g}(h_n)
\\
\widehat{\mathcal{P}}_{n}[g]&:=\R_n[\cb](\bm{\theta},\hat{p}[\hat{\bm{\beta}},\hat{\sigma}, g])
-\R_n[\cb](\bm{\theta},\hat{p}[\bm{\beta},\sigma, g]),
\end{align*}
respectively.
So, we need to bound 
$\widehat{\Delta}_{n}[g]$ 
and $\widehat{\mathcal{P}}_{n}[g]$
uniformly in $g\in\mathcal{G}$.
Fix $\varepsilon>0$ arbitrarily,
and take an $\varepsilon$-net $\{g_{1},\ldots,g_{N(\varepsilon)}\}$
with $N(\varepsilon):=N(\varepsilon,\mathcal G,\|\cdot\|_1)$.
Then, we decompose $\sup_{g\in\mathcal{G}}|\widehat{\Delta}_{n}[g]|$
and $\sup_{g\in\mathcal{G}}|\widehat{\mathcal{P}}_{n}[g]|$
as 
\begin{align*}
\sup_{g\in\mathcal{G}}
\left|\widehat{\Delta}_{n}[g]\right|
&\le 
\max_{g\in\{g_{1},\ldots,g_{N(\varepsilon)}\}}
\left|\widehat{\Delta}_{n}[g]\right|
+\sup_{g,g'\in\mathcal{G}:\|g-g'\|_{1}\le \varepsilon}
\left|\widehat{\Delta}_{n}[g]-\widehat{\Delta}_{n}[g']\right|,\\
\sup_{g\in\mathcal{G}}|\widehat{\mathcal{P}}_{n}[g]|
&\le 
\max_{g\in\{g_{1},\ldots,g_{N(\varepsilon)}\}}
\left|\widehat{\mathcal{P}}_{n}[g]\right|
+\sup_{g,g'\in\mathcal{G}:\|g-g'\|_{1}\le \varepsilon}
\left|\widehat{\mathcal{P}}_{n}[g]-\widehat{\mathcal{P}}_{n}[g']\right|,
\end{align*}
respectively.

\medskip
\noindent \textbf{Step 0: Bounding the marginal densities.}
Before bounding the target quantities,
we prepare the following lemma to bound the marginal densities $m_{g}$ and $\tilde{m}_{g}$.
\begin{lemma}
Under Assumptions~\hyperref[assumption2]{A2} and~\hyperref[assumption8]{A8}(ii),
there exist positive constants $b_{1},b_{2},b_{3}$ independent from $n$ and $g$ for which we have
\begin{align}
\begin{split}
- b_{1}a^{2} + b_{2} &\le \log m_{g}\Big(a;u_i,v_{ik},\sigma\Big)
\le b_{3},\quad a\in\mathbb{R},\\
- b_{1}a^{2} + b_{2} &\le \log \tilde{m}_{g}\Big(a;u_i,v_{ik},\sigma\Big)
\le b_{3},\quad a\in\mathbb{R},
\label{eq: marginal bound}
\end{split}
\end{align}
respectively.
Further, under Assumptions
~\hyperref[assumption1]{A1},
~\hyperref[assumption2]{A2},
and~\hyperref[assumption8]{A8}(ii)
, we have
\begin{align}
\begin{split}
- b_{1}Z_{i}^{2}(\bbeta,\sigma) + b_{2} &\le \log m_{g}\Big(v_{ik}Z_{i}(\hat{\bbeta},\hat{\sigma});u_i,v_{ik},\hat{\sigma}\Big)
\le b_{3},\quad (\hat{\bbeta},\hat{\sigma})\in H_{n},\\
- b_{1}\widetilde{W}_{ik}^{2}(\bbeta,\sigma) + b_{2} &\le \log \tilde{m}_{g}\Big(\widetilde{W}_{ik}(\hat{\bbeta},\hat{\sigma});u_i,v_{ik},\hat{\sigma}\Big)
\le b_{3},\quad (\hat{\bbeta},\hat{\sigma})\in H_{n},
\label{eq: marginal bound 2}
\end{split}
\end{align}
respectively.
\end{lemma}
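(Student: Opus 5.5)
The plan is to handle $m_g$ and $\tilde m_g$ together, since both are Gaussian location mixtures
\[
\int \phi\big(a;\, v\gamma\sigma^{-1},\, \tau\big)\, g(\gamma)\, d\gamma .
\]
For $m_g$ the scale is $\tau = u^{-1}v$ (the appendix writes the variance $v^2/u^2$). For $\tilde m_g$ it is $\tau=(1+u^2v^{-2})^{-1/2}$. By A2 the covariates $u_i,v_{ik}$ are bounded away from zero and infinity, and $\sigma$ (or $\hat\sigma$ on $H_n$) lies in a compact subinterval of $(0,\infty)$. Hence $\tau$ lies in a fixed compact $[\underline\tau,\bar\tau]\subset(0,\infty)$, and the location coefficient $c:=v/\sigma$ lies in a compact $[\underline c,\bar c]$, uniformly in $i,k$, $n$ and $g$.

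The upper bound is immediate: $\phi(a;\mu,\tau)\le (2\pi)^{-1/2}\underline\tau^{-1}$, so integrating against the probability measure $g$ gives $\log m_g\le b_3:=-\tfrac12\log(2\pi)-\log\underline\tau$. The same constant works for $\tilde m_g$.

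For the lower bound I will use A8(ii). Restricting the integral to $[-M,M]$ gives
\[
m_g(a)\ge \pi_*\inf_{|\gamma|\le M}\phi(a;c\gamma,\tau)\ge \pi_*(2\pi)^{-1/2}\bar\tau^{-1}\exp\!\Big(-\frac{(|a|+\bar c M)^2}{2\underline\tau^2}\Big).
\]
Then $(|a|+\bar cM)^2\le 2a^2+2\bar c^2M^2$ gives $\log m_g(a)\ge -b_1a^2+b_2'$. Here $b_1=\underline\tau^{-2}$ and $b_2'=\log\pi_*-\tfrac12\log(2\pi)-\log\bar\tau-\bar c^2M^2\underline\tau^{-2}$; the constant $b_2'$ may be negative, which I take to be what the statement intends by "$b_2$" (one can absorb signs by relabeling). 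None of these constants depends on $g$, $n$ or $(i,k)$. This proves \eqref{eq: marginal bound}.

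For \eqref{eq: marginal bound 2}, the upper bound follows directly from the first part, because on $H_n$ the scale parameter $\hat\sigma$ stays in $[\underline\sigma,\bar\sigma]$. For the lower bound I apply \eqref{eq: marginal bound} at the point $a=v_{ik}Z_i(\hat\bbeta,\hat\sigma)$ (respectively $\widetilde W_{ik}(\hat\bbeta,\hat\sigma)$), with constants enlarged to cover all $\hat\sigma\in[\underline\sigma,\bar\sigma]$.

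The remaining step, which I expect to be the only real obstacle, is to compare the plug-in statistic with the true one. Writing $r_{ik}=Y_{ik}-x_{ik}'\bbeta$, we have
\[
Z_i(\hat\bbeta,\hat\sigma)=\frac{\sigma}{\hat\sigma}\,Z_i(\bbeta,\sigma)+\hat\sigma^{-1}u_i^{-2}\sum_k u_{ik}x_{ik}'(\bbeta-\hat\bbeta).
\]
On $H_n$ the ratio $\sigma/\hat\sigma$ is bounded and $\|\hat\bbeta-\bbeta\|<1$. With bounded covariates this yields $|Z_i(\hat\bbeta,\hat\sigma)|\le C(|Z_i(\bbeta,\sigma)|+1)$, so $a^2\le 2C^2(Z_i^2(\bbeta,\sigma)+1)$. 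The same affine argument applies to $\widetilde W_{ik}$, which is affine in $Z_i$ and $\tilde Z_{ik}$.

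Substituting this into $-b_1a^2+b_2$ and enlarging $b_1$ and adjusting $b_2$ gives the second display. The constants remain free of $n$ and $g$, because $C$ depends only on the covariate bounds, on $\underline\sigma,\bar\sigma$ and on the radius of $H_n$.
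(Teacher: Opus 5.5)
Your proposal is correct and follows essentially the same route as the paper. The paper also bounds above by the supremum of the Gaussian kernel, bounds below by restricting the mixture to $[-M,M]$ via A8(ii), and, on $H_n$, feeds the affine comparison $|Z_i(\hat{\bbeta},\hat\sigma)|\lesssim |Z_i(\bbeta,\sigma)|+1$ (and likewise for $\widetilde W_{ik}$) into the first display. Your caveat about the sign of $b_2$ is also well taken: the paper's own choice $b_2=\log\pi_*-\sup_i u_i^2M^2/\sigma^2$ is never positive, and it drops the Gaussian normalizing factor that you correctly keep.
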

\begin{proof}
First, observe 
\[
\int_{[-M,M]} \phi\!\Big(a;\,\frac{v_{ik}\gamma}{\sigma},\,\frac{v_{ik}^2}{u_i^2}\Big)\,g(\gamma)\,d\gamma
\le \int \phi\!\Big(a;\,\frac{v_{ik}\gamma}{\sigma},\,\frac{v_{ik}^2}{u_i^2}\Big)\,g(\gamma)\,d\gamma
\le \frac{1}{\sqrt{2\pi (v_{ik}^{2}/u_{i}^{2})}},\quad a\in\mathbb{R}.
\]
Then, from Assumption~\hyperref[assumption2]{A2}, we have
\begin{align*}
\log m_{g}\Big(a;u_i,v_{ik},\sigma\Big)
\le {-(1/2)\log \left(2\pi \inf_{i,k}\,v_{ik}^{2}/u_{i}^{2}\right)}{:=b_{3}}.
\end{align*}
From Assumption~\hyperref[assumption8]{A8}(ii)
and Assumption~\hyperref[assumption2]{A2}, 
we have
\begin{align*}
\log m_{g}\Big(a;u_i,v_{ik},\sigma\Big)
\ge 
-\{\inf_{i,k}\,v_{ik}^{2}/u_{i}^{2}\}^{-1}a^{2}
+
{\log \pi_{*} 
- \frac{\sup_{i}\,u_{i}^{2}}{\sigma^{2}} M^{2}}
:=-b_1a^2+b_2.
\end{align*}
Summarizing these, we obtain \eqref{eq: marginal bound} for $m_{g}$.
In the same way, we obtain \eqref{eq: marginal bound} for $\tilde{m}_{g}$.

Observe, by definition of $Z_{i}(\bbeta,\sigma)$ and $\widetilde{W}_{ik}(\bbeta,\sigma)$,
\begin{align*}
Z_{i}(\hat{\bbeta},\hat{\sigma})
&=\frac{\sigma}{\hat{\sigma}}Z_{i}(\bbeta,\sigma)
-\frac{1}{u_{i}}\sum_{k=1}^{K_{i}}u_{ik}
\frac{x_{ik}'(\hat{\bbeta}-\bbeta)}{\hat{\sigma}},\\
\widetilde{W}_{ik}(\hat{\bbeta},\hat{\sigma})
&=\frac{\sigma}{\hat{\sigma}}\widetilde{W}_{ik}(\bbeta,\sigma)
-
\frac{1}{\sqrt{1+v_{ik}^{2}/u_{i}^{2}}}\left\{
\frac{1}{u_{i}}\sum_{k=1}^{K_{i}}u_{ik}
\frac{x_{ik}'(\hat{\bbeta}-\bbeta)}{\hat{\sigma}}
+\frac{\tilde{x}'_{ik}(\hat{\bbeta}-\bbeta)}{\hat{\sigma}}
\right\}
,
\end{align*}
respectively.
These, together with Assumptions~\hyperref[assumption1]{A1} and~\hyperref[assumption2]{A2} 
, yield
\begin{align}
\begin{split}
\label{eq: bounds on Z_i}
|Z_{i}(\hat{\bbeta},\hat{\sigma})| &\lesssim |Z_{i}(\bbeta,\sigma)|+1 \quad \text{for}\;(\hat{\bbeta},\hat{\sigma})\in H_{n},\\
|\widetilde{W}_{ik}(\hat{\bbeta},\hat{\sigma})|
&\lesssim |\widetilde{W}_{ik}(\bbeta,\sigma)|+1 \quad \text{for}\;(\hat{\bbeta},\hat{\sigma})\in H_{n}.
\end{split}
\end{align}
Thus combining these with \eqref{eq: marginal bound} completes the proof.
\end{proof}

\medskip
\noindent \textbf{Step 1:
Lipschitz continuity of $\widehat{\Delta}_{n}[g]$ with respect to $g$}.
We begin with bounding $\R_n[\bm{\theta},\cb](\bbeta,\sigma,g)-\R_n[\bm{\theta},\cb](\bbeta,\sigma,g')$.
Due to the singular behaviour of the log marginal density in regions where $m_{g}(a;u_{i},v_{ik},\sigma)$ is close to zero, we need to regularize $m_{g}$ as in \cite{jiang2009general}.

\noindent 
\textbf{Step 1 (i): bounding the expectation part.}
For $\delta>0$,
define
\[R_{1,n}^{\delta}[\bm \theta,\cb](\bbeta,\sigma,g)=\frac 1 {\kappa_n}\sum_{i=1}^n \sum_{k=1}^{\Kf_i} \ex_{\gamma_i}\big\{\log \max\{m_g(v_{ik}Z_i(\bm \beta,\sigma);u_i,v_{ik},\sigma),\delta\}\big\},
\]
and 
\begin{align*}
\begin{split}
&\mathcal{T}_{1,\delta}(\bm{\beta},\sigma)
:=
\sup_{g\in\mathcal G}
\frac{1}{\kappa_n}\sum_{i=1}^n\sum_{k=1}^{\Kf_i}
\mathbb \ex_{\gamma_{i}}\!\Big{[}
|\log m_g( v_{ik}Z_{i}(\bm{\beta},\sigma);u_{i},v_{ik},\sigma)|\\
&\qquad\qquad\qquad\qquad\qquad\qquad\qquad\qquad
\cdot\mathbf{1}\{m_g(v_{ik}Z_{i}(\bm{\beta},\sigma);u_{i},v_{ik},\sigma)\le \delta\}
\Big{]},
\end{split}
\end{align*}
respectively.
From the Lipschitz continuity of $\log \max\{\cdot,\delta\}$ and \eqref{eq: marginal bound}, we have
\begin{align*}
    |R^{\delta}_{1,n}[\bm{\theta},\mathbb{C}](\bbeta,\sigma,g)-R^{\delta}_{1,n}[\bm{\theta},\mathbb{C}](\bbeta,\sigma,g')|
    &\le 
    \frac{1}{\delta}
    \sup_{a,u_{i},v_{ik},\sigma}|m_g(a;u_i,v_{ik},\sigma)
    -m_{g'}(a;u_i,v_{ik},\sigma)|
    \\
    &\le
    \frac{b_{3}}{\delta}\;\|g-g'\|_{1}.
\end{align*}
By definition of $\mathcal{T}_{1,\delta}(\bm{\beta},\sigma)$, we have
\begin{align*}
    \sup_{g\in\mathcal{G}}|R^{\delta}_{1,n}[\bm{\theta},\mathbb{C}](\bbeta,\sigma,g)-R_{1,n}[\bm{\theta},\mathbb{C}](\bbeta,\sigma,g)|
    &\le \mathcal{T}_{1,\delta}(\bm{\beta},\sigma).
\end{align*}

We proceed to bound $\mathcal{T}_{1,\delta}(\bm{\beta},\sigma)$.
Inequality \eqref{eq: marginal bound} gives
\[
\mathcal{T}_{\delta}(\bm{\beta},\sigma)
\le \sup_{g\in\mathcal{G}}\frac{1}{\kappa_n}\sum_{i=1}^n\sum_{k=1}^{\Kf_i}
\mathbb{E}_{\gamma_{i}}[\{b_{1} v_{ik}^{2}Z^{2}_{i}(\bm{\beta},\sigma) +b_{2}\}
\cdot\mathbf{1}\{b_{1} v_{ik}^{2}Z^{2}_{i}(\bm{\beta},\sigma) \ge \log (1/\delta) \}
].
\]
From the conditional Gaussianity
$
v_{ik}Z_i(\bbeta,\sigma)\,\big|\,\gamma_i \sim N\!\Big(\frac{v_{ik}\gamma_i}{\sigma},\,\frac{v_{ik}^2}{u_i^2}\Big),
$
we have
\begin{align*}
&\mathbb{E}_{\gamma_{i}}[\{b_{1} v_{ik}^{2}Z^{2}_{i}(\bm{\beta},\sigma) +b_{2}\}
\cdot\mathbf{1}\{b_{1} v_{ik}^{2}Z^{2}_{i}(\bm{\beta},\sigma) \ge \log (1/\delta)\}
]\\
&\lesssim 
\int \left\{
\frac{v_{ik}^{2}}{\sigma^{2}}\gamma_{i}^{2} + a^{2}
\right\}\phi\left(a; 0, \frac{v_{ik}^{2}}{u_{i}^{2}}\right)
\cdot\mathbf{1}\{|a| \ge 
(v_{ik}/\sigma)|\gamma_{i}|
+
b_{1}^{-1/2}\sqrt{\log (1/\delta)}\}
da.
\end{align*}
Here, by definition of $a_{n}$ and from Assumption~\hyperref[assumption2]{A2},
taking a sufficiently small $\delta>0$ gives
\[
\mathbf{1}\{|a| \ge 
(v_{ik}/\sigma)|\gamma_{i}|
+
b_{1}^{-1/2}\sqrt{\log (1/\delta)}\}
\le 
\mathbf{1}\{|a| 
\ge b_{4}
(a_{n}+\sqrt{\log (1/\delta)})\}
\quad\text{for}\;\text{some}\; b_{4}>0,
\]
where the asymptotic relation $\int a^{2}\phi(a;0,1)\mathbf{1}\{|a|>M\}da \sim M\phi(M;0,1)$ as $M\to\infty$ is used.
This gives 
\begin{align*}
&\mathbb{E}_{\gamma_{i}}[\{b_{1} v_{ik}^{2}Z^{2}_{i}(\bm{\beta},\sigma) +b_{2}\}
\cdot\mathbf{1}\{b_{1} v_{ik}^{2}Z^{2}_{i}(\bm{\beta},\sigma) \ge \log (1/\delta)\}
]\\
&\lesssim 
\int 
a^{2}\phi(a; 0, 1)
\cdot\mathbf{1}\{|a| \ge b_{4}(a_{n}+\sqrt{\log (1/\delta))}\}
da
\quad\lesssim
\delta^{b_{4}^{2}/2}
(a_{n}+\sqrt{\log(1/\delta)}),
\end{align*}
where the Gaussian integral is used to obtain the second inequality.
Thus we have
\begin{align*}
\mathcal{T}_{1,\delta}(\bm{\beta},\sigma)
\lesssim \delta^{b_{4}^{2}/2} (a_{n}+\sqrt{\log (1/\delta)}).
\end{align*}
Also, define
\[
R_{2,n}^{\delta}[\bm{\theta},\cb](\bbeta,\sigma,g)
:=
\bigg(\sum_{i=1}^n \Kf_i\bigg)^{-1}
\mathbb{E}_{\gamma_i}\!\left[\log 
\max\left\{\tilde m_g\!\Big(\widetilde W_{ik}(\bbeta,\sigma);u_i,v_{ik},\sigma\Big),\delta
\right\}
\right]
\]
and 
\begin{align*}
\begin{split}
\mathcal{T}_{2,\delta}(\bm{\beta},\sigma)
&:=
\sup_{g\in\mathcal G}
\bigg(\sum_{i=1}^n \Kf_i\bigg)^{-1}\sum_{i=1}^n\sum_{k=1}^{\Kf_i}
\mathbb{E}_{\gamma_i}\!\Big{[}|\log 
\tilde m_g\!\big(\widetilde W_{ik}(\bbeta,\sigma);u_i,v_{ik},\sigma\big)|\\
&\qquad\qquad\qquad\qquad\qquad\qquad\qquad\qquad
 \cdot\mathbf{1}\{m_g\!(\widetilde W_{ik}(\bbeta,\sigma);u_i,v_{ik},\sigma)\le \delta\}\Big{]},
\end{split}
\end{align*}
respectively. From \eqref{eq: marginal bound} and the Lipschitz continuity of $\log \max\{\cdot,\delta\}$, we have
\begin{align*}
    |R^{\delta}_{2,n}[\bm{\theta},\mathbb{C}](\bbeta,\sigma,g)-R^{\delta}_{2,n}[\bm{\theta},\mathbb{C}](\bbeta,\sigma,g')|
    &\le 
    \frac{b_{3}}{\delta}\;\|g-g'\|_{1}.
\end{align*}
Since $\widetilde W_{ik}(\bbeta,\sigma)$ is conditional Gaussian, 
the same procedure as in bounding $\mathcal{T}_{1,\delta}$ gives
\[
\mathcal{T}_{2,\delta}(\bm{\beta},\sigma,u_{i},v_{ik})
\lesssim \delta^{b_{5}^{2}/2} (a_{n}+\sqrt{\log (1/\delta)})\quad\text{for some}\; b_{5}>0.
\]
Thus we obtain
\begin{align}
    &\sup_{g,g'\in\mathcal{G}:\|g-g'\|_{1}\le \varepsilon}
\left|\R_n[\bm{\theta},\cb](\bbeta,\sigma,g)-\R_n[\bm{\theta},\cb](\bbeta,\sigma,g')\right|\nonumber\\
&\le \sup_{g,g'\in\mathcal{G}:\|g-g'\|_{1}\le \varepsilon}
\left|R_{1,n}[\bm{\theta},\mathbb{C}](\bbeta,\sigma,g)-R_{1,n}[\bm{\theta},\mathbb{C}](\bbeta,\sigma,g')\right|\nonumber\\
&\qquad\qquad +
\sup_{g,g'\in\mathcal{G}:\|g-g'\|_{1}\le \varepsilon}
\left|R_{2,n}[\bm{\theta},\mathbb{C}](\bbeta,\sigma,g)-R_{2,n}[\bm{\theta},\mathbb{C}](\bbeta,\sigma,g')\right|
\nonumber\\
&\lesssim \left\{\frac{\varepsilon}{\delta} + \delta^{\min\{b_{4}^{2},b_{5}^{2}\}/2} 
\max\{a_{n},\sqrt{\log (1/\delta)}\}\right\}.
    \label{eq: diff R bound}
\end{align}

\noindent
\textbf{Step 1 (ii): bounding the empirical part.}
We proceed to bound $\hat R_n^{\,g}(h_n)-\hat R_n^{\,g'}(h_n)$.
Consider $\sup_{g,g'\in\mathcal{G}:\|g-g'\|_{1}\le \varepsilon}
|\hat{R}_{1,n}^{\,g}-\hat{R}_{1,n}^{\,g'}|$.
For $\delta>0$, 
define
\[\hat R_{1,n}^{\,g,\delta}=\frac 1 {\kappa_n}\sum_{i=1}^n \sum_{k=1}^{\Kf_i} \big\{\log \max\{ m_g( v_{ik}Z_{i}(\hat{\bbeta},\hat{\sigma});u_{i},v_{ik},\hat{\sigma}),\delta\}\big\}
\]
and 
\begin{align*}
\begin{split}
\widehat{\mathcal{T}}_{1,\delta}
:=
\sup_{g\in\mathcal G}
\frac{1}{\kappa_n}\sum_{i=1}^n\sum_{k=1}^{\Kf_i}
\left|\log m_g( v_{ik}Z_{i}(\hat{\bbeta},\hat{\sigma});u_{i},v_{ik},\hat{\sigma})\right|
\cdot\mathbf{1}\{ m_g( v_{ik}Z_{i}(\hat{\bbeta},\hat{\sigma});u_{i},v_{ik},\hat{\sigma})\le \delta\},
\end{split}
\end{align*}
respectively.
Then, we have
\begin{align*}
    |\hat R_{1,n}^{\,g,\delta}-\hat R_{1,n}^{\,g',\delta}|
    \lesssim \frac{1}{\delta}\;\|g-g'\|_{1},
    \quad \text{and} \quad
    |\hat R_{1,n}^{\,g,\delta}-\hat R_{1,n}^{\,g}|
    \le \widehat{\mathcal{T}}_{1,\delta}.
\end{align*}

Here we bound $\widehat{\mathcal{T}}_{1,\delta}$.
Recall Inequality \eqref{eq: bounds on Z_i}:
\[
|Z_{i}(\hat{\bbeta},\hat{\sigma})|
\lesssim 
|Z_{i}(\bbeta,\sigma)|+1 \quad \text{for}\;(\hat{\bbeta},\hat{\sigma})\in H_{n}.
\]
Then, we have
\begin{align*}
    \widehat{\mathcal{T}}_{1,\delta} 
    &\lesssim 
    \frac{1}{\kappa_n}\sum_{i=1}^n\sum_{k=1}^{\Kf_i}
(
Z_{i}^{2}(\bbeta,\sigma)+1
)
\cdot\mathbf{1}\{ |Z_{i}(\bm{\beta},\sigma)| \gtrsim
a_{n}+\sqrt{\log (1/\delta)}\}\\
&\lesssim \frac{1}{\kappa_n}\sum_{i=1}^n\sum_{k=1}^{\Kf_i}
\ex_{\gamma_{i}}
\left[
Z_{i}^{2}(\bbeta,\sigma)+1)
\cdot\mathbf{1}\{ |Z_{i}(\bm{\beta},\sigma)| \gtrsim 
a_{n}+\sqrt{\log (1/\delta)}\}
\right] \sqrt{\log n}
\end{align*}
for $(\hat{\bbeta},\hat{\sigma})\in H_{n}$, where the conditional Gaussianity of $Z_{i}(\bbeta,\sigma)$ 
and the $\chi^2$-tail bound \citep{laurent2000adaptive}
are employed to obtain the latter inequality.
So, noting that $\kappa_{n}$ is of order $n$,
and taking a sufficiently small $\delta>0$,
we conclude 
\begin{align*}
\widehat{\mathcal{T}}_{1,\delta}
\lesssim \delta^{b_{6}} 
\max\{a_{n},\sqrt{\log (1/\delta)}\}
\sqrt{\log n}
\quad\text{for}\;\text{some}\; b_{6}>0
\end{align*}
with probability tending to one.
Therefore we obtain
\begin{align}
    \sup_{g,g'\in\mathcal{G}:\|g-g'\|_{1}\le \varepsilon}
\left|\hat{R}_{1,n}^{\,g}-\hat{R}_{1,n}^{\,g'}\right|
\lesssim \left\{\frac{\varepsilon}{\delta} + \delta^{b_{6}} \sqrt{
\max\{a_{n}^{2},\log (1/\delta)\}\log n }\right\}
    \label{eq: diff Rhat bound}
\end{align}
with probability tending to one.

Consider next $\sup_{g,g'\in\mathcal{G}:\|g-g'\|_{1}\le \varepsilon}
|\hat{R}_{2,n}^{\,g}-\hat{R}_{2,n}^{\,g'}|$.
Recall that, for each $i=1,\ldots,n$ and $k=1,\ldots,\Kf_i$ and $j \in S_{ik}$, we constructed a new variable by adding scaled i.i.d. standard normal noise $\zeta_{ikj}$,
$\hat W_{ikj}:=v_{ik}Z_i(\hat{\bm{\beta}},\hat \sigma) + d_{ikj}^{1/2} \hat \sigma 
\zeta_{ijk}~$.
Due to Lemma~\ref{lem:lemma1},
without any loss of generality we assume $\Kf_i=1$ for all $i=1,\ldots,n$. 
We denote $S_{ik}$ as $S_i$ as  $\Kf_i=1$.
Then we have
\begin{align*}
\hat R_{2,n}^{\,g} :=n^{-1}\sum_{i \in \mathcal{I}(h)}\sum_{j=1}^n s_i^{-1} \ex_{\zeta}\big\{\log \tilde m_g(\hat W_{ikj}(\zeta);u_i,v_{ik},\hat \sigma)\big\}\cdot\mathbf{1}\{j \in S_i\}.
\end{align*}
For $\delta>0$, define
\[\hat R_{2,n}^{\,g,\delta}=n^{-1}
\sum_{i \in \mathcal{I}(h)}\sum_{j=1}^n 
s_i^{-1} \ex_{\zeta}\big\{\log \max\{ \tilde m_g(\hat W_{ikj}(\zeta);u_{i},v_{ik},\hat{\sigma}),\delta\}\big\}
\cdot\mathbf{1}\{j \in S_i\},
\]
\begin{align*}
\begin{split}
\widehat{\mathcal{T}}_{2,\delta}
&:=
n^{-1}
\sup_{g\in\mathcal G}
\sum_{i \in \mathcal{I}(h)}\sum_{j=1}^n
s_i^{-1} \ex_{\zeta}
\Big{[}\left|\log \tilde m_g(\hat W_{ikj}(\zeta);u_{i},v_{ik},\hat{\sigma})\right|\\
&\qquad\qquad\qquad\qquad\qquad\qquad
\cdot\mathbf{1}\{ \tilde m_g(\hat W_{ikj}(\zeta);u_{i},v_{ik},\hat{\sigma})\le \delta,\,j\in S_{i}\}\Big{]},
\end{split}
\end{align*}
respectively.
Then, we have
\begin{align*}
    |\hat R_{2,n}^{\,g,\delta}-\hat R_{2,n}^{\,g',\delta}|
    \lesssim \frac{1}{\delta}\;\|g-g'\|_{1},
    \quad \text{and} \quad
    |\hat R_{2,n}^{\,g,\delta}-\hat R_{2,n}^{\,g}|
    \le \widehat{\mathcal{T}}_{2,\delta}.
\end{align*}
Recalling $s_{i}=|S_{i}|$,
we have
\[
\widehat{\mathcal{T}}_{2,\delta}
\le \sup_{g\in\mathcal{G}}
\sup_{i,k,j,u_{i},v_{ik}}
\ex_{\zeta}
\Big{[}\left|\log \tilde m_g(\hat W_{ikj}(\zeta);u_{i},v_{ik},\hat{\sigma})\right|
\cdot\mathbf{1}\{ \tilde m_g(\hat W_{ikj}(\zeta);u_{i},v_{ik},\hat{\sigma})\le \delta\}\Big{]}.
\]
By the same procedure as in bounding $\widehat{\mathcal{T}}_{1,\delta}$, 
the right hand side above is shown to have of the following order:
\begin{align*}
\widehat{\mathcal{T}}_{2,\delta}
\lesssim \delta^{b_{7}} \max\{a_{n},\sqrt{\log (1/\delta)}\} 
\sqrt{\log n}
\quad\text{for}\;\text{some}\; b_{7}>0
\end{align*}
with probability tending to one.
Therefore we obtain
\begin{align}
    \sup_{g,g'\in\mathcal{G}:\|g-g'\|_{1}\le \varepsilon}
\left|\hat{R}_{2,n}^{\,g}-\hat{R}_{2,n}^{\,g'}\right|
\lesssim \left\{\frac{\varepsilon}{\delta} + \delta^{b_{7}} \sqrt{\max\{a_{n}^{2},\log (1/\delta)\} \log n }\right\}
    \label{eq: diff Rhat bound 2}
\end{align}
with probability tending to one.

Combining \eqref{eq: diff R bound},
\eqref{eq: diff Rhat bound},
and
\eqref{eq: diff Rhat bound 2},
and taking 
\[\delta= \varepsilon^{\min\{1/(b_{4}^{2}/2),1/(b_{5}^{2}/2),1/(b_{6}+1),1/(b_{7}+1)\}}=:\varepsilon^{1/(b_{8}+1)},\]
we obtain
\begin{align}
\label{eq: diff Delta}
\sup_{g,g'\in\mathcal{G}:\|g-g'\|_{1}\le \varepsilon}
\left|\widehat{\Delta}_{n}[g]-\widehat{\Delta}_{n}[g']\right|
\lesssim \varepsilon^{b_{8}/(b_{8+1})}
\sqrt{\max\{a_{n}^{2},\log(1/\varepsilon)\} \log n}
.
\end{align}

\noindent \textbf{Step 2: 
Bounding the uniform deviation.}
To obtain a high-probability bound for 
$\max_{g\in\{g_{1},\ldots,g_{N(\varepsilon)}\}}
|\widehat{\Delta}_{n}[g]|$,
we need a sharper tail bound than the Chebyshev bound used in the proof of Theorem \ref{thm.risk.conv}.

\noindent \textbf{Step 2 (i):
Bounding the deviation $\hat R_{1,n}^{\,g}-R_{1,n}^{g}$}.
First, consider $\hat R_{1,n}^{\,g} - \ex[ \hat{R}^{g}_{1,n} ]$.
As in the proof of Theorem \ref{thm.risk.conv},
we set $\psi_{ik}:=\log m_g\!\big(v_{ik}Z_i(\hat{\bbeta},\hat{\sigma});u_i,v_{ik},\hat{\sigma}\big)$
and define $F_i:=\sum_{k=1}^{K_i}\psi_{ik}$: then we have
\[
\hat R_{1,n}^{\,g} - \ex[ \hat{R}^{g}_{1,n} ]
=\kappa_n^{-1}\sum_{i=1}^n \left\{F_i - \ex [F_{i}]\right\}.
\]
Here, from \eqref{eq: marginal bound 2},
the summand $F_i - \ex[F_{i}]$ has an independent sub-exponential envelope $K_{i} Z_{i}^{2}(\bbeta,\sigma)$ regardless of $g$, where $Z_{i}^{2}(\bbeta,\sigma)$ is non-centered $\chi^{2}$.
Note that the summands are uniformly sub-exponential, so is their normalized sum. 
Therefore a standard exponential deviation bound and the union bound
give
\begin{align*}
    \max_{g\in\{g_{1},\ldots,g_{N(\varepsilon)}\}}|\hat R_{1,n}^{\,g} - \ex[ \hat{R}^{g}_{1,n} ]| \lesssim n^{-1/2}\log n 
\end{align*}
with probability at least $1-\Pr(H_{n}^{\mathrm{c}})-N(\varepsilon)n^{-\nu}$ for any $\nu>0$.

Next, consider 
$\ex[ \hat{R}^{g}_{1,n} ]-R^{g}_{1,n}$.
Using the mean value theorem with \eqref{eq: log marginal derivative}, we have
\begin{align*}
\big|\mathbb{E}[\hat R_{1,n}^{\,g}]-R_{1,n}^{\,g}
\big|
&\lesssim
\kappa_n^{-1}
\sum_{i=1}^{n}\sum_{k=1}^{\tilde{K}_{i}}
\ex\big[
Z_{i}^{2}(\bbeta^{*},\sigma^{*})
\big(\|\hat{\bbeta}-\bbeta\|+|\hat\sigma-\sigma|\big)\big],
\end{align*}
where $(\bbeta^{*},\sigma^{*})$ lies on the line segment joining $(\hat{\bbeta},\hat{\sigma})$ and $(\bbeta,\sigma)$.
Here, by definition of $Z_{i}(\bbeta,\sigma)$,
we have
\[
Z_{i}(\bbeta^{*},\sigma^{*})=\frac{\sigma}{\sigma^{*}}Z_{i}(\bbeta,\sigma)
-\frac{1}{u_{i}}\sum_{k=1}^{K_{i}}u_{ik}
\frac{x_{ik}'(\bbeta^{*}-\bbeta)}{\sigma^{*}}.
\]
The triangle inequality, together with Assumptions~\hyperref[assumption1]{A1}
and~\hyperref[assumption2]{A2}, 
gives
\[
\left|Z_{i}(\bbeta^{*},\sigma^{*})
\right|
\lesssim Z_{i}(\bbeta,\sigma)
+\left\|\bbeta^{*}-\bbeta\right\|.
\]
This, together with the H\"{o}lder inequality ($p=1/4$ and $q=3/4$) and the conditional Gaussianity of $Z_{i}(\bbeta,\sigma)$,
gives
\begin{align*}
\big|\mathbb{E}[\hat R_{1,n}^{\,g}]-R_{1,n}^{\,g}
\big|
&\lesssim
\kappa_n^{-1}
\sum_{i=1}^{n}\sum_{k=1}^{\tilde{K}_{i}}
\ex\big[
Z_{i}^{2}(\bbeta,\sigma)
\big(\|\hat{\bbeta}-\bbeta\|+|\hat\sigma-\sigma|\big)\big]\\
&\lesssim \sqrt{\ex[\|\hat{\bbeta}-\bbeta\|^{2}]}
+\sqrt{\ex[\|\hat{\bbeta}-\bbeta\|^{4}]}
= O(n^{-\alpha}).
\end{align*}
Thus we obtain
\begin{align}
\label{eq: max R1}
    \max_{g\in\{g_{1},\ldots,g_{N(\varepsilon)}\}}|\hat R_{1,n}^{\,g} - R^{g}_{1,n} ]| \lesssim n^{-1/2}\log n +n^{-\alpha}
\end{align}
with probability at least $1-\Pr(H_{n}^{\mathrm{c}})-N(\varepsilon)n^{-\nu}$ for any $\nu>0$.

\noindent \textbf{Step 2 (ii):
Bounding the deviation $\hat R_{2,n}^{\,g}-R_{2,n}^{g}$}.
First, consider $\hat R_{2,n}^{\,g}-\ex[\hat R_{2,n}^{\,g}]$.
Observe that $\ex_{\zeta}\big\{\log \tilde m_g(\hat W_{ikj}(\zeta);u_i,v_{ik},\hat \sigma)\big\}$ has a sub-exponential envelope $E_{i}:=c Z_{i}^{2}(\bbeta,\sigma)+c'$ for some $c,c'>0$
that is independent across $i$. So, we have
\begin{align*}
\hat R_{2,n}^{\,g} - \ex[\hat{R}_{2,n}^{\,g}]\lesssim_{P} n^{-1}
\sum_{i \in \mathcal{I}(h)}
E_{i}
\underbrace{\Big{(}\sum_{j=1}^n \mathbf{1}\{j \in S_i\}/s_i\Big{)}}_{=:e_{i}},
\end{align*}
where $A \lesssim_{P} B$ denotes $\Pr(A \lesssim B)=1$.
Here the following lemma connects the coefficient $e_{i}$ to $D_{n}(h_{n})$:
\[
D_{n}(h_{n}):=\kappa_n^{-1}\sum_{i=1}^{n}\sum_{k=1}^{\Kf_{i}}\mathbf{1}\{|S_{ik}|\ge h_n\}/|S_{ik}|
=\kappa_{n}^{-1}\sum_{i\in \mathcal{I}(h)}1/s_{i}.
\]
\begin{lemma}\label{lem:combinatorialbound}
We have
\[
\sum_{j=1}^n e_j^2 \;\le\; \kappa_n^2\,D_n(h_n).
\]
\end{lemma}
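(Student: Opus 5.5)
The plan is to read $e_j$ as the total weight with which the single statistic $Z_j(\hat{\bbeta},\hat\sigma)$ is reused across all improved coordinates. Regrouping the double sum in $\hat R_{2,n}^{\,g}$ by the index $j$ of the reused observation gives
\[
e_j=\sum_{i\in\mathcal{I}(h)} \frac{\mathbf{1}\{j\in S_i\}}{s_i},\qquad j=1,\ldots,n,
\]
so that the envelope sum reads $n^{-1}\sum_{j}E_j e_j$. This is the quantity whose $\ell_2$ norm controls the variance of that sum. The bound should then follow from an $\ell_1$--$\ell_\infty$ interpolation, $\sum_j e_j^2\le (\max_j e_j)\sum_j e_j$, and I will bound the two factors separately. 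Each uses only the combinatorial structure of the sets $S_i$; no probability is needed.

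\textbf{The $\ell_1$ factor.} Exchanging the order of summation gives
\[
\sum_{j=1}^n e_j=\sum_{i\in\mathcal{I}(h)} \frac{1}{s_i}\sum_{j=1}^n\mathbf{1}\{j\in S_i\}=\sum_{i\in\mathcal{I}(h)} 1=|\mathcal{I}(h)|\le \kappa_n .
\]
Here $s_i\ge h\ge 1$ on $\mathcal{I}(h)$, so the division by $s_i$ is legitimate, and $\kappa_n=n$ under the reduction $\Kf_i=1$.

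\textbf{The $\ell_\infty$ factor.} Since $\mathbf{1}\{j\in S_i\}\le 1$, for every $j$,
\[
e_j\le \sum_{i\in\mathcal{I}(h)}\frac{1}{s_i}=\kappa_n D_n(h_n),
\]
using the identity $D_n(h_n)=\kappa_n^{-1}\sum_{i\in\mathcal{I}(h)}1/s_i$ recorded just before the lemma. Multiplying the two factors yields $\sum_j e_j^2\le \kappa_n D_n(h_n)\cdot\kappa_n=\kappa_n^2D_n(h_n)$, which is the claim.

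\textbf{Literal reading.} If instead $e_i$ is taken exactly as displayed, with the inner sum over $j$, then $e_i=1$ for $i\in\mathcal{I}(h)$ and $e_i=0$ otherwise, so $\sum_i e_i^2=|\mathcal{I}(h)|\le\kappa_n$. The claim then reduces to $\kappa_n D_n(h_n)\ge 1$ whenever $\mathcal{I}(h)\neq\emptyset$. This holds because $s_i\le n\le\kappa_n$ gives $\kappa_nD_n(h_n)\ge 1/s_i\ge 1/\kappa_n$ for any single term, while the case $\mathcal{I}(h)=\emptyset$ is trivial. I will note this briefly, but the substantive case is the regrouped definition.

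The only real obstacle is identifying the correct indexing of $e_j$. Once that is fixed, the $\ell_1$ identity is exact and the $\ell_\infty$ bound is crude but sufficient.
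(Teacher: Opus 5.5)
Your main argument is correct, but it is not the paper's route.

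Your regrouped weights $e_j=\sum_{i\in\mathcal{I}(h)}\mathbf{1}\{j\in S_i\}/s_i$ are the right objects. With $w_i=\mathbf{1}\{s_i\ge h_n\}/s_i$, they are exactly the weights for which the paper's identity $\sum_j e_j^2=\sum_{i,i'}w_iw_{i'}|S_i\cap S_{i'}|$ holds. The paper's own displayed formula $e_i=\sum_{j\in S_i}w_j$ has the two indices swapped.

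The two proofs then diverge:
\begin{itemize}
\item \emph{The paper} bounds each overlap by $|S_i\cap S_{i'}|\le\sqrt{s_is_{i'}}$. It factors the double sum as $\bigl(\sum_{i\in\mathcal{I}(h)}s_i^{-1/2}\bigr)^2$ and finishes with Cauchy--Schwarz against the all-ones vector.
\item \emph{You} use $\sum_je_j^2\le(\max_je_j)\sum_je_j$. You combine it with the exact identity $\sum_je_j=|\mathcal{I}(h)|$ and the bound $\max_je_j\le\sum_{i\in\mathcal{I}(h)}1/s_i$.
\end{itemize}
Both arguments end at $|\mathcal{I}(h)|\,\kappa_nD_n(h_n)\le\kappa_n^2D_n(h_n)$.

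What each buys: yours is shorter and never looks at overlaps. The paper's keeps the intermediate bound $\bigl(\sum_{i\in\mathcal{I}(h)}s_i^{-1/2}\bigr)^2$. That bound is never larger, and it can be smaller by a factor of order $\log n$ when the $s_i$ are spread over $1,\dots,n$.

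Your route cannot capture this gain. The $S_i$ are upper level sets of $u_j^2$, so the unit with the largest $u_j^2$ lies in every nonempty $S_i$. Hence $\max_je_j=\sum_{i\in\mathcal{I}(h)}1/s_i$ exactly. Your $\ell_\infty$ step is therefore tight rather than crude, and all the slack sits in the H\"older step.

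There is a slip in your side remark on the literal reading. The chain $\kappa_nD_n(h_n)\ge1/s_i\ge1/\kappa_n$ does not give $\kappa_nD_n(h_n)\ge1$. That inequality can in fact fail, for instance when $\mathcal{I}(h)$ is a single index with $s_i\ge2$. The conclusion is still true if you skip the intermediate bound $|\mathcal{I}(h)|\le\kappa_n$ and compare term by term. Since $s_i\le n=\kappa_n$, you get $|\mathcal{I}(h)|=\sum_{i\in\mathcal{I}(h)}1\le\sum_{i\in\mathcal{I}(h)}\kappa_n/s_i=\kappa_n^2D_n(h_n)$.
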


\begin{proof}
Write $w_i:=\mathbf{1}\{s_{i}\ge h_n\}/s_{i}$. Then we have
\[
e_i=\sum_{j:\, j\in S_i} w_j
\quad\text{and}\quad
\sum_{i=1}^n e_i^2
=\sum_{i=1}^{n}\sum_{i'=1}^{n} w_i w_{i'}\,|S_i \cap S_{i'}|.
\]
By the Cauchy--Schwarz inequality, 
we have $|S_i \cap S_{i'}|\le \sqrt{|S_i|\,|S_{i'}|}$; hence we obtain
\[
\sum_{i=1}^n c_i^2
\le
\Big(\sum_{i=1}^{n} w_i \sqrt{|S_i|}\Big)^2
\le
\Big(\sum_{i=1}^{n} 1\Big)\Big(\sum_{i=1}^{n} w_i^2 |S_i|\Big)
=\kappa_n \sum_{i=1}^{n} \frac{\mathbf{1}\{s_{i}\ge h_n\}}{s_{i}}
=\kappa_n^2\,D_n(h_n).
\]
\end{proof}

Thus a standard exponential deviation bound and the union bound
give
\begin{align}
\label{eq: max R2}
    \max_{g\in\{g_{1},\ldots,g_{N(\varepsilon)}\}}|\hat R_{2,n}^{\,g} - \ex[ \hat{R}^{g}_{2,n} ]| \lesssim \sqrt{\frac{D_{n}(h_{n})}{n}}\log n 
\end{align}
with probability at least $1-\Pr(H_{n}^{\mathrm{c}})-N(\varepsilon)n^{-\nu}$ for any $\nu>0$.

Finally, noting that $R_{2,n}$ is the normalized sum of i.i.d.~bounded summands as in the proof of Theorem~\ref{thm.risk.conv} and considering the plug-in effect as in Step 2 (i),
we obtain 
\begin{align*}
    \max_{g\in\{g_{1},\ldots,g_{N(\varepsilon)}\}}|\hat R_{2,n}^{\,g} - R^{g}_{2,n} ]| \lesssim \sqrt{\frac{D_{n}(h_{n})}{n}}\log n +n^{-\alpha}
\end{align*}
with probability at least $1-\Pr(H_{n}^{\mathrm{c}})-N(\varepsilon)n^{-\nu}$ for any $\nu>0$.

\noindent \textbf{Final step}: 
Finally, we combine \eqref{eq: diff Rhat bound}, \eqref{eq: max R1}, and \eqref{eq: max R2}
to obtain
\[
\sup_{g\in\mathcal{G}}
\left|\widehat{\Delta}_{n}[g]\right|
\lesssim \left( c_{n}\log n + \varepsilon^{b_{8}/(b_{8}+1)} \sqrt{\max\{a_{n}^{2},\log(1/\varepsilon)\} \log n}\right)
\]
with probability tending to one.
Also, in the same procedure as in Step 2 (i),
we can bound $\sup_{g\in\mathcal{G}} |\widehat{\mathcal{P}}_{n}[g]|$:
\[
\sup_{g\in\mathcal{G}} |\widehat{\mathcal{P}}_{n}[g]|
\lesssim n^{-\alpha}.
\]
Now choose \(\varepsilon=\varepsilon_n=n^{-M}\) for a sufficiently large constant \(M>0\). By Assumption A8(i),
\[
\log N(\varepsilon_n,\mathcal G,\|\cdot\|_1)
\lesssim \log n,
\]
so taking \(M\) large enough yields
$
\varepsilon_n^{b_8/(b_8+1)}
=
n^{-Mb_8/(b_8+1)}
=
O(c_n )$
and the exclusion probability tends to zero.
Hence we obtain
\[
\sup_{g\in\mathcal G}|\widehat{\Delta}_n[g]|
=
O_p(c_n \, a_{n}\log n),
\]
which concludes the proof.

\subsubsection{Proof of Proposition~\ref{prop.A8}}

We verify Assumption~\hyperref[assumption8]{A8} for each class in turn.
The proof for \(\mathcal G_3\) is the same as that for \(\mathcal G_2\), and is therefore omitted.
Denote
\[
m_{g}(x;\sigma):=\int \phi(x \,;\,\gamma,\sigma ) g(\gamma)d\gamma.
\]
To verify Assumption~\hyperref[assumption8]{A8} (iii) for $\mathcal{G}_{1}$, the following identities are useful:
\begin{align}
\frac{\partial}{\partial x}\log m_{g}(x;\sigma)
&=-\frac{1}{\sigma}x
+ \frac{1}{\sigma}\frac{\int \gamma \phi(x\,;\,\gamma,\sigma) g(\gamma)d\gamma }{m_{g}(x;\sigma)},
\label{eq:marginalderivative_a}\\
\frac{\partial}{\partial \sigma}\log m_g(x;\sigma)
&=
-\frac{1}{2\sigma}
+
\frac{1}{2\sigma^2}
\frac{\int (x-\gamma)^2\phi(x;\gamma,\sigma)g(\gamma)\,d\gamma}{m_g(x;\sigma)}.
\label{eq:marginalderivative_sigma}
\end{align}

\medskip
\noindent
\textbf{Class \(\mathcal G_1\).}
Let
\[
g(\tau)=(1-\eta)\delta_0+\frac{\eta a}{2}e^{-a|\tau|},
\qquad
(\eta,a)\in [\underline{\eta},\overline{\eta}]\times[\underline a,\overline a],
\]
where \(0<\underline{\eta}\le \overline{\eta}<1\) and \(0<\underline a\le \overline a<\infty\).

We first verify Assumption~\hyperref[assumption8]{A8}(i). Since \(\mathcal G_1\) is parameterized by the two-dimensional compact rectangle
\([\underline{\eta},\overline{\eta}]\times[\underline a,\overline a]\), and the map
\((\eta,a)\mapsto g_{\eta,a}\) is continuous in \(L_1\), the class is parametric of two dimension and thus
$\log N(\varepsilon,\mathcal G_1,\|\cdot\|_1)\lesssim \log(\varepsilon^{-1})$, which proves ~\hyperref[assumption8]{A8}(i).

We next verify Assumption~\hyperref[assumption8]{A8}(ii).
Observe that, for any $M>0$, we have
\[
g([-M,M])
=
(1-\eta)+\eta\int_{-M}^M \frac{a}{2}e^{-a|\tau|}\,d\tau
=
1-\eta e^{-aM}.
\]
By choosing $M$ sufficiently large,
this yields 
\[
\inf_{g\in\mathcal G_1} g([-M,M])
\ge
1-\overline{\eta}e^{-\underline a M}=:\pi_{*} >0.
\]
Thus Assumption~\hyperref[assumption8]{A8}(ii) holds.

Finally, we verify Assumption~\hyperref[assumption8]{A8}(iii).
Denote
\[
g(\gamma)=(1-\eta)\delta_0+\eta h_a(\gamma)
\qquad\text{with}\quad
h_a(\gamma):=\frac{a}{2}e^{-a|\gamma|},
\]
where $(\eta,a)\in[\underline\eta,\overline\eta]\times[\underline a,\overline a]$,
with $0<\underline\eta\le \overline\eta<1$ and 
$0<\underline a\le \overline a<\infty$.
Then we have
\[
m_g(x;\sigma)
=
(1-\eta)\phi(x;0,\sigma)+\eta\int \phi(x;\gamma,\sigma)h_a(\gamma)\,d\gamma,
\]
from which we have a lower bound on $m_{g}$:
\[
m_g(x;\sigma)\ge (1-\eta)\phi(x;0,\sigma)\ge (1-\overline\eta)\phi(x;0,\sigma).
\]

From \eqref{eq:marginalderivative_a} and \eqref{eq:marginalderivative_sigma},
it suffices to control the posterior moments up to second.
For $r=0,1,2$, let
\[
I_r(x;\sigma,a):=\int |\gamma|^r \phi(x;\gamma,\sigma)h_a(\gamma)\,d\gamma.
\]
From the lower bound
\[
m_g(x;\sigma)=(1-\eta)\phi(x;0,\sigma)+\eta I_0(x;\sigma,a)\ge \eta I_0(x;\sigma,a),
\]
we obtain
\[
\frac{\int |\gamma|^r \phi(x;\gamma,\sigma)g(\gamma)\,d\gamma}{m_g(x;\sigma)}
=
\frac{\eta I_r(x;\sigma,a)}{(1-\eta)\phi(x;0,\sigma)+\eta I_0(x;\sigma,a)}
\le
\frac{I_r(x;\sigma,a)}{I_0(x;\sigma,a)}.
\]
It therefore suffices to bound $I_r/I_0$ uniformly over $a\in[\underline a,\overline a]$.

To bound $I_r/I_0$, decompose
$I_r(x;\sigma,a)=I_r^+(x;\sigma,a)+I_r^-(x;\sigma,a)$,
where
\[
I_r^+(x;\sigma,a):=\int_0^\infty \gamma^r \phi(x;\gamma,\sigma)h_a(\gamma)\,d\gamma,
\quad\text{and}\quad
I_r^-(x;\sigma,a):=\int_{-\infty}^0 |\gamma|^r \phi(x;\gamma,\sigma)h_a(\gamma)\,d\gamma.
\]
Since $I_0=I_0^++I_0^-$, we have
\[
\frac{I_r}{I_0}
=\frac{I_r^+ + I_{r}^{-}}{I_0^+ + I_{0}^{-}}
\le
\max\left\{\frac{I_r^+}{I_0^+},\frac{I_r^-}{I_0^-}\right\}.
\]
Thus it suffices to bound $I_r^\pm/I_0^\pm$.

For the positive half-line integral, using \(h_a(\gamma)=(a/2)e^{-a\gamma}\) for \(\gamma\ge0\),
\[
I_r^+(x;\sigma,a)
=
\frac{a}{2}\int_0^\infty \gamma^r \phi(x;\gamma,\sigma)e^{-a\gamma}\,d\gamma.
\]
Completing the square,
\[
-\frac{(x-\gamma)^2}{2\sigma}-a\gamma
=
-\frac{1}{2\sigma}\bigl(\gamma-(x-\sigma a)\bigr)^2
+
\underbrace{\frac{\sigma a^2}{2}-ax}_{\text{independent of $\gamma$}},
\]
we notice that the common factor \(e^{-ax+\sigma a^2/2}\) cancels in the ratio. Hence, with
\[
\mu_+(x,a)=\mu_{+}:=x-\sigma a,
\]
we obtain
\[
\frac{I_r^+(x;\sigma,a)}{I_0^+(x;\sigma,a)}
=
\frac{\int_0^\infty \gamma^r \exp\!\left[-(\gamma-\mu_+)^2/(2\sigma)\right]\,d\gamma}
{\int_0^\infty \exp\!\left[-(\gamma-\mu_+)^2/(2\sigma)\right]\,d\gamma}.
\]
After the change of variables \(\gamma=\mu_+ + \sqrt{\sigma}\,z\), this becomes
\[
\frac{I_r^+}{I_0^+}
=
\frac{\int_{-\mu_+/\sqrt{\sigma}}^\infty |\mu_+ + \sqrt{\sigma}\,z|^r e^{-z^2/2}\,dz}
{\int_{-\mu_+/\sqrt{\sigma}}^\infty e^{-z^2/2}\,dz}.
\]
Using $|u+v|^r\le 2(|u|^r+|v|^r)$ ($r=1,2$), it follows that
\[
\frac{I_r^+}{I_0^+}
\le
2 |\mu_+|^r
+
2 \sigma^{r/2}
\frac{\int_{-\mu_+/\sqrt{\sigma}}^\infty |z|^r e^{-z^2/2}\,dz}
{\int_{-\mu_+/\sqrt{\sigma}}^\infty e^{-z^2/2}\,dz}.
\]
Here, if
$t:=-\mu_+/\sqrt{\sigma}\le 0$, then the denominator is bounded below by
\(\int_0^\infty e^{-z^2/2}\,dz>0\), whereas the numerator is finite from the Gaussian integral. If $t>0$, then by the standard Mills ratio bounds,
\[
\int_t^\infty z e^{-z^2/2}\,dz = e^{-t^2/2}
\lesssim
t \int_t^\infty e^{-z^2/2}\,dz,
\]
and
\[
\int_t^\infty z^2 e^{-z^2/2}\,dz
=
t e^{-t^2/2}+\int_t^\infty e^{-z^2/2}\,dz
\lesssim
(1+t^2)\int_t^\infty e^{-z^2/2}\,dz.
\]
Therefore,
\[
\frac{I_1^+}{I_0^+} \lesssim (1+|\mu_+|),
\qquad
\frac{I_2^+}{I_0^+} \lesssim (1+\mu_+^2).
\]

The negative half-line integral is handled similarly and we obtain
\[
\frac{I_1^-(x;\sigma,a)}{I_0^-(x;\sigma,a)}\lesssim (1+|\mu_-|),
\qquad
\frac{I_2^-(x;\sigma,a)}{I_0^-(x;\sigma,a)}\lesssim (1+\mu_-^2),
\]
where $\mu_{-}:= -x-\sigma a$

Since $a\in[\underline a,\overline a]$, we have
\[
|\mu_+(x,a)|+|\mu_-(x,a)| \lesssim (1+|x|),
\qquad
\mu_+(x,a)^2+\mu_-(x,a)^2 \lesssim (1+x^2).
\]
Consequently, we have
\[
\frac{I_1(x;\sigma,a)}{I_0(x;\sigma,a)}\lesssim (1+|x|),
\qquad
\frac{I_2(x;\sigma,a)}{I_0(x;\sigma,a)}\lesssim (1+x^2).
\]

Thus, we obtain
\[
\left|
\frac{\partial}{\partial x}\log m_g(x;\sigma)
\right|
\le
\frac{|x|}{\sigma}
+
\frac{1}{\sigma}\frac{I_1(x;\sigma,a)}{I_0(x;\sigma,a)}
\lesssim (1+|x|),
\]
and, using \((x-\gamma)^2\le 2x^2+2\gamma^2\),
\[
\left|
\frac{\partial}{\partial \sigma}\log m_g(x;\sigma)
\right|
\le
\frac{1}{2\sigma}
+
\frac{x^2}{\sigma^2}
+
\frac{1}{\sigma^2}\frac{I_2(x;\sigma,a)}{I_0(x;\sigma,a)}
\lesssim (1+x^2).
\]
This proves Assumption~A8(iii) for \(\mathcal G_1\), which completes the proof for $\mathcal{G}_{1}$.

\medskip
\noindent
\textbf{Class \(\mathcal G_2\).}
Let
\[
g(\tau)=\sum_{l=1}^L \pi_l \phi(\tau;0,\nu_l),
\qquad
\pi_l\ge 0,\quad \sum_{l=1}^L \pi_l=1,
\]
where \(\nu_1,\dots,\nu_L\) is a fixed variance grid.

Since the weight vector $(\pi_1,\dots,\pi_L)$ ranges over the simplex, $\mathcal G_2$ is a compact parametric family of dimension \(L-1\). Therefore
$\log N(\varepsilon,\mathcal G_2,\|\cdot\|_1)\lesssim \log(\varepsilon^{-1})$, and A8(i) follows.

Also,
\[
g([-M,M])
=
\sum_{l=1}^L \pi_l \int_{-M}^M \phi(\tau;0,\nu_l)\,d\tau
\ge
\min_{1\le l\le L}\int_{-M}^M \phi(\tau;0,\nu_l)\,d\tau,
\]
which proves A8(ii).

For A8(iii), write
\[
m_g(x;\sigma)
=
\int \phi(x;\tau,\sigma)g(\tau)\,d\tau
=
\sum_{l=1}^L \pi_l \phi(x;0,\sigma+\nu_l).
\]
The derivative of $\log m_{g}$ with respect to $x$,
\[
\frac{\partial}{\partial x} \log m_g(x;\sigma)
=
-x\,
\frac{\sum_{l=1}^L \pi_l (\sigma+\nu_l)^{-1}\phi(x;0,\sigma+\nu_l)}
{\sum_{l=1}^L \pi_l \phi(x;0,\sigma+\nu_l)},
\]
gives
\[
|(\partial/\partial x) \log m_g(x;\sigma)|
\le
|x|\max_{1\le l\le L}(\sigma+\nu_l)^{-1}.
\]
Likewise, we have
\[
\left|\frac{\partial}{\partial \sigma} \log m_g(x;\sigma)\right|
\le
\max_{1\le l\le L}
\left\{
\frac{1}{2(\sigma+\nu_l)}
+
\frac{x^2}{2(\sigma+\nu_l)^2}
\right\}.
\]
Therefore we obtain
\[
|\partial_x \log m_g(x;\sigma)|+|\partial_\sigma \log m_g(x;\sigma)|
\lesssim (1+|x|+|x|^2),
\]
uniformly over \(g\in\mathcal G_2\). Thus A8(iii) holds.

Combining these completes the proof.

\bibliographystyle{abbrvnat}
\bibliography{mybib,paperref,pred-inf,append}

\end{document}